\title{Social Learning with Selective Sampling}
\author{Zihan Zhao \thanks{Renmin University of China. zihanzhao.econtheory@gmail.com.I'm deeply indebted to Giacomo Lanzani, Mira Frick and Shachar Kariv. I thank the Columbia Theory Group (especially Tianhao Liu and Zihao Li), Cuimin Ba, Florian Brandl, Eric Chen, Xiaoyu Chen, Atulya Jain, Ilan Lobel, Luciano Pomatto, Demian Pouzo, Philipp Strack, Quitze Valenzuela-Stookey, Xi Weng, Wei Zhao for valuable discussions and comments.}}

\documentclass[]{article}
\usepackage{amsmath}
\usepackage{graphicx}
\usepackage{amsfonts}
\usepackage{amsthm}
\usepackage{setspace}
\usepackage{tikz}
\usepackage{amssymb}
\usetikzlibrary{patterns}
\usepackage{sgame}
\usepackage{color}
\usepackage{hyperref}
\usepackage{times}
\usepackage{enumitem}
\usepackage{csquotes}
\usepackage{multirow,array}
\begin{document}
	
	\maketitle
	\newtheorem{Theorem}{\hskip\parindent\bf{Theorem}}
	\newtheorem{Definition}{\hskip\parindent\bf{Definition}}
	\newtheorem{Lemma}{\hskip\parindent\bf{Lemma}}
	\newtheorem{Conjecture}{\hskip\parindent\bf{Conjecture}}
	\newtheorem{Proposition}{\hskip\parindent\bf{Proposition}}
	\newtheorem{corollary}{\hskip\parindent\bf{Corollary}}
		\newtheorem{Assumption}{\hskip\parindent\bf{Assumption}}
	
	\begin{abstract}
		This paper studies how robust social learning is when sampling is selective, i.e., some types of actions are more likely to be sampled by successors. We show that Bayesian agents can achieve asymptotic learning despite non-expanding observations, because the endogenous observation network itself carries information and agents have ways to undo the selection bias.
	\end{abstract}
	
	\noindent \textbf{Keywords:} Social learning, misspecification, random sampling, sample selection, improvement principle
	
	\section{Introduction}
	\label{sec:introduction}
	
	Social learning often takes place through selected rather than complete
	records of past behavior. Online platforms may display purchases more often
	than non-purchases, successful projects more often than failed ones, or
	favorable reviews more often than unfavorable ones. A new decision maker
	therefore learns not from the full population of previous actions, but from a selected
	sample whose composition depends on the actions themselves. This paper asks how robust social learning is when sampling is selective, i.e., some types of actions are more likely to be sampled by successors. 
	
	We study a sequential social learning model. Agents
	arrive one at a time, receive independent private information about an unknown
	state, and observe selectively sampled actions of their predecessors. Whether
	a predecessor is displayed is redrawn for each arriving agent, so different
	agents generally receive different samples of the same underlying action
	history. Our baseline assumes that agents understand this sampling process and
	are correctly calibrated about the selection rule.
	
	Our first result shows that with two-sided unbounded private beliefs, correctly calibrated
	Bayesian agents asymptotically learn the true state even under highly selective
	observation. This remains true when the selection rule has non-full support,
	so that one type of action may never be displayed. It also remains true when
	both actions are displayed with positive probability, and when agents receive coarser information -- observation is
	anonymous so that agents see aggregate display counts rather than the
	identities of sampled predecessors. 
	
	The key intuition is that missing observations are themselves informative. We know that before learning whether a predecessor's action is displayed, an agent already has a belief about that action based on the predecessor's social history(no-introspection property). Observing that the action is missing then updates this belief through the known selection rule: missingness tilts the posterior toward actions with lower sampling probabilities. 
	
	Therefore, even though observations become increasingly sparse as agents gradually converge to the correct action, and the observation structure fails to be expanding in the sense of Acemoglu et al. (2011), asymptotic learning may still obtain. The key difference is that, in our model, the observation network is endogenous: whether a predecessor is observed depends on the action she takes. Consequently, the realized network itself carries information about predecessors' actions and, ultimately, about the state. Through our variant of the celebrated improvement principle, this additional source of information accumulates over time and eventually sustains asymptotic learning. This endogeneity is both a blessing and a curse. It prevents the canonical improvemnet principle from applying directly, because an observed predecessor is not representative of her generation: selection is systematically correlated with the action taken. At the same time, precisely this dependence creates a new source of social information. The realized pattern of observations—and, in particular, which observations are missing—can itself be informative once agents understand the selection technology.
	
	Another source of tractability comes from the stationarity of the selection rule. Conditional on the realized action history, the sampling process can be coupled with an auxiliary filtration under which information is increasing over time. Although agents’ actual observation histories need not themselves be nested, this representation allows us to recover much of the tractability associated with nested-information social-learning models. A canonical example is Smith and Sørensen (2000), where agents observe the entire history of predecessors’ actions. With an increasing filtration, standard probabilistic tools such as the martingale convergence theorem and Lévy’s upward theorem become available, allowing one to establish convergence of beliefs and, in particular, to rule out persistent cycling. In our setting, stationarity therefore provides a way to restore these tools despite the non-nestedness generated by fresh random sampling. 
	
	We also provide the following extensions: (1) an informed designer who tries to utilize the statistical non-identifiability to achieve persuasion;(2) when agents are aware of selection but misperceive the selection rule, sufficiently severe undercalibration can prevent asymptotic learning\footnote{This can be regarded as the intermediate case between fully-unaware and fully-aware, though just on a conceptual level.} ;(3) the implications of selective sampling for learning efficiency and welfare in the sense of FII (2024).

	\subsection{Literature}
	\paragraph{Social learning} Smith and Sørensen (2000) study the complete observational network and rely on a martingale approach, which in turn exploits the nested information structure generated by complete observation. Çelen and Kariv (2004) study an incomplete observational network. They first show that an agent’s action is characterized by a cutoff in the belief induced by her observation, and then analyze the recursive dynamics of these cutoffs. Acemoglu et al. (2011) consider observational networks that are independent of the actions being taken and introduce the improvement principle; we use a variant of this principle, whose connection to their argument is discussed in detail below. Smith and Sørensen (2020) study random anonymous sampling and are therefore more closely related to our observational structure. Their sampling technology, however, is recursive, whereas our sampling function is stationary. Like us, they use a variant of the improvement principle; at the same time, in a spirit similar to Çelen and Kariv (2004), they employ Pólya-urn methods to study the recursive dynamics. As Smith and Sørensen themselves note, however, the theory of generalized Pólya urns remains insufficiently developed for many such problems, and applying this approach to our environment would lead to considerably more complicated dynamics.

	\paragraph{Misspecification} Fudenberg, Lanzani, and Strack (2024) study a single-agent learning model with selective memory, in which some experiences are more likely to be recalled than others. At first glance, our model may appear to differ mainly by replacing individual learning with social learning and allowing agents to be aware of selection. For tractability, we restrict the flexibility of the selection technology, so that conditional on the action being selected, the sampling process is exogenous, much as memory is exogenous in their framework. To clarify why the move to social learning nevertheless matters, we also study an illustrative benchmark in which agents are unaware of selection and treat what they observe as the entire history, paralleling the naïveté in their model. In a single-agent problem, selective sampling has a familiar statistical interpretation: it reweights and renormalizes the underlying data-generating process by the selection function, and, when selection is known, this distortion can in principle be corrected by inverse-probability weighting. Social learning is different because the sampled objects are themselves endogenous Bayesian actions. Consequently, empirical frequencies alone generally do not summarize the information contained in a social history: even with the same composition of observed actions, their timing can matter, as illustrated by the Overturning Principle of Smith and Sørensen (2000); and when selection is understood, even under anonymous sampling an agent's calendar time can itself be informative. Thus the stochastic, non-nested histories in our model cannot generally be reduced to a law-of-large-numbers argument, or controlled by Chernoff bounds and Borel--Cantelli alone, and our analysis instead relies on a variant of the improvement principle. This distinction also creates a substantive question absent from individual learning: selective sampling thins the observations available to each agent, which is typically innocuous for a single learner as long as infinitely many observations remain, but may be consequential in social learning because it destroys expanding observations. We show, however, that in the absence of misspecification this need not prevent asymptotic learning: the missing observations are themselves informative about the state.
	
	Our model also differs from much of the literature on misspecified social learning, including Bohren and Hauser (2021) and Frick, Iijima, and Ishii (2023), even when agents themselves are misspecified---either fully unaware of selection in our illustrative benchmark or aware of selection but miscalibrating its intensity in our extension. A key distinction is that existing approaches typically exploit an accumulating, nested history. Bohren and Hauser (2021) recursively characterize likelihood-ratio dynamics along the common action history, while Frick, Iijima, and Ishii (2023) partially restore martingale arguments through their prediction-accuracy order. Our fresh-sampling technology instead generates stochastic and non-nested observation histories, so neither approach can be applied directly. If we replaced our observation technology with a persistent one, under which an observation, once drawn, remained available thereafter, the resulting nested history would bring the model much closer to the environments studied by Bh (2021) and FII (2023). Increasing information restores a recursive public-belief representation and therefore makes their drift and local-stability methods applicable, through which we can show learning is fragile to arbitratry small undercalibration because of slow learning. This result can be adapted to our setting under additional assumptions and anonymity for tractability.

	\section{Setup}
	\label{sec:setup}
	
	\subsection{Environment}
	\label{subsec:environment}
	
	A binary state
	\[
	\omega \in \Omega := \{0,1\}
	\]
	is drawn at date \(0\) and remains fixed thereafter. The common prior is
	\[
	\mu_0 \in \Delta(\Omega),
	\qquad
	\mu_0(\theta)>0
	\quad\text{for each }\theta\in\Omega.
	\]
	We write
	\[
	r_0
	:=
	\log\frac{\mu_0(1)}{\mu_0(0)}
	\]
	for the prior log-odds of state \(1\).
	
	A countably infinite sequence of agents \(t=1,2,\ldots\) arrives
	sequentially. Agent \(t\) takes an action
	\[
	a_t\in A:=\{0,1\}.
	\]
	Agents are myopic and have state-matching preferences:
	\[
	u(a,\omega)
	:=
	\mathbf 1\{a=\omega\}.
	\]
	Thus the optimal action under complete information is \(a=\omega\).
	Whenever an agent is indifferent between the two actions, we break ties in
	favor of action \(1\). The particular tie-breaking convention is immaterial
	for the asymptotic results below.
	
	The only intertemporal link across agents is informational. An agent's
	action does not directly affect the payoff of any other agent, but it may
	enter the social information observed by subsequent agents.
	
	Unless otherwise stated, the selection technology is exogenous. Section
	\ref{sec:informed-designer} considers an extension in which the selection
	rule is instead chosen strategically by an informed designer.

	\subsection{Private signals}
	\label{subsec:private-information}
	
	Each agent \(t\) privately observes a signal
	\[
	s_t\in S,
	\]
	where \((S,\mathcal S)\) is a standard Borel space. Conditional on the
	state, private signals are independent and identically distributed across
	agents. Let \(F_\theta\) denote the distribution of \(s_t\) under
	\(\omega=\theta\).
	
	The two signal distributions are assumed to be mutually absolutely
	continuous. Let \(\nu\) be a common dominating measure and let
	\[
	f_\theta
	:=
	\frac{dF_\theta}{d\nu},
	\qquad \theta\in\{0,1\}.
	\]
	We restrict attention to their common support, so that the private
	log-likelihood ratio
	\[
	\ell(s)
	:=
	\log\frac{f_1(s)}{f_0(s)}
	\]
	is finite for every realized signal. Thus no individual signal is perfectly
	revealing.
	
	Throughout the paper, private beliefs are \emph{two-sided unbounded}:
	\[
	\operatorname*{ess\,inf}_{s\in S}\ell(s)=-\infty,
	\qquad
	\operatorname*{ess\,sup}_{s\in S}\ell(s)=+\infty.
	\]
	Equivalently, arbitrarily strong private evidence in favor of either state
	occurs with positive probability. In particular, for every finite social
	log-likelihood ratio \(R\), private information can overturn a social belief
	favoring either action.
	
	We do not impose an exogenous ordering or a monotone-likelihood-ratio
	assumption on the raw signal space. All decisions can be expressed directly
	in terms of the sufficient statistic \(\ell(s)\). The private-signal
	structure itself is correctly specified throughout the paper; the
	misspecification introduced later concerns the observation technology.

	\subsection{Observation Technology}
	\label{subsec:observation-technology}
	
	Agents observe selectively sampled actions of their predecessors. Let
	\[
	Q:A\to[0,1]
	\]
	be the objective selection rule. For every predecessor \(i<t\), define the
	display indicator
	\[
	D_{i,t}
	:=
	\mathbf 1
	\{\text{agent \(i\)'s action is displayed to agent \(t\)}\}.
	\]
	Conditional on the realized action process, the display indicators are
	independent across predecessor--successor pairs and satisfy
	\[
	\Pr(D_{i,t}=1\mid a_i)=Q(a_i).
	\tag{1}
	\]
	In particular, the display decision is drawn \emph{anew} for each arriving
	agent. Thus the social observations received at different calendar dates
	need not be nested.
	
	It will be useful to distinguish two observation formats.
	
	\paragraph{Identity-preserving observation.}
	In the baseline environment, agent \(t\) knows which predecessors are
	displayed. Define
	\[
	Y_{i,t}
	:=
	\begin{cases}
		a_i, & D_{i,t}=1,\\
		\varnothing, & D_{i,t}=0,
	\end{cases}
	\]
	and let
	\[
	H_t^{I}
	:=
	(Y_{1,t},\ldots,Y_{t-1,t})
	\]
	denote agent \(t\)'s social observation. Since the agent knows her calendar
	position \(t\), observing \(H_t^{I}\) reveals both the identities and actions
	of displayed predecessors as well as which predecessor positions are
	missing.
	
	\paragraph{Anonymous observation.}
	We also consider an anonymous version of the same sampling technology.
	Here identities are suppressed and agent \(t\) observes only the numbers of
	displayed actions of each type:
	\[
	N_{a,t}
	:=
	\sum_{i<t}
	D_{i,t}\mathbf 1\{a_i=a\},
	\qquad a\in\{0,1\}.
	\]
	Her social observation is therefore
	\[
	H_t^{A}
	:=
	(N_{0,t},N_{1,t}).
	\]
	Anonymity concerns only the form in which sampled actions are revealed; it
	does not change either the objective selection rule \(Q\) or the fresh
	sampling process in \((1)\).
	
	For later use, define the aggregate number and fraction of predecessors
	choosing action \(1\) by
	\[
	K_{t-1}
	:=
	\sum_{i<t}a_i,
	\qquad
	X_{t-1}
	:=
	\frac{K_{t-1}}{t-1}.
	\]
	In the case where selection is extreme so that non-full-support, without loss of generality,
	\[
	Q(0)=0,
	\qquad
	Q(1)=q\in(0,1],
	\tag{2}
	\]
	the anonymous social observation reduces to the single count
	\[
	N_t:=N_{1,t},
	\]
	and, conditional on the realized predecessor actions,
	\[
	N_t\mid a_1,\ldots,a_{t-1}
	\sim
	\operatorname{Bin}(K_{t-1},q).
	\tag{3}
	\]
	We refer to \((2)\) as the \emph{non-full-support} benchmark. More generally,
	the selection rule has \emph{full support} if
	\[
	\underline q
	:=
	\min_{a\in\{0,1\}}Q(a)>0.
	\tag{4}
	\]
	
	Awareness, anonymity, and calibration are distinct features of the model.
	Unless otherwise stated, agents are \emph{fully aware}: they know their
	calendar position, the fresh-sampling protocol, the prior and private-signal
	structure, and the selection rule they regard as operative. Let
	\[
	\widehat Q:A\to[0,1]
	\]
	denote this perceived selection rule. Under \emph{correct calibration},
	\[
	\widehat Q=Q.
	\]
	Section \ref{sec:miscalibration} instead allows
	\(\widehat Q\neq Q\), while keeping all other primitives correctly
	specified. Thus miscalibration changes agents' interpretation of the
	display process, but not the objective process generating the indicators
	\(D_{i,t}\).
	
	Under identity-preserving observation, agent \(t\)'s information is
	\[
	I_t^I
	=
	(s_t,t,H_t^I;\widehat Q),
	\]
	whereas under anonymous observation it is
	\[
	I_t^A
	=
	(s_t,t,H_t^A;\widehat Q).
	\]
	In both cases agents understand the observation format they face.
	
	A strategy profile
	\[
	\sigma=(\sigma_t)_{t\ge1}
	\]
	specifies an optimal action for every information realization under the
	agents' perceived model. Under correct calibration, \(\sigma\) is a
	Bayesian equilibrium of the objective model. We write
	\[
	\Pr_\theta^\sigma
	\]
	for probabilities under the objective data-generating process induced by
	the true state \(\omega=\theta\), the true selection rule \(Q\), and the
	strategy profile \(\sigma\). When agents are miscalibrated, we distinguish
	this objective law from the corresponding subjective law whenever needed.
	
	We say that \emph{asymptotic learning} obtains if
	\[
	\Pr_\theta^\sigma(a_t=\theta)
	\longrightarrow 1
	\qquad
	\text{for each }\theta\in\{0,1\}.
	\]
	
	\subsection{An illustrative benchmark: full unawareness}
	\label{sec:unaware-benchmark}
	
	Before turning to the correctly calibrated model, we consider an illustrative
	benchmark in which agents are completely unaware of selective sampling. The
	purpose of this benchmark is to isolate the role of understanding the observation
	technology.
	
	Consider the non-full-support case
	\[
	Q(0)=0,
	\qquad
	Q(1)=q\in(0,1].
	\]
	The objective sampling process is exactly the one described above. Agents,
	however, do not recognize that some predecessor actions may be missing. In
	particular, they do not use calendar time to infer the size of the predecessor
	population and treat the displayed actions as the complete action history.
	Equivalently, if agent $t$ observes $N_t$ displayed actions, she interprets them
	as the actions of exactly $N_t$ predecessors generated by the standard
	complete-observation social-learning model.
	
	This misspecification has particularly stark consequences. Since action $0$ is
	never displayed, every nonempty observed history consists entirely of action
	$1$. Thus selective sampling systematically transforms the endogenous action
	history into an apparently unanimous sequence of action-$1$ choices. Even
	two-sided unbounded private beliefs do not undo this distortion.
	
	\begin{Theorem}[Complete unawareness]
		\label{thm:unaware-nonfull}
		Suppose that private beliefs are two-sided unbounded and that
		\[
		Q(0)=0,
		\qquad
		Q(1)=q\in(0,1].
		\]
		If agents are completely unaware of selection and treat the displayed actions
		as the complete predecessor history, then, in the true state $\omega=0$,
		\[
		\Pr_{0}(a_t=1)\longrightarrow 1.
		\]
		Hence asymptotic learning fails. Indeed, agents asymptotically choose the
		incorrect action with probability one in state $0$.
	\end{Theorem}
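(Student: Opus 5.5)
The argument rests on three observations. First, an unaware agent's behavior depends only on the count $N_t$. Second, the subjective public log-likelihood ratio after seeing $n$ unanimous action-$1$ choices diverges to $+\infty$ as $n\to\infty$. Third, $N_t\to\infty$ in probability under $\Pr_0$, driven by a positive fraction of action-$1$ choices that feeds on itself.

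\textbf{Step 1 (subjective belief after $n$ ones).} An unaware agent treats her observation as the complete history of $n=N_t$ predecessors in the standard Smith--Sørensen model, with all of them choosing $1$. Let $\rho_n$ be the subjective public log-likelihood ratio after $n$ consecutive $1$'s, computed recursively in the complete-observation model. Then $\rho_{n+1}=\rho_n+\log\frac{F_1(\ell>-\rho_n)}{F_0(\ell>-\rho_n)}$. Each increment is strictly positive, because with unbounded beliefs the event $\{\ell>-\rho\}$ is nontrivial and $\ell$ is MLR-ordered in the state. Moreover, $\rho_n\to+\infty$: if $\rho_n$ converged to a finite limit $\rho^*$, the increment at $\rho^*$ would be positive, a contradiction. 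So agent $t$ chooses $1$ iff $\ell(s_t)\ge -\rho_{N_t}$, and under state $0$ we get $\Pr_0(a_t=1\mid N_t=n)=F_0(\ell\ge-\rho_n)=:p_n$. This quantity is nondecreasing in $n$, is bounded below by $p_0>0$ (unbounded beliefs), and tends to $1$.

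\textbf{Step 2 (linear growth of action-$1$ choices).} Every agent, whatever she observes, chooses $1$ with conditional probability at least $p_0>0$, independently of the past through her fresh signal. Hence $K_{t-1}$ stochastically dominates a $\mathrm{Bin}(t-1,p_0)$ variable. By (3), conditional on the history, $N_t\sim \mathrm{Bin}(K_{t-1},q)$, so $N_t$ dominates $\mathrm{Bin}(t-1,p_0q)$. Chernoff bounds then give $\Pr_0(N_t< p_0qt/2)\le e^{-ct}$.

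\textbf{Step 3 (conclusion).} For any $M$, $\Pr_0(a_t=1)\ge p_M\,\Pr_0(N_t\ge M)$, which holds because $p_n$ is monotone and $s_t$ is independent of $N_t$ conditional on the state. Letting $t\to\infty$ and then $M\to\infty$ gives $\Pr_0(a_t=1)\to1$, so asymptotic learning fails in state $0$.

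The main obstacle is Step 1, specifically justifying that the unaware agent's decision is exactly the Smith--Sørensen cascade-free threshold rule applied to an all-ones history and that $\rho_n$ diverges. I would handle this by the fixed-point contradiction above. The argument needs continuity of $\rho\mapsto F_\theta(\ell>-\rho)$, or at least positivity of increments uniformly on compact sets, and the latter follows from unbounded support. Step 2 needs care only in conditioning: $a_t$ depends on $N_t$ and $s_t$, and $s_t$ is independent of the past given the state, so the uniform lower bound $p_0$ makes the domination coupling immediate.
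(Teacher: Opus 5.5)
Your proposal is correct and follows the paper's proof essentially step for step. You show the naive log-likelihood ratio diverges along all-$1$ histories by the same contradiction argument, and your uniform-positivity-on-compacts remark is how the paper handles a possibly discontinuous increment. You then dominate $N_t$ by a $\mathrm{Bin}(t-1,p_0q)$ count and apply exponential binomial tail bounds, as the paper does. The differences are cosmetic: the paper gets the domination pointwise from the event $\{\ell(s_i)\ge -R_0\}$ and concludes via Borel--Cantelli ($N_t\to\infty$ a.s.) plus dominated convergence. Your direct bound $\Pr_0(a_t=1)\ge p_M\Pr_0(N_t\ge M)$ needs only convergence in probability.
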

	
	The mechanism is simple. First, in the agents' subjective model, an arbitrarily
	long sequence of action-$1$ observations drives the social likelihood ratio
	toward $+\infty$. Second, even in the true state $0$, unbounded private beliefs
	imply that there is always a strictly positive probability that a predecessor
	receives a sufficiently favorable private signal and chooses action $1$.
	Because each such action is independently displayed with probability $q>0$,
	the number of displayed action-$1$ predecessors observed by a late agent
	diverges. Consequently, agents encounter longer and longer apparent unanimous
	histories in favor of state $1$. Their subjective social belief therefore
	diverges toward state $1$, and the probability that a fresh private signal is
	strong enough to overturn this belief converges to zero.
	
	The proof is given in Appendix~\ref{app:proof-unaware-nonfull}. The result provides
	a useful contrast with the correctly calibrated environment studied next.
	The objective observation technology is unchanged; what changes is whether
	agents understand that the absence of an observation is itself generated by
	the selection rule. Under complete unawareness, missing observations are
	discarded and selection can generate asymptotically incorrect learning. Under
	correct calibration, by contrast, missingness is part of the social experiment
	and can itself convey information about the state.
	
	The benchmark also clarifies the connection with selective-memory learning in
	Fudenberg, Lanzani, and Strack (2024). In their single-agent environment,
	selection distorts the empirical distribution of experiences that the learner
	recalls. Our completely unaware benchmark generates an analogous distortion:
	agents take the selectively displayed sample at face value and fail to correct
	for the selection process. The important difference is that the objects being
	selected here are not exogenous observations, but the endogenous actions of
	earlier Bayesian agents. Selection therefore affects not only the composition
	of the data observed by a learner, but also the future data-generating process
	through subsequent agents' actions.
	
	This feedback is what makes the social-learning problem qualitatively
	different. In the present benchmark, displaying only action $1$ does not merely
	reweight a fixed distribution of observations. It generates increasingly
	persuasive apparent histories in favor of action $1$, which induce later agents
	to choose action $1$ more often and thereby create still more observations of
	the same type. Thus selective sampling and endogenous social behavior reinforce
	one another. The result provides a useful benchmark for the correctly specified
	model below: once agents understand the selection technology, this
	self-reinforcing distortion disappears, because both displayed and missing
	observations are interpreted through the correct sampling rule.

	\section{Learning outcomes}
	We now turn from the illustrative misspecified benchmark to our main environment, in which agents are fully aware of the selective-sampling technology and correctly calibrated about the selection rule. The main result of this section is that, with two-sided unbounded private beliefs, selective sampling does not prevent asymptotic learning once agents correctly understand how observations are generated.
	
	We first establish the result under identity-preserving sampling and then turn to anonymous sampling. Within each observation format, we begin with the extreme selection rule Q(0)=0, before considering selection rules with full support. Intuitively, if asymptotic learning survives such an extreme form of selection, one might expect it to survive when selection is less severe. We nevertheless treat the two cases separately because they reveal different proof mechanisms. With full support, selection bias can be undone by an additional reweighting or thinning step, in a manner reminiscent of inverse-probability-weighting, which restores a standard improvement argument. When the selection rule lacks full support, such a correction is impossible, and asymptotic learning requires a more indirect argument. Separating the two cases therefore makes it more clear.

	\subsection{Identity-preserving sampling without full support}
	
	We start with the extreme case $Q(0)=0$ and $Q(1)=q>0$; the argument
	extends readily to $Q(0)>0$.
	
	\begin{Theorem}[Correct calibration without full support]
		\label{thm:aware-nonfull}
		Suppose agents are fully aware of the selection rule and correctly calibrated, with
		\[
		Q(0)=0,\qquad Q(1)=q\in(0,1].
		\]
		Suppose private beliefs are unbounded in both directions. Then, for every Bayesian equilibrium $\sigma$,
		\[
		\Pr_{\omega}^{\sigma}(a_t=\omega)\longrightarrow 1,
		\qquad \omega\in\{0,1\}.
		\]
		Hence asymptotic learning obtains.
	\end{Theorem}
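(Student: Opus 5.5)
I will run an improvement-principle argument adapted to endogenous, action-dependent observation. Throughout, $\sigma$ is a fixed Bayesian equilibrium and I write $\pi_t := \Pr^\sigma(a_t=\omega)$ for the ex ante (state-averaged under $\mu_0$) probability that agent $t$ is correct. It suffices to show $\pi_t\to 1$, since $\mu_0(\theta)>0$ for both states.

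\emph{Step 1: a copyable heuristic.} Fix a predecessor $i<t$, for instance $i$ drawn uniformly from a window $\{\lceil t/2\rceil,\ldots,t-1\}$ using only the agent's own coin. Agent $t$ can adopt the following rule: if $Y_{i,t}=1$, play $1$ unless the private signal is strongly against it; if $Y_{i,t}=\varnothing$, treat the coarse event $\{D_{i,t}=0\}$ as a signal and combine it with $s_t$. The key point is that $\{D_{i,t}=0\}$ is informative. Conditional on $a_i$, it has probability $1-q\,\mathbf 1\{a_i=1\}$, so the likelihood ratio of the one-bit observation $Z_{i,t}:=\mathbf 1\{Y_{i,t}=1\}$ is
\[
\frac{\Pr_1(Z_{i,t}=1)}{\Pr_0(Z_{i,t}=1)}=\frac{\Pr_1(a_i=1)}{\Pr_0(a_i=1)}.
\]
The map $(\Pr_0(a_i=1),\Pr_1(a_i=1))\mapsto$ the law of $Z_{i,t}$ is a garbling by the channel that erases $1$'s with probability $1-q$. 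Because $D_{i,t}$ is independent of everything else given $a_i$, agent $t$'s posterior conditional on $H_t^I$ is at least as informative as her posterior conditional on $Z_{i,t}$ alone. Optimality then gives
\[
\pi_t\ \ge\ \max_{\text{rules } g(Z_{i,t},s_t)}\Pr(g=\omega).
\]

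\emph{Step 2: strict improvement.} I need a function $\mathcal Z$, depending only on $q$ and the signal distribution, such that
\[
\pi_t\ge \mathcal Z(\pi_i)\ge \pi_i,
\]
with strict inequality whenever $\pi_i<1$, uniformly on compacts. Write $\alpha=\Pr_1(a_i=1)$ and $\beta=\Pr_0(a_i=0)$. Then $Z$ is a binary experiment with $\Pr_1(Z=1)=q\alpha$ and $\Pr_0(Z=1)=q(1-\beta)$. Using unbounded signals in both directions, the Acemoglu et al.\ (2011) argument gives strict improvement over the experiment $Z$ itself whenever $Z$ is not perfectly informative.

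The difficulty is that the experiment $Z$ is \emph{worse} than $a_i$: when $q<1$, its accuracy can be below $\pi_i$. So I do not compare with $\pi_i$ directly. Instead I will show that the value $V(\alpha,\beta)$ of optimally combining $Z$ with an unboundedly informative private signal tends to $1$ as $(\alpha,\beta)\to(1,1)$. I will also show that $V(\alpha,\beta)-\pi_i\ge \delta(\pi_i)>0$ as long as $\pi_i\le 1-\varepsilon$.

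This step is the main obstacle. When $\alpha,\beta$ are both near $1$, missing means ``likely $0$'' only with strength $\log\frac{1-q\alpha}{1-q(1-\beta)}$, which stays bounded. So the one-predecessor comparison may not beat $\pi_i$ near the top. My first attempt will be to replace a single predecessor with the block of $m$ predecessors in the window, using the statistic $\sum_{j}Z_{j,t}$. Whenever the window's $\pi_j$ are all near some limit $\pi^*<1$, the $Z_{j,t}$ are conditionally correlated only through $\omega$ and the history. I then use the no-introspection property: condition on the social history of the last sampled agent, as in the paper's nested auxiliary filtration. If this route fails, I will fall back on a contradiction argument in the style of Smith and Sørensen.

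\emph{Step 3: conclude.} Suppose $\liminf_t\pi_t=\pi^*<1$. Take $t$ with $\pi_t$ near $\pi^*$. By Step 2, for $t'$ a window later, $\pi_{t'}\ge \pi^*+\delta/2$ along a subsequence. Iterating this and using $\limsup\pi_t\le 1$ contradicts the definition of $\pi^*$, exactly as in the improvement-principle proof. So $\pi_t\to1$, and therefore $\Pr^\sigma_\omega(a_t=\omega)\to1$ for each $\omega$.
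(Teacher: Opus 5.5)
\textbf{The gap is Step~2.} In its single-predecessor form it is not merely unproven; it is false when $q<1$. Take $\alpha=\beta=1$, a perfectly accurate predecessor. In state $0$ you then see $Z_{i,t}=0$ almost surely, and its likelihood ratio is $(1-q\alpha)/(1-q(1-\beta))=1-q$. So your social log-odds is the finite number $r_0+\log(1-q)$, and you still choose action $1$ with probability $\Pr_0(\ell(s)\ge -r_0-\log(1-q))>0$. Hence $V(1,1)<1$, which contradicts your planned claim that $V(\alpha,\beta)\to1$.

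By continuity, any bound $\pi_t\ge\mathcal Z(\pi_i)$ built from one selected predecessor has $\mathcal Z(\pi)<\pi$ near $\pi=1$. Its fixed point therefore lies strictly below $1$, and the iteration in Step~3 stalls there. This is exactly the obstruction the paper flags: a displayed predecessor is selected on her action, so no recursion $W_t\ge Z(W_i)$ is available.

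Everything then rests on the block replacement, which you leave unspecified. As long as you treat $\sum_j Z_{j,t}$ through the joint law of the window's actions, their correlation through the history is a real obstruction. The ``condition on the last agent's history'' step and the Smith--S{\o}rensen fallback are placeholders, not arguments.

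\textbf{The missing idea} is a source of strict improvement measured against the agent's \emph{own} social information, not against a predecessor's accuracy. Write $W_t=\pi_t$ and let $V_t$ be the value of social information alone. The paper compares, at the same date, social information with social information plus one fresh signal: $W_t-V_t=\mathbb E^\sigma[\Delta(B_t)]$. Here $B_t$ is the posterior under the nested replica $\mathcal G_t=\sigma(Z_1,\ldots,Z_{t-1})$ with $Z_i=\mathbf 1\{a_i=1\}\mathbf 1\{U_i\le q\}$, which has the same joint law with $\omega$ as agent $t$'s fresh sample. The argument then runs as follows.
\begin{enumerate}
\item A Pinsker-type bound gives $I^\sigma(\omega;Z_t\mid\mathcal G_t)\ge 2q^2(W_t-V_t)^2$.
\item The chain rule caps the sum of these terms by $H(\omega)$, so $W_t-V_t\to0$.
\item Martingale convergence of $B_t$, together with strict positivity of $\Delta$ on $(0,1)$, forces $B_\infty\in\{0,1\}$, hence $V_t\to1$ and $W_t\to1$.
\end{enumerate}

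Alternatively, your block idea can be completed, but not the way you sketch it. Conditional on the action history, the window's display count is $\operatorname{Bin}(K,q)$ whatever the correlation among the $a_j$. Randomizing with the estimated share $\widehat X$ therefore approximately imitates a uniformly drawn window member. This gives $V_t\ge(\text{window-average accuracy})-O(t^{-1/2})$, where $V_t$ is now the value of the full $H_t^I$. Combine this with $W_t-V_t=\mathbb E^\sigma[\Delta(B_t)]$, with $B_t$ now the posterior given $H_t^I$, which is at least some $\eta(c)>0$ whenever $V_t\le c<1$. Together these contradict $\liminf_t\pi_t<1$. That is the paper's anonymous-sampling argument, and it applies verbatim here because the count is a function of $H_t^I$.
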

	
	The proof is given in Appendix~\ref{subsec:aware-nonfullsupport}.

	The main difficulty is that fresh random sampling makes agents' realized information sets
	non-nested across dates, so the usual posterior-martingale argument cannot be applied directly.
	To restore nestedness, we introduce an auxiliary filtration. Let $(U_t)_{t\geq 1}$ be i.i.d.\
	$\mathrm{Unif}[0,1]$ random variables, independent of all primitives of the original model, and
	define
	\[
	Z_t
	:=
	\mathbf{1}\{a_t=1\}\mathbf{1}\{U_t\leq q\}.
	\]
	
	Let
	\[
	\mathcal G_t
	:=
	\sigma(Z_1,\ldots,Z_{t-1}).
	\]
	
	Then $(\mathcal G_t)$ is increasing. Moreover, conditional on the realized action history,
	$(Z_1,\ldots,Z_{t-1})$ has exactly the same distribution as the displayed sample observed by
	agent $t$ under the true sampling technology. Thus the auxiliary construction does not alter
	either the equilibrium or the data-generating process; it only couples the sequence of marginal
	experiments faced by different agents into a single nested filtration. Therefore, the applicability of the MCT and Levy's upward theorem can be restored to prove that the learning process does converge. And it converges to the right one by an argument of the improvement principle, which can be used here because agents are Bayesian and correctly specified.
	
	Just as an intuition, $G$ now replaces the permanent-board herustic to provide a nested asymptotic approxiamation. By a law-of-large-number intuition (though we don't actually use it), the learning outcome under $G$ approximates that under the true sampling process.
	
	\subsection{Correct calibration with full support}
	\label{sec:aware-full-support}
	
	We next turn to the case in which the selection rule has full support:
	\[
	\underline q
	:=
	\min_{a\in\{0,1\}} Q(a)
	>0.
	\]
	We maintain the binary-state, binary-action benchmark and the assumptions on
	private information from the preceding subsection.
	
	The full-support case also clarifies why the non-full-support environment
	requires a different proof technique. As discussed above, when observation
	depends on the predecessor's action, conditioning on a predecessor being
	displayed generally changes the distribution of the action being imitated.
	The usual predecessor-to-successor improvement argument therefore cannot be
	applied directly: imitating an observed predecessor reproduces her accuracy
	conditional on selection, rather than her unconditional accuracy.
	
	In the non-full-support case, this problem cannot be eliminated by further
	randomization. Indeed, if
	\[
	\min_a Q(a)=0,
	\]
	any action-independent thinning rate must itself be zero. This is why the
	preceding subsection instead constructs a nested replica of the fresh social
	experiment and uses the within-period comparison between social information
	and social information plus the current private signal.
	
	Under full support, however, there is a simpler route. Since every action is
	displayed with probability bounded away from zero, we can further thin the
	displayed history so that every predecessor is retained with the same
	probability, independently of her action. This neutral thinning removes
	precisely the selection bias that obstructs predecessor imitation, while
	keeping the original equilibrium and its data-generating process fixed.
	
	Intuitively, this is just the inverse-probability-weighting method which we use to tackle sample selection problems in single-agent learning or the corresponding statistical problems.
	
	To see the construction, let \(D_{i,t}\) be the display indicator introduced
	above and draw, independently,
	\[
	U_{i,t}\sim\mathrm{Unif}[0,1].
	\]
	For every displayed predecessor, retain the observation according to
	\[
	\widetilde D_{i,t}
	:=
	D_{i,t}
	\mathbf 1
	\left\{
	U_{i,t}
	\leq
	\frac{\underline q}{Q(a_i)}
	\right\}.
	\]
	Then, conditional on any predecessor action,
	\[
	\Pr^\sigma
	\left(
	\widetilde D_{i,t}=1
	\mid a_i
	\right)
	=
	Q(a_i)\frac{\underline q}{Q(a_i)}
	=
	\underline q.
	\]
	Hence the retained neighborhood is an action-neutral Bernoulli sample of the
	predecessor set. Since \(\underline q>0\), it also has expanding observations.
	
	Neutral thinning therefore restores the predecessor-comparison component of
	the standard strong-improvement argument. Conditional on the retained
	neighborhood, agent \(t\) can discard all other social information, use the
	action of the most accurate retained predecessor, and combine it with her own
	private signal. Under unbounded private beliefs, the strong-improvement
	function \(Z\) satisfies
	\[
	Z(\alpha)>\alpha
	\qquad
	\text{for every }\alpha<1.
	\]
	Expansion of the retained neighborhood then propagates this strict
	improvement through the sequence.
	
	Thus the two support cases use different parts of the same improvement
	logic. Without full support, neutralization of selection is impossible, so
	the proof relies on the within-period value of the current private signal.
	With full support, neutral thinning eliminates action-dependent selection and
	thereby restores the predecessor-to-successor strong-improvement argument.
	
	\begin{Theorem}[Correct calibration with full support]
		\label{thm:aware-full-support}
		Suppose that the selection rule has full support,
		\[
		\min_{a\in\{0,1\}}Q(a)>0,
		\]
		and private beliefs are unbounded in both directions. If agents are fully
		aware of and correctly calibrated about the selection rule, then every
		Bayesian equilibrium exhibits asymptotic learning:
		\[
		\Pr^\sigma_\omega(a_t=\omega)
		\longrightarrow 1,
		\qquad
		\omega\in\{0,1\}.
		\]
	\end{Theorem}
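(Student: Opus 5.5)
We prove Theorem~\ref{thm:aware-full-support} with the strong-improvement argument of Acemoglu et al.\ (2011), run on the neutrally thinned neighborhood described above. Fix a Bayesian equilibrium $\sigma$. Write $\alpha_t := \tfrac12\Pr^\sigma_0(a_t=0)+\tfrac12\Pr^\sigma_1(a_t=1)$ for the ex ante accuracy of agent $t$. It suffices to show $\alpha_t\to1$, since $\mu_0$ has full support and convergence of $\alpha_t$ to $1$ then forces convergence in each state.

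\textbf{Step 1 (information domination).} Agent $t$ can simulate the auxiliary randomization $U_{i,t}$ herself. Hence any decision rule measurable with respect to $(s_t,\widetilde D_{\cdot,t},(a_i)_{i:\widetilde D_{i,t}=1})$ is feasible for her. Since $\sigma_t$ is optimal among all rules measurable with respect to $I_t^I$, and extra independent randomization cannot raise a Bayesian's value, $\alpha_t$ is at least the accuracy of any such rule. In particular, fix any deterministic $i<t$ and consider the rule that, on the event $\{\widetilde D_{i,t}=1\}$, combines $a_i$ with $s_t$ optimally.

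\textbf{Step 2 (neutrality gives unconditional accuracy).} By the construction,
\[
\Pr^\sigma(\widetilde D_{i,t}=1\mid a_i,\omega)=\underline q .
\]
This probability depends on neither $a_i$ nor $\omega$, and the thinning is independent of $s_t$ and of the state given $a_i$. Therefore, conditional on $\widetilde D_{i,t}=1$, the pair $(a_i,\omega)$ has its unconditional joint law. So $a_i$ is a binary signal of accuracy $\alpha_i$, independent of $s_t$ given $\omega$. Combining it with $s_t$ optimally yields accuracy at least $\mathcal Z(\alpha_i)$, the strong-improvement function of Acemoglu et al.\ (denoted $Z$ in the text above). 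Because beliefs are unbounded, $\mathcal Z$ is continuous, nondecreasing, and satisfies $\mathcal Z(\alpha)>\alpha$ for $\alpha<1$. This is where full support is used; with $Q(0)=0$ the conditioning would tilt the law of $a_i$.

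\textbf{Step 3 (expansion and propagation).} For any fixed $M$ and $\varepsilon>0$, the probability that no predecessor in a given window is retained is $(1-\underline q)^{\#\text{window}}$. So for $t$ large, with probability at least $1-\varepsilon$, agent $t$ retains some predecessor $i\geq M$. Agent $t$ then uses the retained predecessor with the highest index, or the highest value of $\alpha_i$; these are deterministic functions of the indices, so the choice is measurable. This gives
\[
\alpha_t \;\ge\; (1-\varepsilon)\,\min_{M\le i<t}\mathcal Z(\alpha_i) \;+\; \varepsilon\cdot\tfrac12 .
\]
Strictly, the agent should use her own signal alone when nothing is retained, which gives accuracy at least $\tfrac12$ on that event.

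\textbf{Step 4 (conclusion).} Conclude as in Acemoglu et al. Suppose $\liminf\alpha_t=\alpha^*<1$. Pick $\delta>0$ with $\mathcal Z(\alpha^*-\delta)>\alpha^*+2\delta$, which continuity and strict improvement allow. Then take $M$ such that $\alpha_i\ge\alpha^*-\delta$ for all $i\ge M$, and $\varepsilon$ small. Step 3 then gives $\alpha_t\ge\alpha^*+\delta$ for all large $t$, contradicting the definition of $\alpha^*$.

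\textbf{Main obstacle.} The delicate point is Step 2 combined with the selection step in Step 3. The identity of the chosen predecessor must depend only on the retained index set, which is action-neutral, and never on observed actions. Otherwise conditioning on the choice reintroduces selection. The retained set $\{i:\widetilde D_{i,t}=1\}$ is itself independent of $(a_j)_j$ and $\omega$ by Step 2 and independence across pairs. So conditioning on it, and on a choice that is a function of it, leaves the joint law of $(a_i,\omega,s_t)$ unchanged. I would verify this factorization explicitly first, since it is exactly what fails under non-full support.
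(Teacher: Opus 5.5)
Your route is the paper's. You use the same neutral thinning, the independence of the whole retained index set from $(\omega,(a_i)_{i<t},s_t)$ (Lemma~\ref{lem:neutral-thinning}), imitation of a retained predecessor chosen as a function of that index set only and combined with $s_t$ (Lemma~\ref{lem:virtual-strong-improvement}), and expansion of the retained set. The paper propagates the bound through the explicit sequence \eqref{A.F.5} rather than your $\liminf$ contradiction, which is an immaterial difference.

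One step, however, fails as written unless $\mu_0=\tfrac12$. You measure performance by the equal-weighted $\alpha_t=\tfrac12\Pr^\sigma_0(a_t=0)+\tfrac12\Pr^\sigma_1(a_t=1)$. But agent $t$ maximizes $\Pr^\sigma(a_t=\omega)=\mu_0\Pr^\sigma_1(a_t=1)+(1-\mu_0)\Pr^\sigma_0(a_t=0)$. Bayesian optimality therefore does not deliver the domination you assert in Step~1. Already for $t=1$, the rule $\mathbf 1\{\ell(s_1)\ge 0\}$ has weakly, and typically strictly, higher equal-weighted accuracy than the equilibrium rule $\mathbf 1\{r_0+\ell(s_1)\ge 0\}$. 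Consequently the inequality in Step~3 is unjustified.

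The repair is to work with $W_t=\Pr^\sigma(a_t=\omega)$ throughout, as the paper does. Your final reduction to each state still works, because both weights are positive. Step~2 then needs an improvement function for binary signals whose accuracy is computed under prior $\mu_0$. Acemoglu et al.'s $\mathcal Z$ is stated for the symmetric prior, and the paper's $Z$ in \eqref{A.F.3} also has to be read in this $\mu_0$ sense.

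Such a function is easy to supply. Let $B$ be the $\mu_0$-posterior given $a_i$ alone. The optimal accuracy from $(a_i,s_t)$ is $\mathbb E[\Phi(B)]=V+\mathbb E[\Delta(B)]$, where $V:=\mathbb E[v(B)]\ge W_i$. The argument behind (A.25) in the proof of Theorem~\ref{thm:anonymous-learning} gives $\mathbb E[\Delta(B)]\ge\eta(V):=(1-V)\min_{p\in[\delta_V,1-\delta_V]}\Delta(p)$, with $\delta_V=(1-V)/2$. Hence $Z(\alpha):=\inf_{V\in[\alpha,1]}\{V+\eta(V)\}$ is continuous, nondecreasing, and satisfies $Z(\alpha)>\alpha$ for $\alpha<1$. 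That is all your Steps~2--4 require.
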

	
	The proof is given in Appendix~\ref{app:aware-full-support}. Importantly,
	the neutral thinning is only an auxiliary randomization used by the analyst.
	The comparison is carried out entirely within the original full-awareness
	equilibrium; no alternative strategy profile, observation structure, or
	data-generating process is introduced.

	\subsection{Anonymous sampling}
	
	We finally consider an anonymous version of the selective-sampling environment.
	Agents know their position in the sequence, but do not observe the identities of
	displayed predecessors. Instead, they observe only the total number of displayed
	actions of each type. As before, we focus on
	\[
	Q(0)=0,
	\qquad
	Q(1)=q\in(0,1].
	\]
	Thus, since action $0$ is never displayed, the social observation of agent $t$
	reduces to the number $N_t$ of displayed predecessors who chose action $1$.
	
	The anonymous case differs sharply from the identity-preserving environment
	considered above. In particular, the vector of individual display outcomes
	cannot be used as an auxiliary observation, since doing so would restore the
	identities that are deliberately suppressed by the observation technology.
	Accordingly, the nested-filtration argument from the previous subsection is no
	longer available. This is the familiar difficulty in rational social learning
	with anonymous random sampling: successive social information sets need not
	form a filtration.
	
	Nevertheless, anonymity does not destroy asymptotic learning in the present
	environment. The reason is that the anonymous display count provides an
	increasingly accurate estimate of the aggregate action frequency among
	predecessors. Let
	\[
	K_{t-1}:=\sum_{i<t}a_i,
	\qquad
	X_{t-1}:=\frac{K_{t-1}}{t-1}.
	\]
	Conditional on the realized action history,
	\[
	N_t\mid a_1,\ldots,a_{t-1}
	\sim
	\operatorname{Bin}(K_{t-1},q).
	\]
	Hence
	\[
	\frac{N_t}{q(t-1)}
	\]
	is a noisy estimate of $X_{t-1}$ whose mean-square error is of order $t^{-1}$.
	
	This observation provides an anonymous analogue of the imitation argument.
	If $X_{t-1}$ were known exactly, choosing action $1$ with probability
	$X_{t-1}$ would be equivalent, in terms of ex ante accuracy, to drawing a
	uniformly random predecessor and imitating her action. Although anonymity
	prevents exact imitation, the observed count $N_t$ allows the agent to
	approximate this rule with an error of order $t^{-1/2}$. Since this approximation
	error enters the recursion for average welfare divided by $t$, the cumulative
	error is summable.
	
	The private signal then generates the same within-period Bayesian improvement
	as in the previous subsection. Combining approximate imitation with this
	strict private-signal improvement yields asymptotic learning.
	
	\begin{Theorem}[Anonymous selective sampling]
		\label{thm:anonymous-learning}
		Suppose agents are fully aware of the selection rule and correctly calibrated,
		with
		\[
		Q(0)=0,
		\qquad
		Q(1)=q\in(0,1].
		\]
		Suppose that agent $t$ observes only the anonymous display count $N_t$, together
		with her private signal, and that private beliefs are unbounded in both
		directions. Then, for every Bayesian equilibrium $\sigma$,
		\[
		\Pr_\omega^\sigma(a_t=\omega)\longrightarrow 1,
		\qquad
		\omega\in\{0,1\}.
		\]
		Hence asymptotic learning obtains even when displayed predecessors are
		anonymous.
	\end{Theorem}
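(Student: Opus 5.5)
The plan is to run an average-welfare improvement argument that combines approximate imitation of a uniformly random predecessor with the within-period value of the private signal. Write $\alpha_t:=\frac12\sum_\theta\Pr^\sigma_\theta(a_t=\theta)$ for the ex ante accuracy of agent $t$ under a uniform weighting of states; with the prior $\mu_0$ one uses $\mu_0$-weights instead. Let $\bar\alpha_{t-1}:=\frac1{t-1}\sum_{i<t}\alpha_i$. It suffices to show $\alpha_t\to1$.

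\textbf{Step 1: approximate imitation.} Consider the feasible (non-equilibrium) rule for agent $t$ that ignores $s_t$ and plays $1$ with probability $\hat X_t:=\min\{1,N_t/(q(t-1))\}$, using an independent randomization. Conditional on the history, $N_t\sim\operatorname{Bin}(K_{t-1},q)$, so $\mathbb E|\hat X_t-X_{t-1}|\le (q(t-1))^{-1/2}$. Playing $1$ with probability $X_{t-1}$ has accuracy exactly $\bar\alpha_{t-1}$, because $\Pr_\theta(\text{uniform predecessor plays }\theta)=\frac1{t-1}\sum_{i<t}\Pr_\theta(a_i=\theta)$. Hence this rule attains accuracy at least $\bar\alpha_{t-1}-c\,t^{-1/2}$. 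Since the Bayesian social posterior from $(t,N_t)$ is Blackwell-superior to any garbling of it, the agent's social-only accuracy $\beta_t$ satisfies $\beta_t\ge\bar\alpha_{t-1}-c t^{-1/2}$.

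\textbf{Step 2: strict private-signal improvement.} Next I would show there is a continuous function $Z$ with $Z(\beta)>\beta$ for all $\beta<1$ such that $\alpha_t\ge Z(\beta_t)$. The argument works through the social log-likelihood ratio $R_t$ induced by $(t,N_t)$. An agent who combines $R_t$ with $\ell(s_t)$ does at least as well as one who follows $R_t$ alone. She does strictly better by a uniform amount whenever $R_t$ stays in a compact set, because two-sided unbounded beliefs give positive probability to signals that overturn $R_t$ in the correct state. Conversely, if the social belief is uninformative with non-negligible probability, then $\beta_t$ is bounded away from $1$. Concretely, I would decompose by the event $\{|R_t|\le M\}$: on it the gain is at least $\delta(M)>0$ times its probability, and off it the social accuracy is already at least $1-e^{-M}$ there. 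This yields $Z$, following the standard construction of Acemoglu et al.\ (2011) and Smith and Sørensen (2020), which depends only on the distribution of posteriors. It is the same within-period comparison used for Theorem~\ref{thm:aware-nonfull}.

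\textbf{Step 3: the recursion.} Combining the two steps gives $\alpha_t\ge Z(\bar\alpha_{t-1}-c t^{-1/2})$. Let $\alpha^*:=\liminf\bar\alpha_t$, and suppose $\alpha^*<1$. Along times where $\bar\alpha_{t-1}$ is near $\alpha^*$, each new $\alpha_t$ exceeds $\alpha^*+\eta$ for some $\eta>0$, up to an $o(1)$ error. Averaging pushes $\bar\alpha$ upward, since $\bar\alpha_t-\bar\alpha_{t-1}=(\alpha_t-\bar\alpha_{t-1})/t$. A standard Cesàro argument then shows $\bar\alpha_t\to1$: the increments are $\Theta(1/t)$, which is non-summable, while the error terms $c\,t^{-3/2}$ are summable. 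Consequently $\alpha_t\ge Z(\bar\alpha_{t-1}-o(1))\to1$ by continuity of $Z$ with $Z(1)=1$.

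The main obstacle is Step 2, specifically making $Z$ uniform in $t$. The posterior from $N_t$ has a law that changes with $t$, and agents' equilibrium behaviour is arbitrary. I would handle this by expressing everything through the distribution of the social belief alone: $\beta_t=\mathbb E\max(\pi_t,1-\pi_t)$, where $\pi_t$ is the social posterior. The improvement then becomes a function of $\beta_t$ via a worst-case bound over belief distributions with the given mean accuracy, and this bound depends only on the signal distributions $F_\theta$.
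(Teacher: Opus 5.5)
Your proposal is correct and follows essentially the same route as the paper. The estimator $N_t/(q(t-1))$ gives approximate imitation of a uniformly random predecessor with $O(t^{-1/2})$ error, the fresh private signal adds a within-period gain bounded away from zero whenever social-only accuracy is bounded away from one, and the average-welfare recursion absorbs the summable error $\varepsilon_t/t$. The difference is only packaging: you fold the gain into an improvement map $Z(\beta_t)$, which must be taken nondecreasing so that it composes with $\beta_t\ge\bar\alpha_{t-1}-ct^{-1/2}$; an infimum over social-posterior laws with $\mathbb E\max\{\pi,1-\pi\}\ge\beta$ gives this. The paper instead keeps the gain as $\mathbb E^\sigma[\Delta(B_t)]$, first shows $\overline W_t$ converges, and then derives a contradiction along a subsequence with $W_{t_k}\to L<1$.
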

	
	The proof is given in the Appendix. Its key inequality is the following
	approximate-improvement relation. Let
	\[
	W_t:=\Pr^\sigma(a_t=\omega)
	\]
	denote agent $t$'s equilibrium accuracy and let
	\[
	\overline W_{t-1}
	:=
	\frac{1}{t-1}\sum_{i<t}W_i
	\]
	be the average accuracy of her predecessors. If $B_t$ denotes the posterior
	generated by the anonymous social observation $N_t$, then
	\[
	W_t
	\ge
	\overline W_{t-1}
	+
	\mathbb E^\sigma[\Delta(B_t)]
	-
	\varepsilon_t,
	\qquad
	\varepsilon_t
	=
	\sqrt{\frac{1-q}{q(t-1)}}.
	\tag{\ref{eq:anonymous-improvement}}
	\]
	Here
	\[
	\Delta(p)
	:=
	\Phi(p)-\max\{p,1-p\}
	\]
	is the value of one fresh private signal at social posterior $p$.
	Under unbounded private beliefs,
	\[
	\Delta(p)>0
	\qquad
	\text{for every }p\in(0,1).
	\]
	
	Thus the standard predecessor-to-successor imitation argument is replaced by
	an asymptotically exact imitation of the average predecessor. Since
	\[
	\sum_{t\ge2}\frac{\varepsilon_t}{t}<\infty,
	\]
	average welfare is increasing up to a summable error. If its limiting value
	were strictly below one, the social posterior would remain sufficiently
	interior along a subsequence, so that the fresh private signal would generate
	a uniformly positive improvement. This contradicts convergence of average
	welfare. Therefore the limiting accuracy must equal one.

\paragraph{Full support.}
The preceding argument is not specific to the non-full-support case
\(Q(0)=0<Q(1)\). Under anonymous observation, allowing both actions to be
displayed with positive probability does not change either the argument or
the learning conclusion.

To see this, suppose more generally that
\[
Q(0)=q_0>0,
\qquad
Q(1)=q_1>0,
\]
and let \(N_{a,t}\) denote the number of displayed predecessors who chose
action \(a\). Conditional on the realized action history,
\[
N_{1,t}
\sim
\operatorname{Bin}(K_{t-1},q_1),
\]
where
\[
K_{t-1}=\sum_{i<t}a_i.
\]
Hence the agent can construct exactly the same estimator as above,
\[
\widehat X_t
:=
\Pi_{[0,1]}
\left(
\frac{N_{1,t}}{q_1(t-1)}
\right),
\]
and
\[
\mathbb E^\sigma_\omega
\left[
\left|
\widehat X_t-X_{t-1}
\right|
\right]
\leq
\sqrt{
	\frac{1-q_1}{q_1(t-1)}
}
=
O(t^{-1/2}).
\]
The additional count \(N_{0,t}\) can simply be ignored. Since the
Bayesian agent optimally uses the entire anonymous social observation, her
equilibrium payoff is weakly larger than the payoff from this restricted
rule.

Consequently, the approximate-imitation inequality derived above remains
unchanged:
\[
V_t
\geq
\overline W_{t-1}
-
\varepsilon_t,
\qquad
\varepsilon_t=O(t^{-1/2}),
\]
and combining it with the within-period value of the fresh private signal
again gives
\[
W_t
\geq
\overline W_{t-1}
+
\mathbb E^\sigma[\Delta(B_t)]
-
\varepsilon_t.
\]
The remainder of the proof is therefore identical.

Thus, unlike in the identity-preserving environment, full support creates
no separate difficulty under anonymous sampling. There, the support
distinction matters because action-dependent selection determines which
particular predecessors are observed, and neutral thinning is needed to
remove this selection bias. Under anonymity, by contrast, the proof relies
only on recovering the aggregate predecessor action frequency from the
display counts. As long as at least one action is displayed with strictly
positive probability, its count provides a consistent estimate of that
aggregate frequency. Full support merely supplies additional social
information and does not alter the argument.

\subsection{Discussion: Departing from the Standard Improvement Principle}
	
	It is useful to clarify the relation between our argument and the canonical improvement
	principle.(Acemoglu,2011).
	
	The main difference comes from \emph{endogenous network}. In the standard
	improvement-principle argument, a later agent can imitate an observed
	predecessor and then use her own private signal to improve upon that
	predecessor's decision. Importantly, this argument does not require the
	information sets of successive agents to be nested. What it does require,
	however, is that conditioning on a predecessor being observed does not itself
	systematically select the action whose performance is being compared.
	
	This requirement fails in our environment. For every predecessor $i<t$,
	\[
	\Pr(D_{i,t}=1\mid a_i)=Q(a_i),
	\]
	so the event that $i$ is observed depends on the realized action $a_i$.
	Consequently,
	\[
	\Pr^\sigma(a_i=\omega\mid D_{i,t}=1)
	\]
	need not coincide with the predecessor's unconditional accuracy
	\[
	W_i:=\Pr^\sigma(a_i=\omega).
	\]
	Thus, although agent $t$ can always imitate an observed predecessor, doing so
	only guarantees the accuracy of that predecessor \emph{conditional on having
		been selected}. It does not yield the generation-by-generation
	comparison based on $W_i$.
	
	This distinction is particularly transparent in the extreme case
	\[
	Q(0)=0,
	\qquad
	Q(1)=q>0.
	\]
	Conditional on being displayed, a predecessor must have chosen action $1$.
	Hence the displayed predecessor is maximally selected by her action. The
	usual recursive inequality of the form
	\[
	W_t \geq Z(W_i),
	\]
	where $Z(\cdot)$ denotes the improvement generated by an additional private
	signal, therefore cannot be justified by simply imitating a displayed
	predecessor. The failure is not caused by non-nested histories; it is caused
	by the fact that the observation event changes the distribution of the action
	being imitated.
	
Therefore, in the unanonymous case, we do not attempt to restore the predecessor-to-successor recursion.
	Instead, we use a within-period Bayesian improvement argument. We construct $\mathcal G $. For every fixed $t$, this auxiliary experiment has exactly the same joint
	distribution with the state as agent $t$'s actual fresh social observation.
	The construction is used only by the analyst and leaves both the equilibrium
	strategy profile and the objective data-generating process unchanged.
	
	Let
	\[
	B_t:=\Pr^\sigma(\omega=1\mid\mathcal G_t)
	\]
	be the posterior generated by the auxiliary social experiment, and define
	\[
	V_t
	:=
	\mathbb E^\sigma\!\left[\max\{B_t,1-B_t\}\right]
	\]
	as the optimal probability of matching the state using only this social
	information. Let
	\[
	W_t:=\Pr^\sigma(a_t=\omega)
	\]
	denote agent $t$'s equilibrium accuracy.
	
	Because agent $t$ faces a social experiment with the same distribution as
	$\mathcal G_t$ and additionally observes an independent private signal, her
	equilibrium payoff satisfies
	\[
	W_t
	=
	\mathbb E^\sigma[\Phi(B_t)],
	\]
	where $\Phi(p)$ is the optimal probability of matching the state after
	receiving one fresh private signal from prior $p$. Hence
	\[
	W_t-V_t
	=
	\mathbb E^\sigma[\Delta(B_t)],
	\]
	where
	\[
	\Delta(p)
	:=
	\Phi(p)-\max\{p,1-p\}
	\geq 0.
	\]
	
	This is the form of the improvement principle that survives selective
	sampling. It does not compare agent $t$ with a selected predecessor. Rather,
	it compares two statistical experiments faced at the same date:
	\[
	\text{social information}
	\qquad\text{versus}\qquad
	\text{social information plus one fresh private signal}.
	\]
	The comparison is therefore unaffected by the selection bias that invalidates
	the generation-by-generation accumulation.
	
	The anonymous case admits a related but distinct version of the argument. Because identities are suppressed, agent \(t\) cannot imitate any particular predecessor, and the auxiliary identity-preserving construction used above is no longer available. Nevertheless, the anonymous display count asymptotically reveals the aggregate fraction of predecessors choosing action \(1\). This restores the imitation component of the standard improvement principle in an approximate aggregate form. In particular, from \(N_t\) the agent can construct an estimate \(\widehat X_t\) of the predecessor action share \(X_{t-1}\), with expected error of order \(t^{-1/2}\). Randomizing into action (1) with probability \(\widehat X_t\) therefore approximately replicates the ex ante payoff from drawing a uniformly random predecessor and imitating her action. If \(V_t\) denotes the optimal accuracy attainable from the anonymous social observation alone and \(\overline W_{t-1}\) denotes average predecessor accuracy, this yields
	\[
	V_t\geq \overline W_{t-1}-\varepsilon_t,
	\qquad
	\varepsilon_t=O(t^{-1/2}).
	\]
	
	The second component of the improvement principle then applies exactly as in the identity-preserving case. Conditional on the anonymous social posterior \(B_t\), the current agent observes an independent private signal and optimally combines it with her social information. Hence
	\[
	W_t-V_t
	=\mathbb E^\sigma[\Delta(B_t)],
	\]
	where
	\[
	\Delta(p)=\Phi(p)-\max{p,1-p}.
	\]
	Combining the approximate imitation step with this exact within-period Bayesian improvement gives
	\[
	W_t
	\geq
	\overline W_{t-1}
	+
	\mathbb E^\sigma[\Delta(B_t)]-\varepsilon_t.
	\]
	Thus, in the anonymous model, neither part of the standard improvement principle survives literally: exact imitation of a particular predecessor is impossible, while the private-signal improvement remains exact. What replaces the former is an asymptotically exact imitation of the average predecessor. Because
	\[
	\sum_{t\geq2}\frac{\varepsilon_t}{t}<\infty,
	\]
	the approximation error is negligible in the recursion for average welfare, while the strict private-signal improvement rules out any limiting accuracy strictly below one.
	
	\subsubsection{Missing observations are informative} 
	The following formulas are useful for understanding the insight why there can be asymptotic learing despite of non-expanding observations.Consider the non-full-support case $ Q(0)=0,\qquad Q(1)=q>0$, and fix a realization $g$ of $\mathcal G_t$. Writing \[ x_\theta(g) := \Pr^\sigma(a_t=1\mid \omega=\theta,\mathcal G_t=g), \] 
	we have \[ Z_t\mid(\omega=\theta,\mathcal G_t=g) \sim \operatorname{Bernoulli}\!\left(qx_\theta(g)\right). \] 
	Thus not only a displayed action but also a missing observation carries information about the state. In particular, \[ \frac{ \Pr^\sigma(Z_t=0\mid\omega=1,g) }{ \Pr^\sigma(Z_t=0\mid\omega=0,g) } = \frac{1-qx_1(g)}{1-qx_0(g)}. \]
	
	 Whenever $x_1(g)\neq x_0(g)$, a missing observation is therefore informative. Hence sparsity of the realized sample should not be interpreted as an absence of social information: as displayed actions become rare, the pattern of missing observations continues to convey information through the known selection technology.

\section{Extensions}
\subsection{An informed designer}
\label{sec:informed-designer}

A natural extension is to allow the selection rule to be chosen strategically by an informed designer. The key observation is that, once the selection rule itself is state dependent but not directly observed, agents must infer the state jointly with the designer's choice of selection technology. This creates a statistical non-identifiability that is absent from our baseline model. We show that the designer can exploit this non-identifiability to completely shut down social learning: there exists an equilibrium in which she chooses different selection rules across states so that the induced distribution of social histories is identical across states. Consequently, although agents are fully Bayesian and understand the designer's incentives, social observations contain no information about the state, and agents optimally ignore their social histories and act only on their private signals.

At date $0$, the designer observes the state $\omega\in\{0,1\}$ and privately chooses a selection rule. Agents do not observe the realized choice of the rule, but know the feasible set, the designer's preferences, and her equilibrium strategy. We first consider the non-full-support family
\[
Q_q(0)=0,
\qquad
Q_q(1)=q,
\qquad q\in[0,1].
\]
The designer's payoff may depend on the state and on the induced sequence of agents' actions, but not directly on the selection rule itself. This restriction makes explicit that selection is used only through its informational effect on subsequent behavior.

Let
\[
a^P(s)
:=
\mathbf 1\{r_0+\ell(s)\geq 0\}
\]
denote the action an isolated agent would take using only her private signal, and define
\[
\alpha_\theta
:=
\Pr_\theta\big(a^P(s)=1\big),
\qquad \theta\in\{0,1\}.
\]
Under two-sided unbounded private beliefs,
\[
0<\alpha_0<\alpha_1<1.
\]
Thus, absent social information, action $1$ is more frequent in state $1$ than in state $0$.

The designer can exactly offset this endogenous difference in behavior. Fix any
\[
\lambda\in(0,\alpha_0)
\]
and define the state-dependent selection intensities
\[
q_\theta^*
:=
\frac{\lambda}{\alpha_\theta},
\qquad \theta\in\{0,1\}.
\]
Then
\[
0<q_1^*<q_0^*<1,
\qquad
q_0^*\alpha_0=q_1^*\alpha_1=\lambda.
\]
Hence the designer displays action $1$ more aggressively in state $0$, where it is endogenously less frequent, and less aggressively in state $1$, where it is endogenously more frequent.

\begin{Proposition}[Exact camouflage equilibrium]
	\label{prop:designer-camouflage}
	Suppose private beliefs are two-sided unbounded. Suppose the informed designer privately chooses $q\in[0,1]$ from the family
	\[
	Q_q(0)=0,
	\qquad
	Q_q(1)=q,
	\]
	and her payoff depends on the state and the induced action process but not directly on $q$.
	
	For every $\lambda\in(0,\alpha_0)$, there exists a perfect Bayesian equilibrium in which the designer chooses
	\[
	q_\theta^*
	=
	\frac{\lambda}{\alpha_\theta}
	\]
	in state $\theta$, while every agent, after every social history, chooses
	\[
	a_t
	=
	a^P(s_t)
	=
	\mathbf 1\{r_0+\ell(s_t)\geq 0\}.
	\]
	Moreover, under either identity-preserving or anonymous observation,
	\[
	\mathcal L^\sigma
	\bigl(
	H_t\mid \omega=0,q=q_0^*
	\bigr)
	=
	\mathcal L^\sigma
	\bigl(
	H_t\mid \omega=1,q=q_1^*
	\bigr)
	\qquad
	\text{for every }t.
	\]
	Consequently,
	\[
	\Pr^\sigma(\omega=1\mid H_t)=\mu_0
	\qquad\text{a.s.}
	\]
	at every date. Social history is therefore completely uninformative about the state, and agents optimally ignore it.
	
	In particular,
	\[
	\Pr_\theta^\sigma(a_t=1)=\alpha_\theta
	\qquad
	\text{for every }t,
	\]
	so asymptotic learning fails.
\end{Proposition}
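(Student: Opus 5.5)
The approach is a guess-and-verify construction of a perfect Bayesian equilibrium. Fix $\lambda\in(0,\alpha_0)$ and the candidate profile: the designer plays $q_\theta^*=\lambda/\alpha_\theta$ in state $\theta$, and every agent plays $a_t=a^P(s_t)$ after every history. Off-path beliefs are irrelevant on the agents' side, because no social history has zero probability under the candidate profile (see below) unless it is excluded in both states alike. I would verify three things in order: (i) the distribution of social histories is the same in both states; (ii) the agents' prescribed action is therefore a best response; (iii) the designer has no profitable deviation.

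\textbf{Step (i).} Under the candidate profile, actions are $a_t=a^P(s_t)$. These are i.i.d.\ across $t$ conditional on $\omega=\theta$, with $\Pr_\theta(a_t=1)=\alpha_\theta$, and they do not depend on history. Consider the display of predecessor $i$ to agent $t$ under identity-preserving observation. The outcome $Y_{i,t}$ equals $1$ with probability $\Pr(a_i=1,D_{i,t}=1)=q_\theta^*\alpha_\theta=\lambda$, and equals $\varnothing$ with probability $1-\lambda$. Action $0$ is never shown. I would check independence across $i$ using the conditional independence of the $D_{i,t}$ given the actions and the independence of the $a_i$. Then $H_t^I$ is a vector of $t-1$ i.i.d.\ draws of this law, and its distribution does not depend on $\theta$. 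The anonymous count $N_t$ is a measurable function of $H_t^I$, namely $N_t\sim\operatorname{Bin}(t-1,\lambda)$ in both states, so the equality of laws transfers to it. This step also covers the non-full-support caveat that $Y=0$ never occurs: it never occurs in either state.

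\textbf{Step (ii).} In equilibrium the agent knows the designer's strategy, so her likelihood of $H_t$ given $\omega=\theta$ is the law computed in (i). Equality of likelihoods yields $\Pr^\sigma(\omega=1\mid H_t)=\mu_0(1)$ a.s. Because $s_t$ is independent of $H_t$ given $\omega$ (signals are i.i.d.\ given the state and $H_t$ depends only on others' signals and display draws), her posterior log-odds are $r_0+\ell(s_t)$. The best response with the stated tie-breaking is then exactly $a^P(s_t)$. This holds for every realized history, which gives sequential rationality. It follows that $\Pr_\theta^\sigma(a_t=1)=\alpha_\theta$ for all $t$, and since $\alpha_1<1$ and $\alpha_0>0$, asymptotic learning fails.

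\textbf{Step (iii).} This is the main subtlety, and I would handle it by exploiting that the agents' strategy ignores the social history entirely. For any deviation $q'$ by the designer in either state, agents still play $a^P(s_t)$, because their strategy is fixed and history-independent. The induced action process therefore has the same law regardless of $q$. Since the designer's payoff depends only on the state and the action process, and not directly on $q$, every choice of $q$ gives the same payoff, and she is indifferent. Any deviation is at worst payoff-neutral, so $q_\theta^*$ is optimal. The point to handle carefully is that agents' beliefs after histories generated by a deviation are still computed from the equilibrium strategy: the deviation is unobserved, and the histories it produces lie in the support of the equilibrium laws. So no additional off-path specification is needed.
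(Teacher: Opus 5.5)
Your proposal is correct and follows essentially the same route as the paper. Each predecessor's displayed outcome has the same law across states ($1$ with probability $\lambda$, $\varnothing$ otherwise), so the posterior equals the prior and the private-only rule is a best response; the designer is indifferent because the history-independent agent strategy leaves the action process invariant to $q$, and all deviation-induced histories lie in the full support of $\{1,\varnothing\}^{t-1}$ since $0<\lambda<1$. The one step you leave implicit is feasibility, $q_\theta^*=\lambda/\alpha_\theta\in(0,1)$, which requires $0<\alpha_0<\alpha_1<1$; the paper proves this ordering in a separate lemma via the change of measure $dF_1=e^{\ell}dF_0$.
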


The equilibrium closes the endogenous feedback exactly. If social histories are uninformative, agents act only on their private signals, and the endogenous action-$1$ frequency in state $\theta$ is therefore
\[
x_\theta=\alpha_\theta.
\]
The designer's equilibrium choice then satisfies
\[
q_\theta^*x_\theta
=
q_\theta^*\alpha_\theta
=
\lambda
\]
in both states. Thus the state dependence of the underlying action process is exactly offset by the state dependence of the selection technology.

To see the mechanism, consider a particular predecessor $i$. Under the proposed equilibrium, in state $\theta$ she chooses action $1$ with probability $\alpha_\theta$. Since action $0$ is never displayed,
\[
\Pr_\theta
\bigl(
\text{predecessor $i$ is displayed as action $1$}
\bigr)
=
\alpha_\theta q_\theta^*
=
\lambda.
\]
The complementary outcome, a missing observation, therefore occurs with probability $1-\lambda$. Hence each predecessor generates exactly the same observable experiment in the two states. Under identity-preserving observation, the entire vector of displayed and missing predecessors has the same product distribution across states. Under anonymous observation,
\[
N_t
\sim
\operatorname{Bin}(t-1,\lambda)
\]
in both states.

The equality is exact at every finite date, rather than merely an asymptotic equality of displayed frequencies. Therefore ignoring social history is not an imposed behavioral restriction. In equilibrium, every social history has likelihood ratio one across the two state--selection-rule pairs. A Bayesian agent consequently learns nothing about the state from her social observation and optimally behaves exactly as an isolated agent.

This result contrasts sharply with the correctly calibrated benchmark. When the selection rule is fixed and known, even highly selective sampling does not prevent asymptotic learning. Here agents are still fully Bayesian and understand the selection technology and the designer's equilibrium strategy. What destroys social learning is instead the statistical non-identifiability created by the designer's hidden state-dependent choice of the selection rule.

\paragraph{Full support.}
The same construction is not specific to non-full-support selection. Let
\[
\pi_\theta(a)
:=
\Pr_\theta\big(a^P(s)=a\big),
\qquad a\in\{0,1\}.
\]
Choose $m_0,m_1>0$ such that
\[
m_a<\min_{\theta}\pi_\theta(a),
\qquad
m_0+m_1<1,
\]
and define
\[
Q_\theta^*(a)
:=
\frac{m_a}{\pi_\theta(a)}.
\]
Then $Q_\theta^*(a)\in(0,1)$ for every $a$ and $\theta$, and
\[
\pi_\theta(a)Q_\theta^*(a)=m_a.
\]
Thus a predecessor generates displayed action $0$, displayed action $1$, and a missing observation with probabilities
\[
m_0,\qquad m_1,\qquad 1-m_0-m_1,
\]
respectively, independently of the state. The same induction therefore gives an exact-camouflage equilibrium with full-support selection: agents again optimally ignore their social histories and act only on their private signals.

\paragraph{Designer incentives.}
The designer's incentive constraint in Proposition~\ref{prop:designer-camouflage} is weak. This is inherent to exact camouflage when the designer's choice is hidden and her payoff operates only through agents' actions. In equilibrium, agents take the action only based on private signals after every possible social history. A unilateral deviation in the hidden selection intensity can therefore change the distribution of social histories, but not the action taken after any realized history. The resulting action process is unchanged, so the designer is indifferent among such deviations and $q_\theta^*$ is a best response. If one instead wants the camouflaging rule to be selected strictly, an arbitrarily small direct preference over the curation technology can be added without altering the agents' equilibrium behavior.

\subsection{Aware but miscalibrated agents}\label{sec:miscalibration}

We next allow agents to understand that observations are selectively sampled,
while misperceiving the intensity of selection. Unlike complete unawareness,
agents continue to know their calendar position and therefore know the size of
the predecessor population. The misspecification concerns only the sampling
probability.

We focus on the anonymous non-full-support environment. The true selection
rule is
\[
Q(0)=0,
\qquad
Q(1)=q\in(0,1],
\]
whereas agents believe that the selection rule is
\[
\widehat Q(0)=0,
\qquad
\widehat Q(1)=\widehat q\in(0,1).
\]
Thus $\widehat q<q$ corresponds to under-calibration: agents underestimate
the extent to which action $1$ is selected into the displayed sample.

\paragraph{A permanent-board benchmark.}
Before returning to the fresh-sampling technology, it is useful to ask whether
the fragility to miscalibration is itself generated by the non-nestedness of
observations. Consider instead the alternative permanent-board technology:
each predecessor's action is sampled only once, and, if displayed, remains
publicly observable to all subsequent agents. Thus the public history is
nested and the subjective public likelihood ratio evolves recursively.

Maintain
\[
Q(0)=0,\qquad Q(1)=q,
\]
while agents perceive
\[
\widehat Q(0)=0,\qquad \widehat Q(1)=\widehat q.
\]
For a subjective public log-likelihood ratio $R$, define
\[
\rho_\theta(R)
:=
\Pr_\theta\!\left(\ell(s)\ge -R\right),
\qquad \theta\in\{0,1\}.
\]
Thus $\rho_\theta(R)$ is the probability that the current agent chooses action
$1$ in state $\theta$ when the public log-likelihood ratio is $R$.

The permanent-board benchmark turns out not to restore local robustness.
Indeed, without an exogenous source of uniformly informative actions, social
learning becomes arbitrarily slow near a certainty belief. Consequently, an
arbitrarily small fixed under-calibration can reverse the direction in which
extreme public beliefs are reinforced. Two-sided unbounded private beliefs rule out a finite informational
cascade, but they do not keep observed actions uniformly informative.
As the public belief becomes increasingly confident in state $1$,
agents choose action $1$ with probability approaching one under both
states. Hence the state-dependent difference in the distribution of
endogenous actions vanishes near certainty.This vanishing informativeness makes the correctly calibrated benchmark
locally fragile: any fixed amount of under-calibration, however small,
eventually dominates the shrinking difference between the action
distributions induced by the two states.

\begin{Proposition}[Permanent-board fragility]
	\label{prop:permanent-board-fragility}
	Suppose private beliefs are two-sided unbounded and
	\[
	0<\widehat q<q\le 1.
	\]
	Under the permanent-board observation technology, in the true state
	$\omega=0$, the incorrect certainty belief on state $1$ is locally stable.
	More precisely, if $R_t$ denotes the subjective public log-likelihood ratio,
	then, from every finite initial public belief,
	\[
	\Pr_0\!\left(R_t\to+\infty\right)>0.
	\]
	Consequently,
	\[
	\liminf_{t\to\infty}\Pr_0(a_t=1)>0,
	\]
	and asymptotic learning fails.
	In particular, correct calibration is not locally robust from below: for
	every $\varepsilon>0$, there exists
	$\widehat q\in(q-\varepsilon,q)$ for which asymptotic learning fails.
\end{Proposition}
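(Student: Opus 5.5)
The plan is to write the subjective public log-likelihood ratio as a Markov process and show that, in state $0$, it has strictly positive drift relative to its step size near $+\infty$ whenever $\widehat q<q$. A Lyapunov/supermartingale argument then gives positive probability of escape to $+\infty$.

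\emph{Step 1: transitions.} Under the permanent board, each agent $t$ is either displayed as action $1$ or recorded as missing, where missing means she chose $0$ or was not displayed. Agents compute likelihoods with $\widehat q$, so from public ratio $R$ the subjective update is
\[
R\mapsto R+u(R),\quad u(R):=\log\frac{\rho_1(R)}{\rho_0(R)}>0,
\]
after a displayed $1$, and
\[
R\mapsto R-d(R),\quad d(R):=\log\frac{1-\widehat q\rho_0(R)}{1-\widehat q\rho_1(R)}>0,
\]
after a missing observation. Under the true law in state $0$, the first event has probability $q\rho_0(R)$ and the second $1-q\rho_0(R)$. Hence the true drift is
\[
m(R):=q\rho_0(R)\,u(R)-\bigl(1-q\rho_0(R)\bigr)d(R).
\]

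\emph{Step 2: asymptotics as $R\to\infty$.} Set $\delta(R):=\rho_1(R)-\rho_0(R)$. Unboundedness from below gives $\rho_\theta(R)\to 1$. It also gives $(1-\rho_1)/(1-\rho_0)\to 0$, because the relevant tail is the set $\{\ell(s)<-R\}$ and there $f_1/f_0<e^{-R}$. Consequently $\delta\sim 1-\rho_0\to0$. Taylor expansion then yields
\[
u(R)=\delta+O(\delta^2),\qquad d(R)=\frac{\widehat q\,\delta}{1-\widehat q}+O(\delta^2),
\]
and therefore
\[
m(R)=\delta(R)\Bigl[q-(1-q)\tfrac{\widehat q}{1-\widehat q}\Bigr]+o(\delta)=\delta(R)\,\frac{q-\widehat q}{1-\widehat q}+o(\delta),
\]
which is $\ge c\,\delta(R)$ for some $c>0$ once $R\ge M$. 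In addition, both jumps are bounded by $C\delta(R)$. This uniform expansion is the step I expect to be the main obstacle. The delicate point is controlling the error terms uniformly in terms of $\delta$ only, and not in terms of $1-\rho_0$ separately. I would handle it by proving the tail-ratio lemma first and expressing every error term through $\delta$.

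\emph{Step 3: escape with positive probability.} Fix $\gamma>0$ and let $X$ denote the increment. Then
\[
\mathbb E_0\bigl[e^{-\gamma X}\mid R\bigr]\le 1-\gamma c\delta+\gamma^2C^2\delta^2<1
\]
for $R\ge M$, after enlarging $M$ so that $\delta$ is small. The key feature is that a fixed $\gamma$ suffices, because the step size vanishes. Hence $e^{-\gamma R_{t\wedge\tau}}$ is a positive supermartingale, where $\tau$ is the first time $R_t<M$. Optional stopping, using that the overshoot below $M$ is bounded by $C$, gives
\[
\Pr_0(\tau<\infty\mid R_0=r)\le e^{-\gamma(r-M-C)}.
\]
This is $<1$ for $r\ge M':=M+C+1$.

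Next I would show that on $\{\tau=\infty\}$ we have $R_t\to+\infty$. The supermartingale converges, so $R_t$ converges in $(M,\infty]$. A finite limit is impossible: on $[M,K]$ both $u$ and $d$ are bounded below, and each move has probability bounded below, so the increments cannot become small. To reach $[M',\infty)$ from any finite initial $R_0$, I would use a finite run of displayed $1$'s. Each occurs with probability $\ge q\rho_0>0$ and raises $R$ by at least $\min_{[R_0,M']}u>0$. The strong Markov property then gives $\Pr_0(R_t\to+\infty)>0$.

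\emph{Step 4: conclusions.} On $\{R_t\to\infty\}$ we have $\Pr_0(a_t=1\mid R_t)=\rho_0(R_t)\to1$. Fatou's lemma or dominated convergence then gives
\[
\liminf_t\Pr_0(a_t=1)\ge \Pr_0(R_t\to\infty)>0,
\]
so asymptotic learning fails. Since Steps 1–3 hold for every $\widehat q\in(0,q)$, any $\widehat q\in(\max\{0,q-\varepsilon\},q)$ works, which proves that correct calibration is not locally robust from below.
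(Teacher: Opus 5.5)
Your proof is correct. Its global architecture matches the paper's: an exponential supermartingale $e^{-\gamma R_{t\wedge\tau}}$ on an upper tail, a.s.\ convergence that rules out a finite limit, a finite run of displayed $1$'s to enter the tail, and Fatou.

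The genuine difference is how you verify the supermartingale inequality. The paper does it non-asymptotically, using only $\rho_1/\rho_0\to1$ and $\widehat q<q$. Once $R$ is large enough that $\widehat q\rho_0(R)<\widehat q\rho_1(R)<q\rho_0(R)$, replacing the subjective state-$1$ probability $b(R)=\widehat q\rho_1(R)$ of a displayed $1$ by the true probability $p(R)=q\rho_0(R)$ shifts weight onto the smaller value $(a/b)^\kappa$. The remaining term is the $\kappa$-th moment of the subjective likelihood ratio $\widehat f_0/\widehat f_1$ under subjective state $1$, which is below one by strict concavity of $z\mapsto z^\kappa$. This gives $\Psi_\kappa(R)<1$ for every $\kappa\in(0,1)$ with no Taylor expansion and no tail-ratio estimate.

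Your route instead computes the true drift to first order, $m(R)=\delta(R)\frac{q-\widehat q}{1-\widehat q}+O(\delta^2)$, bounds both jumps by $C\delta(R)$, and closes with $e^{y}\le 1+y+y^2$. This costs the lemma $(1-\rho_1)/(1-\rho_0)\to0$, which you derive correctly from $f_1/f_0<e^{-R}$ on $\{\ell<-R\}$. In return it yields an explicit drift constant. It also makes transparent that drift and step size vanish at the same rate $\delta$, so $\widehat q=q$ is exactly the knife-edge where first-order drift disappears; this is the slow-learning mechanism the paper describes only verbally.

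Two small remarks. First, the overshoot bound is unnecessary: $R_\tau<M$ already gives $e^{-\gamma R_\tau}>e^{-\gamma M}$, so the maximal inequality yields $\Pr_0(\tau<\infty)\le e^{-\gamma(r-M)}$ directly, as in the paper. Second, your reachability step relies on $\inf_{[R_0,M']}u>0$, which deserves a line of justification. It holds because $\delta(R)=E_0[(e^{\ell}-1)\mathbf 1\{\ell\ge -R\}]$ is nondecreasing for $R\le 0$ and nonincreasing for $R\ge 0$, so its infimum over a compact interval is attained at an endpoint and is positive. The paper handles this step differently, by contradiction along the deterministic all-$1$ path using monotone convergence of $\rho_\theta$.
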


The mechanism can be seen directly from the tail of the public-belief
process. When $R$ is large, both $\rho_0(R)$ and $\rho_1(R)$ are close to one.
The true probability of a newly displayed action $1$ in state $0$ is therefore
$q\rho_0(R)$, whereas agents assign probabilities
$\widehat q\rho_0(R)$ and $\widehat q\rho_1(R)$ under subjective states $0$
and $1$, respectively. Since
\[
\rho_1(R)>\rho_0(R)
\]
for every finite $R$, while
\[
\frac{\rho_1(R)}{\rho_0(R)}\longrightarrow 1
\qquad\text{as }R\to+\infty,
\]
every fixed $\widehat q<q$ implies that, for all sufficiently large $R$,
\[
\widehat q\rho_0(R)
<
\widehat q\rho_1(R)
<
q\rho_0(R).
\]
Hence the realized public data contain more displayed action-$1$ observations
than either subjective state predicts, but subjective state $1$ predicts more
of them than subjective state $0$. The misspecified likelihood therefore
reinforces the incorrect extreme belief.

This benchmark also clarifies the role of fresh sampling below. The
fragility itself is not caused by non-nestedness: it already arises under a
nested permanent history because information conveyed by endogenous actions
vanishes near certainty. What fresh sampling removes is instead the recursive
public-belief representation used above. We now return to the baseline
fresh-sampling technology and show how far the same failure can be recovered
without such a recursive state variable.

\bigskip
Now we go back to the random sampling setup. For tractability, we focus on anonymous sampling.
Let
\[
K_{t-1}:=\sum_{i<t}a_i
\]
denote the actual number of predecessors choosing action $1$, and let $N_t$
denote the number of displayed action-$1$ predecessors observed by agent $t$.
Under the objective data-generating process,
\[
N_t\mid K_{t-1}
\sim
\operatorname{Bin}(K_{t-1},q).
\tag{M.1}
\label{eq:objective-thinning}
\]

Under the agents' subjective model, let
\[
\widehat\mu_{\theta,t-1}(k)
:=
\widehat{\Pr}_{\theta}(K_{t-1}=k),
\qquad
\theta\in\{0,1\},
\]
be the perceived distribution of the aggregate action count in state $\theta$.
The perceived distribution of the displayed count is therefore
\[
\widehat g_{\theta,t}(n)
=
\sum_{k=n}^{t-1}
\widehat\mu_{\theta,t-1}(k)
\binom{k}{n}
\widehat q^{\,n}
(1-\widehat q)^{k-n}.
\tag{M.2}
\label{eq:subjective-count-law}
\]

Write
\[
r_0
:=
\log\frac{\mu_0(1)}{\mu_0(0)}
\]
for the prior log-odds of state $1$. After observing $N_t=n$, the agent's
subjective social log-likelihood ratio is
\[
\widehat R_t(n)
=
r_0+
\log
\frac{\widehat g_{1,t}(n)}
{\widehat g_{0,t}(n)}.
\tag{M.3}
\label{eq:subjective-social-llr}
\]
Since the private-signal structure itself is correctly specified, agent $t$
chooses action $1$ if and only if
\[
\widehat R_t(N_t)+\ell(s_t)\geq0.
\tag{M.4}
\label{eq:miscalibrated-decision}
\]

\paragraph{Extreme under-calibration.}
We first consider sufficiently severe under-calibration.  The result requires
no additional likelihood-ratio regularity beyond the maintained assumptions
on private information.  The key observation is that sufficiently many
displayed action-$1$ choices weakly favor state $1$ even though the
endogenous aggregate count need not satisfy a monotone-likelihood-ratio
ordering.

For each date $t$, let
\[
\widehat g_{\omega,t}(n)
:=
\widehat{\Pr}_{\omega}(N_t=n),
\qquad \omega\in\{0,1\},
\]
denote the distribution of the displayed count in the agents' subjective
model, where the perceived selection intensity is $\widehat q$.

We first establish a simple property of the subjective displayed-count
experiment that will be useful for constructing a self-reinforcing region of
incorrect actions.  Under under-calibration, the key histories are those in
which the realized number of displayed action-$1$ predecessors is large
relative to what agents regard as the sampling intensity.  In particular, we
will show below that if the underlying fraction of action-$1$ choices is
sufficiently high, the true sampling process generates
\[
N_t>\hat q(t-1)
\]
with probability approaching one.  We therefore begin by asking how such a
display count is interpreted under the agents' subjective model.

For each state $\theta$, recall that
\[
\hat g_{\theta,t}(n)
:=
\widehat{\Pr}_{\theta}(N_t=n)
\]
denotes the subjective distribution of the displayed count.  The following
lemma shows that any count exceeding $\hat q(t-1)$ is weak evidence in favor
of state $1$.

\begin{Lemma}[Upper-tail sign]
	\label{lem:upper-tail-sign}
	Under the maintained assumptions, for every $t$,
	\[
	n>\widehat q(t-1)
	\quad\Longrightarrow\quad
	\widehat g_{1,t}(n)\ge \widehat g_{0,t}(n).
	\]
	Equivalently,
	\[
	n>\widehat q(t-1)
	\quad\Longrightarrow\quad
	\log
	\frac{\widehat g_{1,t}(n)}
	{\widehat g_{0,t}(n)}
	\ge0.
	\]
\end{Lemma}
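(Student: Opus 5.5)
The plan is to separate the two ingredients in (M.2): the subjective thinning kernel, which does not depend on the state, and the subjective law of the aggregate count $K_{t-1}$, which does. Write
\[
b_n(k):=\binom{k}{n}\widehat q^{\,n}(1-\widehat q)^{k-n},\qquad k=0,\ldots,t-1,
\]
with $b_n(k)=0$ for $k<n$. Then
\[
\widehat g_{1,t}(n)-\widehat g_{0,t}(n)=\sum_{k}\bigl(\widehat\mu_{1,t-1}(k)-\widehat\mu_{0,t-1}(k)\bigr)b_n(k).
\]
The first step is to show that, on the upper tail $n>\widehat q(t-1)$, the function $k\mapsto b_n(k)$ is nondecreasing on $\{0,\ldots,t-1\}$. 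For $k<n$ it is zero, which is clearly no larger than $b_n(n)$. For $k\ge n$ one computes
\[
\frac{b_n(k+1)}{b_n(k)}=\frac{(k+1)(1-\widehat q)}{k+1-n},
\]
and this ratio is at least $1$ if and only if $n\ge \widehat q(k+1)$. Since $k+1\le t-1$ throughout the relevant range, the hypothesis $n>\widehat q(t-1)$ gives exactly this. This step is the only place where the threshold $\widehat q(t-1)$ enters. It is elementary and uses no property of the equilibrium.

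The second step is to obtain $\widehat g_{1,t}(n)\ge\widehat g_{0,t}(n)$ from the monotonicity of $b_n$. For this it suffices that, under the subjective model, $\widehat\mu_{1,t-1}$ first-order stochastically dominates $\widehat\mu_{0,t-1}$. Given dominance, summation by parts expresses the difference as a sum of products of nonnegative tail differences and nonnegative increments $b_n(k+1)-b_n(k)$, so the difference is nonnegative. I would prove dominance by a coupling of the two subjective worlds, built from the following ingredients.
\begin{itemize}
\item The private log-likelihood ratio $\ell$ is stochastically larger under $F_1$ than under $F_0$, because $dF_1/dF_0=e^{\ell}$ is increasing in $\ell$. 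Hence signals can be coupled so that $\ell(s_i^{(1)})\ge \ell(s_i^{(0)})$ for every $i$.
\item The perceived display indicators are coupled identically in both worlds, through common uniforms $U_{i,t}\le\widehat q$.
\item One then argues by induction on $i$ that $a_i^{(1)}\ge a_i^{(0)}$ pathwise, which immediately gives $K_{t-1}^{(1)}\ge K_{t-1}^{(0)}$.
\end{itemize}

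The main obstacle is the inductive step of this coupling. By (M.4), agent $i$ chooses $1$ iff $\widehat R_i(N_i)+\ell(s_i)\ge 0$. Pathwise dominance of predecessors' actions gives $N_i^{(1)}\ge N_i^{(0)}$ under the common display draws. To conclude $a_i^{(1)}\ge a_i^{(0)}$, however, one also needs $\widehat R_i(\cdot)$ to be nondecreasing in the count, and that is precisely a likelihood-ratio regularity of the kind the lemma is meant to avoid.

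My first attempt would be to weaken what is required. The lemma needs dominance only in the aggregate tail sense, not pathwise. So I would instead prove by induction the tail inequality
\[
\widehat{\Pr}_1(K_{i}\ge m)\ge \widehat{\Pr}_0(K_{i}\ge m)\qquad\text{for all }m.
\]
At each date I would exploit that the agent's choice probability given the count is
\[
\widehat\Pr_\theta\bigl(\ell\ge -\widehat R_i(n)\bigr),
\]
which is larger in state $1$ for every fixed $n$, so the state-$1$ world adds a $1$ with higher probability at every history. If this still requires monotonicity of $\widehat R_i$ in $n$, the fallback is a joint induction in $t$. The lemma itself at date $i$ supplies the sign of $\widehat R_i(n)-r_0$ on the upper tail, which I would hope suffices to propagate dominance on the tail events that matter. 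Verifying that this joint induction closes is where I expect the real work to lie.
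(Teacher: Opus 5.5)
\textbf{Gap: the stochastic-dominance step is never proved.} Your first step (monotonicity of $k\mapsto b_n(k)$ when $n>\widehat q(t-1)$) matches the paper, as does the reduction to first-order stochastic dominance of $\widehat\mu_{1,t-1}$ over $\widehat\mu_{0,t-1}$ via summation by parts. But the dominance itself is the heart of the lemma, and you do not establish it. You rightly abandon the pathwise signal/display coupling, since $\widehat R_i(\cdot)$ need not be monotone in the count. You then leave the tail-inequality induction unverified and fall back on a speculative joint induction, which is not needed. Your worry is legitimate in general. For two Markov chains, comparing up-probabilities at the same state does not by itself imply stochastic dominance; one usually needs stochastically monotone kernels. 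So you must name what rescues the argument here.

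\textbf{The missing idea: $K$ is a skip-free Markov chain.} Under anonymous observation, $N_i\mid a_1,\ldots,a_{i-1}\sim\operatorname{Bin}(K_{i-1},\widehat q)$ in both subjective states, and $s_i$ is independent of the history. Hence $\widehat{\Pr}_\theta(a_i=1\mid a_1,\ldots,a_{i-1})=\alpha_{\theta,i}(K_{i-1})$, where
\[
\alpha_{\theta,i}(k):=\sum_{n=0}^{k}\binom{k}{n}\widehat q^{\,n}(1-\widehat q)^{k-n}\rho_\theta\bigl(\widehat R_i(n)\bigr).
\]
Pointwise $\rho_1\ge\rho_0$ gives $\alpha_{1,i}(k)\ge\alpha_{0,i}(k)$ for every $k$. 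This uses no monotonicity of $\widehat R_i$ in $n$ or of $\alpha_{\theta,i}$ in $k$. Because $K$ moves by $0$ or $1$ each period,
\[
\widehat{\Pr}_\theta(K_i\ge m)=\bigl(1-\alpha_{\theta,i}(m-1)\bigr)\widehat{\Pr}_\theta(K_{i-1}\ge m)+\alpha_{\theta,i}(m-1)\,\widehat{\Pr}_\theta(K_{i-1}\ge m-1).
\]
The right side is nondecreasing in both tail probabilities. Since $\widehat{\Pr}_\theta(K_{i-1}\ge m-1)\ge\widehat{\Pr}_\theta(K_{i-1}\ge m)$, it is also nondecreasing in $\alpha$. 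Your tail induction therefore closes immediately.

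The paper makes the same point by coupling the two chains. If $K^1_{i-1}>K^0_{i-1}$, the order survives one step automatically, since each chain rises by at most one. At ties, a common uniform together with $\alpha_{1,i}(k)\ge\alpha_{0,i}(k)$ preserves the order.
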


The lemma gives exactly the sign information needed below.  A displayed
count above $\hat q(t-1)$ can never push the subjective social belief toward
state $0$; at worst it is uninformative, and otherwise it favors state $1$.
Hence, once the true process generates such counts with high probability,
the agent's probability of choosing action $1$ is bounded below by the
private-only probability $\alpha_0$.

The result does not require an MLR ordering of the aggregate action count.
The underlying reason is weaker.  The aggregate number of action-$1$
choices is first-order stochastically larger in state $1$ than in state $0$.
Moreover, whenever $n>\hat q(t-1)$, the binomial probability of obtaining
exactly $n$ displayed action-$1$ observations is increasing in the
underlying number of action-$1$ predecessors.  Combining these two
observations yields the stated likelihood-ratio sign.

Let
\[
r_0
:=
\log\frac{\mu_0(1)}{\mu_0(0)}
\]
denote the prior log odds of state $1$, and define
\[
\alpha_0
:=
\Pr_0\!\left(r_0+\ell(s)\ge0\right)
=
\rho_0(r_0).
\]
Thus $\alpha_0$ is the probability that an agent chooses action $1$ in
state $0$ when acting on the prior and her private signal alone.  Under
two-sided unbounded private beliefs,
\[
0<\alpha_0<1.
\]

\begin{Proposition}[Failure under extreme under-calibration]
	\label{prop:extreme-under-calibration}
	Suppose private beliefs are two-sided unbounded.  If
	\[
	\widehat q<q\alpha_0,
	\]
	then asymptotic learning fails.  More precisely, in state $\omega=0$ there
	exist
	\[
	x_*
	\in
	\left(
	\frac{\widehat q}{q},
	\alpha_0
	\right)
	\]
	and an event $\mathcal S$ of strictly positive probability such that
	\[
	\frac{K_t}{t}\ge x_*
	\]
	eventually on $\mathcal S$.  Consequently,
	\[
	\liminf_{t\to\infty}\Pr_0(a_t=1)>0.
	\]
\end{Proposition}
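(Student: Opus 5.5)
Fix the true state $\omega=0$ and choose $x_*$ with $\widehat q/q<x_*<\alpha_0$, which is possible because $\widehat q<q\alpha_0$. Then pick $x'$ with $x_*<x'<\alpha_0$. The argument is a self-reinforcing region, built in four steps.

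\emph{Step 1: from a high action share to a high display count.} On the event $\{K_{t-1}\ge x_*(t-1)\}$, (M.1) gives $N_t\sim\operatorname{Bin}(K_{t-1},q)$ with mean at least $qx_*(t-1)>\widehat q(t-1)$. A Chernoff (Hoeffding) bound conditional on the action history then yields
\[
\Pr_0\bigl(N_t\le \widehat q(t-1)\mid a_1,\ldots,a_{t-1}\bigr)\le e^{-c(t-1)}
\qquad\text{on }\{K_{t-1}\ge x_*(t-1)\},
\]
where $c:=2(qx_*-\widehat q)^2>0$. The display draw for agent $t$ is fresh and independent of $s_t$ given the history.

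\emph{Step 2: a high display count makes action 1 likely.} On $\{N_t>\widehat q(t-1)\}$, Lemma~\ref{lem:upper-tail-sign} gives $\widehat R_t(N_t)\ge r_0$. By (M.4), agent $t$ therefore chooses action $1$ whenever $r_0+\ell(s_t)\ge0$. Hence, conditional on the past, on $\{K_{t-1}\ge x_*(t-1)\}$,
\[
\Pr_0(a_t=1\mid\mathcal F_{t-1})\ge \alpha_0-e^{-c(t-1)}=:p_t,
\]
and $p_t\ge x'$ for all $t\ge T_0$.

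\emph{Step 3: a coupling that keeps $K_t/t$ above $x_*$.} For $t>T$, let $b_t$ be i.i.d.\ Bernoulli$(x')$ variables, coupled so that $a_t\ge b_t$ whenever $K_{t-1}\ge x_*(t-1)$. This uses an auxiliary uniform variable together with $s_t$ and $D_{\cdot,t}$, which is possible because the conditional probability is at least $x'$. Let
\[
\tau:=\inf\{t>T:\ K_t<x_* t\}.
\]
Up to time $\tau$ we have $K_t\ge K_T+\sum_{T<i\le t}b_i$. Choose $T\ge T_0$ and start from the positive-probability event $E_T:=\{a_1=\cdots=a_T=1\}$. This event has positive probability because unbounded private beliefs make every action profile have positive probability: each $\Pr_0(a_i=1\mid\text{past})\ge\Pr_0(\ell(s)\ge -\widehat R_i)>0$, using finiteness of $\widehat R_i$, which needs $\widehat g_{0,t}(n)>0$. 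On $E_T$ it suffices that
\[
T+\sum_{T<i\le t}b_i\ge x_*t\qquad\text{for all }t>T.
\]
The walk $\sum_{T<i\le t}(b_i-x_*)$ has positive drift $x'-x_*$, so by the strong law it has an a.s.\ finite infimum. With $T$ large, using independence of the $b_i$ from $\mathcal F_T$, the probability that $T+\sum_{T<i\le t}(b_i-x_*)\ge x_*T-(1-x_*)T\cdot 0$, i.e.\ that the walk never drops below $-(1-x_*)T$, is close to $1$. Let $\mathcal S:=E_T\cap\{\tau=\infty\}$. Then $\Pr_0(\mathcal S)>0$ and $K_t/t\ge x_*$ for all $t\ge T$ on $\mathcal S$.

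\emph{Step 4: conclusion.} Step 2 gives $\Pr_0(a_t=1)\ge\Pr_0(\mathcal S)\,p_t$, so $\liminf_t\Pr_0(a_t=1)\ge\Pr_0(\mathcal S)\,\alpha_0>0$ and asymptotic learning fails.

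\textbf{Main obstacle.} The delicate part is the coupling in Step 3: the events $\{a_t=1\}$ are dependent and history-dependent, so domination by an i.i.d.\ sequence must be stopped at $\tau$. I would handle this by defining $b_t$ for all $t$ and verifying $a_t\ge b_t$ only on $\{t\le\tau\}$. A secondary, minor point is the positivity of $\Pr_0(E_T)$, which requires the subjective count laws to have full support on $\{0,\ldots,t-1\}$; this follows from $\widehat q\in(0,1)$ and the fact that every $K$ value has positive subjective probability.
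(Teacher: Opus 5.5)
Your proposal is correct and follows essentially the same route as the paper: binomial concentration on $\{K_{t-1}\ge x_*(t-1)\}$, Lemma~\ref{lem:upper-tail-sign} to get $\widehat R_t(N_t)\ge r_0$ and hence a conditional probability of action $1$ of roughly $\alpha_0$, a coupling with i.i.d.\ Bernoulli variables of intermediate mean, and survival of the comparison walk started from the all-ones seed event $E_T$. The differences are cosmetic. You get survival from the strong law (a.s.\ finite infimum of a positive-drift walk, with $T$ large) rather than the paper's exponential-supermartingale bound $\Pr(\tau<\infty)\le e^{-\lambda(1-x_*)T}$, and you pass to the unconditional bound through the $\mathcal F_{t-1}$-measurable event $\{K_{t-1}\ge x_*(t-1)\}\supseteq\mathcal S$ instead of $\mathcal S_{t-1}$. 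Your remark that $\Pr_0(E_T)>0$ requires full support of the subjective count laws (so that $\widehat R_i$ is finite) makes explicit a point the paper leaves implicit.
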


The mechanism is straightforward.  If the fraction of previous action-$1$
choices exceeds $x_*$, then the true sampling technology generates more
than $\widehat q(t-1)$ displayed action-$1$ choices with probability
approaching one.  By Lemma~\ref{lem:upper-tail-sign}, such an observation
weakly favors state $1$ under the agents' subjective model.  Hence the
probability of choosing action $1$ in the true state $0$ is asymptotically
bounded below by $\alpha_0$.  Since
\[
\alpha_0>x_*,
\]
the region $K_t/t\ge x_*$ therefore has a strictly inward drift.  A finite
sequence of sufficiently strong private signals can place the process inside
this region, after which it survives there forever with strictly positive
probability.

The restriction $\widehat q<q\alpha_0$ reflects the fact that the argument
uses only the \emph{sign} of the subjective likelihood ratio above the
perceived sampling frequency: it guarantees a wrong-action probability of
at least $\alpha_0$, but not one approaching unity.  We next show that a
mild regularity condition on the upper tail of the subjective displayed-count
experiment strengthens this sign result to likelihood-ratio divergence and
thereby extends the conclusion to every fixed $\widehat q<q$.

\paragraph{Beyond extreme under-calibration.}
The preceding argument requires no additional likelihood-ratio regularity, but
it applies only when the perceived selection intensity is sufficiently small,
namely when
\[
\widehat q<q\alpha_0.
\]
We next show that this restriction can be removed under a mild regularity
condition on the upper tail of the \emph{subjective displayed-count
	experiment}.

For each date $t$, let
\[
\widehat g_{\omega,t}(n)
:=
\widehat{\Pr}_{\omega}(N_t=n),
\qquad \omega\in\{0,1\},
\]
denote the distribution of the displayed count under the agent's subjective
model, in which the selection intensity is $\widehat q$.  Define the
subjective likelihood ratio
\[
L_t(n)
:=
\frac{\widehat g_{1,t}(n)}
{\widehat g_{0,t}(n)}.
\]

\begin{Assumption}[Upper-tail likelihood regularity]
	\label{ass:upper-tail-lr}
	There exist $\delta>0$ and $t_0<\infty$ such that, for every $t\ge t_0$,
	the map
	\[
	n\longmapsto L_t(n)
	\]
	is weakly increasing on
	\[
	n\ge (\widehat q-\delta)(t-1).
	\]
\end{Assumption}

Assumption~\ref{ass:upper-tail-lr} is substantially weaker than a global
monotone-likelihood-ratio requirement.  In particular, the likelihood ratio
may be non-monotone, and may cross one multiple times, at low or intermediate
displayed counts.  The assumption requires monotonicity only in a
neighborhood of, and above, the state-$1$ asymptotic display frequency
$\widehat q$.  It is also imposed directly on the observable experiment
$N_t$, rather than on the latent endogenous count $K_{t-1}$.

The first implication of Assumption~\ref{ass:upper-tail-lr} is that every
display frequency strictly above $\widehat q$ becomes arbitrarily strong
evidence for state $1$ in the subjective model.

\begin{Lemma}[Upper-tail likelihood amplification]
	\label{lem:upper-tail-amplification}
	Suppose the maintained assumptions for the correctly calibrated anonymous
	model hold, and suppose Assumption~\ref{ass:upper-tail-lr} is satisfied.
	Then, for every $y\in(\widehat q,1)$,
	\[
	\inf_{n\ge y(t-1)}
	\log
	\frac{\widehat g_{1,t}(n)}
	{\widehat g_{0,t}(n)}
	\longrightarrow +\infty.
	\tag{QL}
	\]
\end{Lemma}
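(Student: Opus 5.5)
The plan is to combine the monotonicity in Assumption~\ref{ass:upper-tail-lr} with a large-deviation comparison at a single anchor point. Fix $y\in(\widehat q,1)$ and choose an intermediate frequency $y'\in(\widehat q,y)$. For $t\ge t_0$, the set $\{n\ge y(t-1)\}$ lies inside the monotone region $\{n\ge(\widehat q-\delta)(t-1)\}$. Hence
\[
\inf_{n\ge y(t-1)}L_t(n)\ge L_t(n_t),
\qquad n_t:=\lceil y(t-1)\rceil .
\]
It therefore suffices to show $L_t(n_t)\to+\infty$. We will not study the likelihood ratio at a single point directly. Instead we use monotonicity a second time, to pass between point masses and upper-tail probabilities.

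\textbf{Step 1 (tail comparison).} Since $L_t$ is weakly increasing on $[(\widehat q-\delta)(t-1),t-1]$, for every $n\ge m_t:=\lceil y'(t-1)\rceil$ we have $\widehat g_{1,t}(n)\ge L_t(m_t)\,\widehat g_{0,t}(n)$. Similarly, every $n\in[(\widehat q-\delta/2)(t-1),m_t]$ satisfies $\widehat g_{1,t}(n)\le L_t(m_t)\,\widehat g_{0,t}(n)$. Summing these two inequalities gives a lower bound on $L_t(n_t)\ge L_t(m_t)$ of the form
\[
L_t(m_t)\ \ge\ \frac{\widehat{\Pr}_1\bigl(N_t\in[(\widehat q-\delta/2)(t-1),m_t]\bigr)}{\widehat{\Pr}_0\bigl(N_t\in[(\widehat q-\delta/2)(t-1),m_t]\bigr)} .
\]
The weak monotonicity on that window caps the ratio there by its right endpoint.

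\textbf{Step 2 (the two probabilities).} Under the subjective model, $N_t\mid K_{t-1}\sim\mathrm{Bin}(K_{t-1},\widehat q)$ and $K_{t-1}\le t-1$. A Chernoff bound then gives
\[
\widehat{\Pr}_\theta\bigl(N_t\ge(\widehat q+\eta)(t-1)\bigr)\le e^{-c(\eta)t}
\]
for both states. For the numerator we need $\widehat{\Pr}_1\bigl(N_t\approx\widehat q(t-1)\bigr)$ to be bounded away from zero, or at least subexponentially small. This holds if $K_{t-1}/(t-1)\to1$ in subjective probability under state $1$. That is the subjectively correct-learning property, which follows from Theorem~\ref{thm:anonymous-learning} applied to the agents' own (correctly specified, with $\widehat q$) model.

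For the denominator we need $\widehat{\Pr}_0$ of the window to vanish. Also by Theorem~\ref{thm:anonymous-learning} in the subjective model, $K_{t-1}/(t-1)\to0$ in $\widehat{\Pr}_0$-probability. Then $N_t\le K_{t-1}$ forces $\widehat{\Pr}_0\bigl(N_t\ge(\widehat q-\delta/2)(t-1)\bigr)\to0$, since $\widehat q-\delta/2>0$ after shrinking $\delta$. The ratio in Step 1 therefore diverges, giving $L_t(m_t)\to\infty$. Hence $L_t(n_t)\to\infty$, and the infimum in (QL) diverges as well; the prior term $r_0$ is irrelevant.

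\textbf{Main obstacle.} The delicate point is the numerator in Step 2: we must show the subjective state-$1$ mass of the window near $\widehat q(t-1)$ does not vanish. This requires the subjective frequency $K_{t-1}/(t-1)$ to concentrate near $1$ in probability, not merely that accuracies $W_t\to1$. I would obtain it by applying Cesàro averaging to $\widehat{\Pr}_1(a_i=1)\to1$ together with Markov's inequality on $t-1-K_{t-1}$. Then $N_t$ given $K_{t-1}$ concentrates at $\widehat q K_{t-1}$ by Chebyshev, placing mass $\to1$ on the window $[(\widehat q-\delta/2)(t-1),m_t]$.
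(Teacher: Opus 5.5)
Your proposal is correct and follows essentially the paper's argument. You use subjective correct learning (the anonymous-learning theorem applied to the $\widehat q$-model) to get $N_t/(t-1)\to\widehat q$ under $\widehat{\Pr}_1$ and $N_t/(t-1)\to 0$ under $\widehat{\Pr}_0$; you then use Assumption~\ref{ass:upper-tail-lr} to cap the likelihood ratio on a window around $\widehat q(t-1)$ by its right-endpoint value, so that value is at least $\widehat{\Pr}_1(\text{window})/\widehat{\Pr}_0(\text{window})\to\infty$, and you propagate by monotonicity to all $n\ge y(t-1)$. The Chernoff bound in your Step 2 is unnecessary, while your explicit Ces\`aro-plus-Markov step (turning $\widehat{\Pr}_1(a_i=1)\to1$ into $K_{t-1}/(t-1)\to1$ in probability) usefully fills in a point the paper leaves implicit.
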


The intuition is simple.  From the viewpoint of the subjective model,
$\widehat q$ is the correct selection intensity.  Hence, under subjective
state $0$, the displayed fraction converges to zero, whereas under subjective
state $1$ it converges to $\widehat q$.  A shrinking state-$0$ probability
mass around $\widehat q$ must therefore carry an exploding average likelihood
ratio.  Upper-tail likelihood regularity propagates this divergence to every
display frequency strictly above $\widehat q$.

This observation yields a sharp extension of the previous result.

\begin{Proposition}[Failure under arbitrary fixed under-calibration]
	\label{prop:arbitrary-under-calibration}
	Suppose that private beliefs are two-sided unbounded and that
	Assumption~\ref{ass:upper-tail-lr} holds.  If
	\[
	0<\widehat q<q,
	\]
	then asymptotic learning fails.  More precisely, in state $\omega=0$,
	there exists $x_*\in(\widehat q/q,1)$ and an event $\mathcal S$ with
	\[
	\Pr_0(\mathcal S)>0
	\]
	such that, on $\mathcal S$,
	\[
	\frac{K_t}{t}\ge x_*
	\]
	eventually.  Moreover, uniformly over histories satisfying
	\[
	\frac{K_{t-1}}{t-1}\ge x_*,
	\]
	we have
	\[
	\Pr_0(a_t=1\mid\mathcal F_{t-1})
	\longrightarrow 1.
	\]
	Consequently,
	\[
	\liminf_{t\to\infty}\Pr_0(a_t=1)>0.
	\]
\end{Proposition}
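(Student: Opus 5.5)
The plan mirrors the proof of Proposition~\ref{prop:extreme-under-calibration}, but replaces the sign bound of Lemma~\ref{lem:upper-tail-sign} by the divergence in Lemma~\ref{lem:upper-tail-amplification}.

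\textbf{Choice of thresholds.} Fix $x_*\in(\widehat q/q,1)$ and set $y:=(q x_*+\widehat q)/2$, so that $\widehat q<y<q x_*$. Work in state $\omega=0$ with the natural filtration $\mathcal F_{t-1}$ of actions.

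\textbf{Step 1: uniform convergence of the action probability.} On any history with $K_{t-1}\ge x_*(t-1)$, (\ref{eq:objective-thinning}) and a Chernoff bound give
\[
\Pr_0\bigl(N_t<y(t-1)\mid\mathcal F_{t-1}\bigr)\le e^{-c(t-1)}
\]
for some $c>0$ depending only on $q x_*-y$. This bound is uniform over such histories.

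On the complementary event $N_t\ge y(t-1)$, Lemma~\ref{lem:upper-tail-amplification} gives $\widehat R_t(N_t)\ge M_t$ with $M_t\to\infty$ deterministically. By (\ref{eq:miscalibrated-decision}), agent $t$ then plays $1$ whenever $\ell(s_t)\ge -M_t$, and $s_t$ is independent of $\mathcal F_{t-1}$ and $N_t$. Hence, uniformly over these histories,
\[
\Pr_0(a_t=1\mid\mathcal F_{t-1})\ge \bigl(1-e^{-c(t-1)}\bigr)\rho_0(M_t)=:p_t\to1,
\]
because $\rho_0(M)=\Pr_0(\ell\ge -M)\to1$. This is the displayed convergence in the statement.

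\textbf{Step 2: a trapping region.} Let $\mathcal S$ be the event that $K_t\ge x_* t$ for all $t\ge T$, with $T$ to be chosen.

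First, enter the region. Pick $T$ large enough that $p_t\ge (1+x_*)/2$ for all $t\ge T$. The event $\{a_1=\dots=a_{T}=1\}$ puts the process in the region at time $T$. It has positive $\Pr_0$-probability because the private beliefs are two-sided unbounded: each agent plays $1$ whenever $\ell(s_t)$ exceeds minus her social log-likelihood ratio, which is finite after any finite history. Both $\widehat g_{\theta,t}$ are positive on the relevant counts since $\widehat q\in(0,1)$.

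Second, survival. From time $T$ on, I couple the actions with i.i.d.\ Bernoulli$\bigl((1+x_*)/2\bigr)$ variables $\xi_t$ such that $a_t\ge\xi_t$ as long as the region has not been exited. This is a standard domination argument, carried out by conditioning on $\mathcal F_{t-1}$ step by step. Starting from $K_T=T$, the random walk $T+\sum_{s=T+1}^t(\xi_s-x_*)$ has positive drift $(1-x_*)/2$. It therefore stays above $x_* t-$ level, i.e.\ keeps $T+\sum_{s\le t}\xi_s\ge x_* t$ for all $t$, with positive probability, by the law of large numbers and positive initial slack $T(1-x_*)$. On that event the region is never exited, so $a_t\ge\xi_t$ throughout and $K_t\ge x_*t$ for all $t\ge T$. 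Combined with the positive entry probability, this gives $\Pr_0(\mathcal S)>0$.

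\textbf{Step 3: conclusion.} By Step 1,
\[
\Pr_0(a_t=1)\ge \Pr_0(\mathcal S)\,p_t\to\Pr_0(\mathcal S)>0,
\]
so the $\liminf$ is positive and asymptotic learning fails.

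\textbf{Main obstacle.} The delicate step is the coupling in Step 2. The conditional probability bound holds only inside the region, so the domination must be stopped at the exit time. I would handle this by defining $\xi_t$ as the indicator that $U_t$ is at most $(1+x_*)/2$, and realizing $a_t$ as the indicator that $U_t$ is at most $\Pr_0(a_t=1\mid\mathcal F_{t-1})$, with the $U_t$ i.i.d.\ uniform. This matches the true conditional law, and the comparison then holds pathwise before exit.

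A second point to check is that Lemma~\ref{lem:upper-tail-amplification} gives the bound $M_t$ uniformly over $n\ge y(t-1)$ irrespective of the history. It does, since the subjective law $\widehat g$ does not depend on the realized history.
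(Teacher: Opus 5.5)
Your proposal is correct and follows essentially the same route as the paper. Both arguments use an intermediate threshold $y\in(\widehat q,qx_*)$, and combine a Hoeffding bound on $\{N_t<y(t-1)\}$ with Lemma~\ref{lem:upper-tail-amplification} to get $\Pr_0(a_t=1\mid\mathcal F_{t-1})\to1$ uniformly on the region. Both then seed the region with the all-ones event $\{a_1=\cdots=a_T=1\}$ and use a pathwise coupling with i.i.d.\ Bernoulli variables, stopped at the exit time.

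The only difference is cosmetic. The paper certifies survival of the comparison walk with the exponential supermartingale bound $\Pr(\tau<\infty)\le e^{-\lambda(1-x_*)T}<1$, which shows directly that the fixed slack $(1-x_*)T$ suffices. Your law-of-large-numbers argument needs one extra remark: the walk's infimum is a.s.\ finite, and an initial run of ones (or a larger $T$) pushes the slack past it.
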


Thus the severe-under-calibration condition $\widehat q<q\alpha_0$ is not
intrinsic to the mechanism.  It arises because the preceding argument uses
only the sign of the subjective likelihood ratio in the upper tail.  Under
Assumption~\ref{ass:upper-tail-lr}, the same upper-tail likelihood ratio
diverges, and therefore every fixed amount of under-calibration can sustain
a self-reinforcing region of incorrect actions.

The contrast with correct calibration is particularly stark.  At
$\widehat q=q$, agents correctly interpret both observed and missing actions,
and the correctly calibrated benchmark yields asymptotic learning.  By
contrast, for every fixed $\widehat q<q$, however close to $q$, sufficiently
high observed action frequencies are interpreted as overwhelming evidence
for state $1$, while under the true sampling technology those frequencies
can be generated endogenously by a sufficiently high stock of previous
action-$1$ choices.  Once the process enters such a region, this discrepancy
creates a strictly inward drift and gives the region a positive probability
of surviving forever.

\subsection{Learning efficiency and welfare}
	\label{sec:welfare-selection}
	
	The preceding results distinguish environments according to whether
	asymptotic learning obtains. They do not, however, compare the welfare
	consequences of different selection rules when agents are correctly
	calibrated. We now turn to this question.
	
	To obtain a transparent welfare comparison, we consider a
	one-predecessor version of the anonymous random-sampling environment.
	For this subsection only, agent $t\geq 2$ first draws one predecessor
	\[
	I_t \sim \operatorname{Unif}\{1,\ldots,t-1\}.
	\]
	Conditional on the sampled predecessor's action $a_{I_t}=a$, that action
	is displayed with probability $Q(a)$. Otherwise the agent observes a
	null outcome $\varnothing$. Thus her social observation is
	\[
	Y_t\in\{0,1,\varnothing\}.
	\]
	The display draw is conditionally independent of all other primitives,
	and agents know the selection rule $Q$.
	
	This benchmark is deliberately different from the expanding-sample
	environment studied in Section~3. There, the number of potentially
	displayed predecessors grows with calendar time. Here exactly one
	predecessor is sampled at every date. Hence a fixed probability of
	receiving no social information does not vanish asymptotically.
	
	We maintain the flat prior and state-matching payoff. Let
	\[
	p\in(0,1)
	\]
	denote the posterior probability of state $1$ generated by the private
	signal alone, and let $F_\theta$ denote its distribution conditional on
	state $\theta$. We impose the Bayesian no-introspection relation
	\[
	dF_0(p)=\frac{1-p}{p}\,dF_1(p),
	\]
	together with symmetry,
	\[
	F_1(p)=1-F_0(1-p).
	\]
	We assume that the private-posterior distributions are atomless and that,
	for some $C>0$ and $\alpha>1$,
	\begin{equation}
		F_1(p)\sim Cp^\alpha,
		\qquad p\downarrow 0.
		\label{eq:private-tail-selection}
	\end{equation}
	As before, the tail exponent $\alpha$ governs the frequency of extremely
	misleading private signals.
	
	For a selection rule $Q$, let
	\[
	W_T(Q):=\Pr(a_T=\omega)
	\]
	denote the ex ante welfare of agent $T$. The full-display benchmark is
	\[
	Q^F(0)=Q^F(1)=1.
	\]
	
	It will be useful to define
	\[
	m:=F_1(1/2)
	\]
	and
	\begin{equation}
		g(e):=eF_0(e)-(1-e)F_1(e).
		\label{eq:g-selection}
	\end{equation}
	The no-introspection relation implies
	\begin{equation}
		g(e)>0
		\quad\text{for every }e>0,
		\qquad
		g(e)\sim \frac{C}{\alpha-1}e^\alpha
		\quad\text{as }e\downarrow0.
		\label{eq:g-selection-asymptotic}
	\end{equation}
	Thus $g(e)$ is the net correction generated by one private signal when
	the social error frequency is $e$.
	
	We first consider neutral thinning.
	
	\begin{Proposition}[Neutral thinning]
		\label{prop:neutral-thinning-welfare}
		For $q\in[0,1]$, let
		\[
		Q_q^N(0)=Q_q^N(1)=q.
		\]
		
		\begin{enumerate}
			\item Under full display, $q=1$,
			\begin{equation}
				1-W_T(Q^F)
				\sim
				(C\log T)^{-1/(\alpha-1)}.
				\label{eq:full-display-welfare-rate}
			\end{equation}
			
			\item For every fixed $q<1$, asymptotic learning fails. More precisely,
			there exists a unique $e_q\in(0,m]$ satisfying
			\begin{equation}
				(1-q)(m-e_q)=qg(e_q),
				\label{eq:eq-neutral-selection}
			\end{equation}
			and
			\begin{equation}
				1-W_T(Q_q^N)\longrightarrow e_q>0.
				\label{eq:neutral-long-run-loss}
			\end{equation}
			For $q=0$, $e_0=m$, while $e_q$ is strictly decreasing in $q$.
			
			\item As $q\uparrow1$,
			\begin{equation}
				e_q
				\sim
				\left(
				\frac{(\alpha-1)m}{C}
				\right)^{1/\alpha}
				(1-q)^{1/\alpha}.
				\label{eq:neutral-selection-local}
			\end{equation}
		\end{enumerate}
	\end{Proposition}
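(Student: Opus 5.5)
The plan is to reduce the model to a deterministic scalar recursion for the average error of predecessors, and then read off all three parts from it. Under the flat prior and the symmetry $F_1(p)=1-F_0(1-p)$, I would first check by induction that every agent's error probability is the same in both states. Write $e_t:=1-W_t$ and $\bar e_{t-1}:=\frac1{t-1}\sum_{i<t}e_i$. A uniformly drawn predecessor then matches the state with probability $1-\bar e_{t-1}$ in either state.

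\textbf{Step 1: the recursion for $e_t$.} If $Y_t=\varnothing$, which has probability $1-q$ independently of the state because the thinning is neutral, the agent acts on her private posterior alone. Her error is then $F_1(1/2)=m$. If $Y_t=a$ is displayed, the social posterior on $\omega=a$ is $1-\bar e_{t-1}$. The agent therefore follows $a$ unless her private posterior points the other way by more than $\bar e_{t-1}$; with atomless distributions the tie-breaking rule is irrelevant. Computing in state $1$:
\begin{itemize}
\item after observing $1$, with probability $1-\bar e$, she errs when $p<\bar e$, contributing $(1-\bar e)F_1(\bar e)$;
\item after observing $0$, with probability $\bar e$, she errs when $p\le 1-\bar e$, contributing $\bar e(1-F_0(\bar e))$ by symmetry.
\end{itemize}
Together these give error $\bar e-g(\bar e)$. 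Hence
\[
e_t=(1-q)m+q\bigl(\bar e_{t-1}-g(\bar e_{t-1})\bigr),
\qquad
\bar e_t-\bar e_{t-1}=\frac{h_q(\bar e_{t-1})}{t},
\qquad h_q(e):=(1-q)(m-e)-q\,g(e).
\]
Using $e\,dF_0(e)=(1-e)\,dF_1(e)$, I obtain $g'(e)=F_0(e)+F_1(e)>0$ and $g(0)=0$. I will take \eqref{eq:g-selection-asymptotic} as given. Since $e_1=m$, one also checks that $\bar e_t\in(0,m]$ for all $t$.

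\textbf{Step 2: part 2 and monotonicity in $q$.} For $q<1$, the function $h_q$ is strictly decreasing on $[0,m]$, with $h_q(0)=(1-q)m>0$ and $h_q(m)=-qg(m)\le0$. So it has a unique root $e_q\in(0,m]$, which gives \eqref{eq:eq-neutral-selection}. The recursion $\bar e_t=\bar e_{t-1}+h_q(\bar e_{t-1})/t$ has step size $1/t$, which is non-summable and tends to $0$, and its drift points toward $e_q$. By a standard stochastic-approximation argument (deterministic here), $\bar e_t\to e_q$. Then $e_T=\bar e_{T-1}+h_q(\bar e_{T-1})\to e_q$, which is \eqref{eq:neutral-long-run-loss}. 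Setting $q=0$ gives $e_0=m$. Implicit differentiation of $h_q(e_q)=0$ gives
\[
\frac{de_q}{dq}=-\frac{(m-e_q)+g(e_q)}{(1-q)+qg'(e_q)}<0,
\]
so $e_q$ is strictly decreasing.

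\textbf{Step 3: part 3.} As $q\uparrow1$, $e_q\to0$ because $h_q$ converges to $-g$ uniformly and $g>0$ on $(0,m]$. The root equation then reads $(1-q)(m-e_q)=q\,g(e_q)\sim \frac{C}{\alpha-1}e_q^\alpha$. This yields \eqref{eq:neutral-selection-local} directly.

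\textbf{Step 4: part 1, the main obstacle.} With $q=1$ the recursion becomes $\bar e_t=\bar e_{t-1}-g(\bar e_{t-1})/t$. Since $g>0$ and $g$ has no positive root, $\bar e_t\downarrow0$. The rate comes from comparison with the ODE $\dot e=-\frac{C}{\alpha-1}e^\alpha$ in the time variable $\log t$. I would make this rigorous with the substitution $u_t:=\bar e_t^{\,1-\alpha}$. A Taylor expansion gives
\[
u_t-u_{t-1}=(\alpha-1)\bar e_{t-1}^{-\alpha}\,\frac{g(\bar e_{t-1})}{t}\,(1+o(1))=\frac{C}{t}(1+o(1)),
\]
where the $o(1)$ term is controlled because $g(\bar e)/(t\bar e)\to0$. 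By Stolz--Ces\`aro, $u_T\sim C\log T$, so $\bar e_T\sim(C\log T)^{-1/(\alpha-1)}$. Finally, $1-W_T(Q^F)=\bar e_{T-1}-g(\bar e_{T-1})\sim\bar e_{T-1}$, and $\log(T-1)\sim\log T$, which gives \eqref{eq:full-display-welfare-rate}. The delicate point is justifying the Taylor step uniformly. This needs the relative step $g(\bar e)/(t\bar e)\to0$, which follows because $\bar e_t\to0$ and $g(e)=O(e^\alpha)$.
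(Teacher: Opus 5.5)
Your proposal is correct and follows essentially the same route as the paper: the same scalar recursion $\bar e_t-\bar e_{t-1}=h_q(\bar e_{t-1})/t$ (the paper's $H_q$), the same deterministic stochastic-approximation argument for $q<1$, and the same substitution $\bar e_t^{\,1-\alpha}$ to get the $(C\log T)^{-1/(\alpha-1)}$ rate at $q=1$. The one minor difference is how you show that $e_q$ is strictly decreasing. You use implicit differentiation, which is legitimate because $g'=F_0+F_1$ is continuous under atomlessness. The paper instead rewrites the root equation as $(1-q)/q=g(e_q)/(m-e_q)$ and compares the monotone sides, which needs no differentiability.
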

	
	Proposition~\ref{prop:neutral-thinning-welfare} separates two margins of
	learning efficiency. Under full display, the private-signal tail
	determines the speed at which welfare approaches its first-best value:
	\[
	1-W_T(Q^F)
	\asymp
	(\log T)^{-1/(\alpha-1)}.
	\]
	Under neutral thinning, by contrast, any fixed $q<1$ generates a positive
	long-run welfare loss. As the selection rule approaches full display,
	this loss vanishes according to
	\[
	1-W_\infty(Q_q^N)
	\asymp
	(1-q)^{1/\alpha}.
	\]
	Hence the same private-information tail governs two different
	comparative-statics margins:
	\[
	\frac{1}{\alpha-1}
	\quad\text{governs the speed of social learning,}
	\]
	whereas
	\[
	\frac{1}{\alpha}
	\quad\text{governs local robustness to neutral thinning.}
	\]
	
	The preceding result varies the overall intensity of selection while
	keeping selection action-neutral. We next show that the composition of
	the selection rule can be even more important than its average
	intensity. In particular, selective suppression need not destroy any
	information at all.
	
	\begin{Proposition}[Lossless selective observation]
		\label{prop:lossless-selection}
		Suppose that one action is displayed with probability one:
		\[
		\max\{Q(0),Q(1)\}=1.
		\]
		Then the social observation $Y_t\in\{0,1,\varnothing\}$ reveals the
		sampled predecessor's action exactly. Consequently, the resulting social
		experiment is Blackwell equivalent to full display, and
		\begin{equation}
			W_T(Q)=W_T(Q^F)
			\qquad
			\text{for every }T.
			\label{eq:lossless-selection-welfare}
		\end{equation}
		In particular,
		\begin{equation}
			1-W_T(Q)
			\sim
			(C\log T)^{-1/(\alpha-1)}.
			\label{eq:lossless-selection-rate}
		\end{equation}
	\end{Proposition}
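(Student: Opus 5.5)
The argument has two steps. First, show that the one-predecessor social observation under $Q$ is a garbling-equivalent of full display. Then note that equilibrium welfare depends on the social experiment only through its joint law with the state, so the two welfare sequences coincide date by date; the rate follows from Proposition~\ref{prop:neutral-thinning-welfare}(1).

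\textbf{Revealing the sampled action.} Suppose without loss of generality that $Q(1)=1$; the case $Q(0)=1$ is symmetric. Define the deterministic map
\[
\phi(0)=0,\qquad \phi(1)=1,\qquad \phi(\varnothing)=0.
\]
Since an action-$1$ predecessor is always displayed, the outcome $Y_t=\varnothing$ can only arise when $a_{I_t}=0$. Hence $\phi(Y_t)=a_{I_t}$ almost surely.

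Conversely, $Y_t$ is obtained from $a_{I_t}$ by an independent garbling: given $a_{I_t}=0$, set $Y_t=0$ with probability $Q(0)$ and $\varnothing$ otherwise. The display draw is independent of the state, of all signals and of everything else conditional on $a_{I_t}$. So, conditional on $a_{I_t}$, the extra randomness in $Y_t$ is uninformative about $\omega$ and about $s_t$.

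Consequently the $\sigma$-fields satisfy $\sigma(Y_t)\supseteq\sigma(a_{I_t})$, and $Y_t$ is conditionally independent of $(\omega,s_t)$ given $a_{I_t}$. This gives
\[
\Pr(\omega=1\mid Y_t,s_t)=\Pr(\omega=1\mid a_{I_t},s_t).
\]
So the two social experiments are Blackwell equivalent at every date, whatever the predecessors' strategies.

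\textbf{Equality of welfare by induction.} The main point needing care is that the law of $a_{I_t}$ is itself endogenous. I would argue by induction on $t$ that, under $Q$ and under $Q^F$, the sequence of conditional action distributions $\Pr_\theta(a_i=1)$, $i<t$, is the same in both environments.

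At $t=1$ there is no social information, so the claim is trivial. Suppose it holds for all $i<t$. The law of $a_{I_t}$ given $\theta$ is $\frac{1}{t-1}\sum_{i<t}\Pr_\theta(a_i=\cdot)$, which is therefore identical in both environments.

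By the equivalence step, agent $t$'s posterior under $Q$ is a function of $(a_{I_t},s_t)$, and it coincides with the full-display posterior. Under the tie-breaking rule her decision rule is therefore the same measurable function of $(a_{I_t},s_t)$. It follows that $\Pr_\theta(a_t=1)$ agrees in both environments, which completes the induction.

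\textbf{Conclusion.} Taking the case $t=T$ gives $W_T(Q)=W_T(Q^F)$ for every $T$. The asymptotic rate then follows directly from \eqref{eq:full-display-welfare-rate} in Proposition~\ref{prop:neutral-thinning-welfare}(1).

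The only step requiring care is the induction. It is needed because Blackwell equivalence at a single date does not by itself fix the endogenous distribution of the sampled action. Here, however, that distribution is pinned down by the previous dates' behavior, so the induction closes without difficulty.
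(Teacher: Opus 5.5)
Your proposal is correct and follows the paper's proof essentially step for step. It uses the deterministic recovery map $\phi$ (the paper's $\psi$) with $a_{I_t}=\phi(Y_t)$, the conditional independence of the display coin from $(\omega,s_t)$ given $a_{I_t}$, Blackwell equivalence, an induction on $t$, and the rate from Proposition~\ref{prop:neutral-thinning-welfare}(1). Your induction tracks only the state-conditional marginals $\Pr_\theta(a_i=1)$, which is slightly sharper than the paper's claim about the whole action process. It is also exactly what is needed, because the uniform draw $I_t$ and the independent fresh signal $s_t$ make the averaged marginal law of $a_{I_t}$ the only input to agent $t$'s problem.
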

	
	The result follows from the informational content of missingness. For
	example, suppose $Q(1)=1$. Then observing action $1$ identifies a sampled
	action $1$, while either observing action $0$ or observing
	$\varnothing$ identifies a sampled action $0$. The distinction between
	$0$ and $\varnothing$, conditional on the sampled action being $0$, is
	generated only by an independent display randomization and therefore
	contains no additional information about the state. Thus selective
	display is informationally equivalent to observing the sampled action
	itself.
	
	This observation yields a sharp comparison between selection functions
	with the same display intensity.
	
	\begin{corollary}[Selection composition versus display intensity]
		\label{cor:selection-composition}
		Fix
		\[
		\bar q\in[1/2,1),
		\]
		and consider the two selection rules
		\[
		Q_{\bar q}^N(0)=Q_{\bar q}^N(1)=\bar q
		\]
		and
		\[
		Q_{\bar q}^S(1)=1,
		\qquad
		Q_{\bar q}^S(0)=2\bar q-1.
		\]
		Both rules have the same ex ante probability $\bar q$ of displaying the
		sampled predecessor's action. Nevertheless,
		\begin{equation}
			1-W_T(Q_{\bar q}^N)
			\longrightarrow
			e_{\bar q}>0,
			\label{eq:neutral-capacity-loss}
		\end{equation}
		whereas
		\begin{equation}
			1-W_T(Q_{\bar q}^S)
			\sim
			(C\log T)^{-1/(\alpha-1)}
			\longrightarrow0.
			\label{eq:selective-capacity-learning}
		\end{equation}
		Hence, for all sufficiently large $T$,
		\[
		W_T(Q_{\bar q}^S)>W_T(Q_{\bar q}^N).
		\]
	\end{corollary}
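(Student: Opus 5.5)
The corollary follows by applying Proposition~\ref{prop:neutral-thinning-welfare} to $Q_{\bar q}^N$ and Proposition~\ref{prop:lossless-selection} to $Q_{\bar q}^S$, then comparing the two limits. Before that, I would check that both rules are admissible and have the same display intensity. Since $\bar q\in[1/2,1)$, we have $2\bar q-1\in[0,1)$, so $Q_{\bar q}^S$ is a valid selection rule.

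To compare intensities, I would use the flat prior and the symmetry $F_1(p)=1-F_0(1-p)$. Under these, together with the tie-breaking and state-matching payoffs, the symmetric equilibrium action process has $\Pr(a_{I_t}=1)=1/2$ ex ante. The ex ante display probability under $Q_{\bar q}^S$ is then $\tfrac12\cdot 1+\tfrac12(2\bar q-1)=\bar q$, which equals the neutral rate. Strictly speaking the corollary only needs the welfare comparison, so this step is interpretive. If exact symmetry of the action marginal is delicate, for instance because tie-breaking toward action $1$ breaks it on a null set, I would note that atomlessness of $F_\theta$ makes ties probability zero.

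For the neutral rule, $\bar q<1$ is fixed, so part~2 of Proposition~\ref{prop:neutral-thinning-welfare} applies directly. It gives a unique $e_{\bar q}\in(0,m]$ solving $(1-\bar q)(m-e_{\bar q})=\bar q\,g(e_{\bar q})$, with $1-W_T(Q_{\bar q}^N)\to e_{\bar q}$. I would confirm $e_{\bar q}>0$: at $e=0$ the left side is $(1-\bar q)m>0$ while $g(0)=0$, so zero is not a solution. Here $m=F_1(1/2)>0$ follows from unbounded beliefs and symmetry.

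For the selective rule, $Q_{\bar q}^S(1)=1$, so $\max\{Q(0),Q(1)\}=1$ and Proposition~\ref{prop:lossless-selection} gives $W_T(Q_{\bar q}^S)=W_T(Q^F)$ for every $T$. Together with part~1 of Proposition~\ref{prop:neutral-thinning-welfare}, this yields the rate $(C\log T)^{-1/(\alpha-1)}\to0$.

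The ranking then follows. Because $1-W_T(Q_{\bar q}^S)\to0<e_{\bar q}$ and $1-W_T(Q_{\bar q}^N)\to e_{\bar q}$, there is $T_0$ such that for $T\ge T_0$,
\[
1-W_T(Q_{\bar q}^S)<e_{\bar q}/2<1-W_T(Q_{\bar q}^N),
\]
which gives $W_T(Q_{\bar q}^S)>W_T(Q_{\bar q}^N)$.

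No step is a real obstacle, since the two propositions do all the work. The only point needing care is the boundary case $\bar q=1/2$, where $Q^S(0)=0$. The lossless argument still covers it, because seeing $\varnothing$ reveals action $0$ exactly, so no separate treatment is needed.
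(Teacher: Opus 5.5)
Your proposal is correct and follows the paper's proof essentially step for step. You check equal display intensity $\bar q$ via an ex ante action share of $1/2$, then invoke Proposition~\ref{prop:neutral-thinning-welfare} for $Q_{\bar q}^N$ and Proposition~\ref{prop:lossless-selection} for $Q_{\bar q}^S$, and compare the limits. One point to make explicit, as the paper does: $Q_{\bar q}^S$ treats the two actions asymmetrically, so you should first use Proposition~\ref{prop:lossless-selection} to identify its action process with the symmetric full-display one, and only then conclude $\Pr(a_{I_t}=1)=1/2$ under $Q_{\bar q}^S$.
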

	
	Corollary~\ref{cor:selection-composition} shows that the amount of
	displayed social data is not sufficient to rank selection technologies.
	Under neutral thinning, the null realization is uninformative: with
	probability $1-\bar q$, the agent learns nothing from the sampled
	predecessor. Since only one predecessor is sampled, this information loss
	does not disappear over time.
	
	Under $Q_{\bar q}^S$, by contrast, missingness is itself informative.
	Because action $1$ is always displayed, a null realization identifies
	action $0$. The selection rule therefore encodes the suppressed action
	through the pattern of missingness and implements the same social
	experiment as full display.
	
	Thus selection intensity and selection composition are distinct welfare
	dimensions. Even holding fixed the expected amount of displayed data,
	a more selective observation rule can yield strictly higher long-run
	welfare when agents correctly understand how selection operates.
	This complements the earlier learning results: selection need not be
	harmful because agents learn not only from displayed actions, but also
	from the endogenous information contained in their absence.

\appendix
	
\section{Appendix}

\subsection{Proof of Theorem \ref{thm:unaware-nonfull}}
\label{app:proof-unaware-nonfull}

\begin{proof}
	The proof proceeds in three steps.
	
	\paragraph{Step 1: An arbitrarily long all-$1$ history drives the naive
		likelihood ratio to $+\infty$.}
	
	Let $R_n$ denote the public log-likelihood ratio perceived by a completely
	unaware agent after observing the history
	\[
	h_n=(\underbrace{1,\ldots,1}_{n\text{ times}}).
	\]
	Since the agent treats the displayed history as the complete action history,
	$R_n$ is exactly the public log-likelihood ratio generated by $n$ consecutive
	observations of action $1$ in her subjective model.
	
	Given a public log-likelihood ratio $R$, the next agent chooses action $1$ if
	and only if
	\[
	R+\ell(s)\geq 0.
	\]
	For $\theta\in\{0,1\}$, define
	\[
	\rho_\theta(R)
	:=
	\Pr_\theta\!\left(\ell(s)\geq -R\right).
	\]
	Hence, after observing one additional action $1$,
	\[
	R_{n+1}
	=
	R_n
	+
	\log\frac{\rho_1(R_n)}{\rho_0(R_n)}.
	\tag{A.1}
	\label{eq:unaware-recursion}
	\]
	
	We first show that
	\[
	\rho_1(R)>\rho_0(R)
	\qquad
	\text{for every finite }R.
	\tag{A.2}
	\label{eq:strict-action-mlr}
	\]
	Let
	\[
	c:=-R.
	\]
	Because the private log-likelihood ratio is finite,
	the two signal laws are mutually absolutely continuous and
	\[
	\frac{dP_1}{dP_0}=e^{\ell}.
	\]
	
	Suppose first that $c\geq0$. Then
	\begin{align}
		\rho_1(R)-\rho_0(R)
		&=
		\Pr_1(\ell\geq c)-\Pr_0(\ell\geq c)
		\nonumber\\
		&=
		E_0\!\left[
		(e^\ell-1)\mathbf 1\{\ell\geq c\}
		\right]
		>0,
		\tag{A.3}
	\end{align}
	where strictness follows from the upper unboundedness of private beliefs.
	
	If instead $c<0$, using complements gives
	\begin{align}
		\rho_1(R)-\rho_0(R)
		&=
		\Pr_0(\ell<c)-\Pr_1(\ell<c)
		\nonumber\\
		&=
		E_0\!\left[
		(1-e^\ell)\mathbf 1\{\ell<c\}
		\right]
		>0,
		\tag{A.4}
	\end{align}
	where strictness follows from lower unboundedness.
	
	It follows from \eqref{eq:unaware-recursion} that $(R_n)$ is strictly
	increasing. We next show that it cannot converge to a finite limit.
	
	Suppose, toward a contradiction, that
	\[
	R_n\uparrow \bar R<+\infty.
	\]
	Write
	\[
	c_n:=-R_n,
	\qquad
	\bar c:=-\bar R.
	\]
	Then $c_n\downarrow\bar c$.
	
	Suppose first that $\bar c\geq0$. Choose $d>\bar c$ such that
	\[
	\Pr_0(\ell\geq d)>0,
	\]
	which is possible by upper unboundedness. For all sufficiently large $n$,
	$c_n<d$, while $c_n\geq\bar c\geq0$. Hence
	\begin{align}
		\rho_1(R_n)-\rho_0(R_n)
		&=
		E_0\!\left[
		(e^\ell-1)\mathbf 1\{\ell\geq c_n\}
		\right]
		\nonumber\\
		&\geq
		E_0\!\left[
		(e^\ell-1)\mathbf 1\{\ell\geq d\}
		\right]
		=:\delta>0.
		\tag{A.5}
	\end{align}
	Since $\rho_0(R_n)\leq1$,
	\[
	\frac{\rho_1(R_n)}{\rho_0(R_n)}
	=
	1+
	\frac{\rho_1(R_n)-\rho_0(R_n)}
	{\rho_0(R_n)}
	\geq
	1+\delta.
	\]
	Therefore
	\[
	R_{n+1}-R_n
	\geq
	\log(1+\delta)>0
	\]
	for all sufficiently large $n$, contradicting convergence of $(R_n)$.
	
	Now suppose that $\bar c<0$. Choose $d<\bar c$ such that
	\[
	\Pr_0(\ell<d)>0,
	\]
	which is possible by lower unboundedness. For all sufficiently large $n$,
	$c_n<0$ and $d<c_n$. Therefore
	\begin{align}
		\rho_1(R_n)-\rho_0(R_n)
		&=
		E_0\!\left[
		(1-e^\ell)\mathbf 1\{\ell<c_n\}
		\right]
		\nonumber\\
		&\geq
		E_0\!\left[
		(1-e^\ell)\mathbf 1\{\ell<d\}
		\right]
		=:\delta>0.
		\tag{A.6}
	\end{align}
	The same argument again yields
	\[
	R_{n+1}-R_n\geq\log(1+\delta)>0
	\]
	eventually, a contradiction.
	
	Thus
	\[
	R_n\longrightarrow+\infty.
	\tag{A.7}
	\label{eq:R-diverges}
	\]
	
	\paragraph{Step 2: Every sufficiently late learner observes arbitrarily many
		displayed action-$1$ predecessors.}
	
	Let $R_0$ denote the prior log-likelihood ratio. Since $(R_n)$ is increasing,
	every social history that can arise under the selection rule generates a naive
	social log-likelihood ratio at least $R_0$.
	
	Define
	\[
	B_i
	:=
	\mathbf 1\{\ell(s_i)\geq -R_0\}.
	\]
	Under the true state $\omega=0$, the random variables $(B_i)_{i\geq1}$ are
	i.i.d.\ Bernoulli with
	\[
	p
	:=
	\Pr_0(\ell(s)\geq -R_0)>0,
	\tag{A.8}
	\]
	where positivity follows from upper unboundedness.
	
	Whenever $B_i=1$, agent $i$ necessarily chooses action $1$, regardless of her
	social observation. Indeed, her social log-likelihood ratio is at least $R_0$,
	so
	\[
	R+\ell(s_i)\geq R_0+\ell(s_i)\geq0.
	\]
	Hence
	\[
	B_i=1
	\quad\Longrightarrow\quad
	a_i=1.
	\tag{A.9}
	\]
	
	Let
	\[
	N_t
	:=
	\sum_{i<t}D_{i,t}\mathbf 1\{a_i=1\}
	\]
	be the number of displayed action-$1$ predecessors observed by agent $t$.
	
	Using the fresh-sampling representation, let $(U_{i,t})$ be independent
	$\mathrm{Unif}[0,1]$ variables and write
	\[
	D_{i,t}
	=
	\mathbf 1\{U_{i,t}\leq Q(a_i)\}.
	\]
	On the event $\{B_i=1\}$, we have $a_i=1$, and therefore
	\[
	D_{i,t}
	=
	\mathbf 1\{U_{i,t}\leq q\}.
	\]
	Consequently,
	\[
	N_t
	\geq
	\widetilde N_t
	:=
	\sum_{i<t}
	B_i\mathbf 1\{U_{i,t}\leq q\}.
	\tag{A.10}
	\]
	For each fixed $t$, the summands on the right-hand side are i.i.d.\
	Bernoulli random variables with success probability $pq$. Hence
	\[
	\widetilde N_t
	\sim
	\operatorname{Binomial}(t-1,pq).
	\tag{A.11}
	\]
	
	Fix any $K\in\mathbb N$. Standard binomial lower-tail bounds imply that
	there exist constants $C_K<\infty$ and $c_K>0$ such that
	\[
	\Pr_0(N_t<K)
	\leq
	\Pr_0(\widetilde N_t<K)
	\leq
	C_K e^{-c_Kt}
	\tag{A.12}
	\]
	for all sufficiently large $t$. Therefore
	\[
	\sum_{t=1}^{\infty}\Pr_0(N_t<K)<\infty.
	\]
	By the Borel--Cantelli lemma,
	\[
	\Pr_0(N_t<K\ \text{i.o.})=0.
	\]
	Since $K$ was arbitrary,
	\[
	N_t\longrightarrow+\infty
	\qquad
	\Pr_0\text{-a.s.}
	\tag{A.13}
	\label{eq:N-diverges}
	\]
	
	\paragraph{Step 3: The probability of the incorrect action converges to one.}
	
	Because action $0$ is never displayed, agent $t$ observes exactly an
	all-$1$ history of length $N_t$. Her perceived social log-likelihood ratio is
	therefore
	\[
	R_{N_t}.
	\]
	By \eqref{eq:R-diverges} and \eqref{eq:N-diverges},
	\[
	R_{N_t}\longrightarrow+\infty
	\qquad
	\Pr_0\text{-a.s.}
	\tag{A.14}
	\]
	
	The current private signal $s_t$ is independent of agent $t$'s social
	observation. Hence, conditional on $N_t$,
	\[
	a_t=0
	\quad\Longleftrightarrow\quad
	\ell(s_t)<-R_{N_t},
	\]
	up to the specified tie-breaking convention. Therefore
	\[
	\Pr_0(a_t=0\mid N_t)
	=
	\Pr_0\!\left(
	\ell(s_t)<-R_{N_t}
	\right).
	\tag{A.15}
	\]
	Since $R_{N_t}\to+\infty$ almost surely and
	$\ell(s_t)>-\infty$ almost surely,
	\[
	\Pr_0\!\left(
	\ell(s_t)<-R_{N_t}
	\right)
	\longrightarrow0
	\qquad
	\Pr_0\text{-a.s.}
	\]
	The conditional probabilities are bounded by one, so dominated convergence
	implies
	\[
	\Pr_0(a_t=0)\longrightarrow0.
	\]
	Equivalently,
	\[
	\Pr_0(a_t=1)\longrightarrow1.
	\]
	This proves the theorem.
\end{proof}

\subsection{Correct calibration without full support}
\label{subsec:aware-nonfullsupport}

	The proof of Theorem 2 \ref{thm:aware-nonfull} proceeds in four steps.
	
	\medskip
	\noindent
	\textbf{Step 1: A nested coupling of sampling process.}
	
	Extend the objective probability space by an i.i.d.\ sequence
	\[
	(U_i)_{i\geq1},
	\qquad
	U_i\sim\mathrm{Unif}[0,1],
	\]
	independent of all primitives of the original model.  Define
	\[
	Z_i
	:=
	\mathbf 1\{a_i=1\}\mathbf 1\{U_i\leq q\}
	\]
	and let
	\[
	\mathcal G_t
	:=
	\sigma(Z_1,\ldots,Z_{t-1}).
	\]
	By construction,
	\[
	\mathcal G_t\subseteq\mathcal G_{t+1}.
	\]
	
	The variables $(Z_i)$ are used only by the analyst.  They neither enter
	the agents' information nor affect their strategies or actions.  In
	particular, the action process $(a_t)$ continues to be generated by the
	original full-awareness equilibrium $\sigma$.
	
	To see why this auxiliary filtration is useful, consider agent $t$'s
	actual fresh observation.  For each predecessor $i<t$, write
	\[
	Y_{i,t}
	:=
	\begin{cases}
		1,
		&
		\text{if $a_i=1$ and $i$ is displayed to agent $t$},\\
		\varnothing,
		&
		\text{otherwise}.
	\end{cases}
	\]
	Since $Q(0)=0$ and $Q(1)=q$, conditional on the realized action history,
	\[
	\Pr(Y_{i,t}=1\mid a_i)
	=
	q\mathbf 1\{a_i=1\},
	\]
	independently across $i<t$.  The same is true of $Z_i$:
	\[
	\Pr(Z_i=1\mid a_i)
	=
	q\mathbf 1\{a_i=1\},
	\]
	independently across $i$.
	
	Consequently, for every fixed $t$,
	\begin{equation}
		\label{eq:nested-replica}
		\left(
		\omega,
		Y_{1,t},\ldots,Y_{t-1,t}
		\right)
		\overset{d}{=}
		\left(
		\omega,
		Z_1,\ldots,Z_{t-1}
		\right).
	\end{equation}
	Thus $\mathcal G_t$ generates exactly the same statistical experiment
	about the state as agent $t$'s fresh social observation, although the
	sequence $(\mathcal G_t)$ is nested across calendar time.
	
	Define the posterior based on this auxiliary experiment by
	\[
	B_t
	:=
	\Pr^\sigma(\omega=1\mid\mathcal G_t).
	\]
	Since $(\mathcal G_t)$ is increasing,
	$(B_t)$ is a bounded martingale.  Hence there exists a random variable
	$B_\infty$ such that
	\begin{equation}
		\label{eq:Bt-convergence}
		B_t\longrightarrow B_\infty
		\qquad
		\Pr^\sigma\text{-a.s. and in }L^1.
	\end{equation}
	
	Let
	\[
	v(p):=\max\{p,1-p\}
	\]
	and define the optimal accuracy based only on the social experiment by
	\[
	V_t
	:=
	\mathbb E^\sigma[v(B_t)].
	\]
	Also let
	\[
	W_t
	:=
	\Pr^\sigma(a_t=\omega)
	\]
	denote agent $t$'s equilibrium probability of choosing the correct
	action.

	\medskip
	\noindent
	\textbf{Step 2: The value of private information.}
	
	Let $f_0$ and $f_1$ denote the private-signal densities under the two
	states with respect to a common dominating measure $\nu$.  For a prior
	$p\in[0,1]$, define
	\[
	\Phi(p)
	:=
	\int
	\max\left\{
	p f_1(s),
	(1-p)f_0(s)
	\right\}
	\,\nu(ds).
	\]
	Thus $\Phi(p)$ is the optimal probability of correctly matching the
	state after observing one fresh private signal when the prior
	probability of state $1$ is $p$.
	
	Define the private-signal improvement function
	\[
	\Delta(p)
	:=
	\Phi(p)-v(p).
	\]
	Clearly,
	\[
	\Delta(p)\geq0.
	\]
	
	By the distributional equivalence
	\eqref{eq:nested-replica}, agent $t$'s actual fresh social observation
	induces the same distribution of social posteriors as $B_t$.  Moreover,
	the current private signal $s_t$ is conditionally independent of the
	past given the state.  Since agent $t$ is Bayesian and chooses
	optimally,
	\begin{equation}
		\label{eq:Wt-Phi}
		W_t
		=
		\mathbb E^\sigma[\Phi(B_t)].
	\end{equation}
	Therefore
	\begin{equation}
		\label{eq:improvement-gap}
		W_t-V_t
		=
		\mathbb E^\sigma[\Delta(B_t)].
	\end{equation}
	
	We next record the implication of unbounded private beliefs.  We claim
	that
	\begin{equation}
		\label{eq:strict-improvement}
		\Delta(p)>0
		\qquad
		\text{for every }p\in(0,1).
	\end{equation}
	
	Indeed, suppose first that $p<1/2$.  Without a private signal, the
	optimal action is $0$, so
	\[
	v(p)=1-p.
	\]
	Hence
	\begin{align}
		\Delta(p)
		&=
		\int
		\left[
		p f_1(s)-(1-p)f_0(s)
		\right]_+
		\,\nu(ds).
		\label{eq:Delta-lower-prior}
	\end{align}
	Because private beliefs are unbounded above, for every finite odds
	ratio $(1-p)/p$ there is a positive-probability set of signals on which
	\[
	\frac{f_1(s)}{f_0(s)}
	>
	\frac{1-p}{p}.
	\]
	The integrand in \eqref{eq:Delta-lower-prior} is strictly positive on
	this set, and hence
	\[
	\Delta(p)>0.
	\]
	
	Similarly, if $p>1/2$,
	\[
	v(p)=p
	\]
	and
	\[
	\Delta(p)
	=
	\int
	\left[
	(1-p)f_0(s)-p f_1(s)
	\right]_+
	\,\nu(ds).
	\]
	Unboundedness of private beliefs below implies that this expression is
	strictly positive.  The case $p=1/2$ follows immediately as well.
	Therefore
	\[
	\Delta(p)=0
	\quad\Longleftrightarrow\quad
	p\in\{0,1\}.
	\]
	
	Finally, $\Delta$ is continuous on $[0,1]$.  Indeed, the integrand in
	$\Phi(p)$ is pointwise continuous in $p$ and bounded by
	$f_0(s)+f_1(s)$, so continuity follows from dominated convergence.
	
	\begin{Lemma}
		\label{lem:quadratic-information}
		Define the posterior based on this auxiliary nested history by
		$
		B_t
		:=
		\mathbb P^\sigma(\omega=1\mid\mathcal G_t).
		$.
		Let
		$
		V_t
		:=
		\mathbb E^\sigma
		\left[
		\max\{B_t,1-B_t\}
		\right]
		$
		be the optimal probability of matching the state using only
		$\mathcal G_t$, and let
		$
		W_t
		:=
		\mathbb P^\sigma(a_t=\omega)
		$
		be agent $t$'s equilibrium probability of choosing the correct action.
		
		Then
		\[I^\sigma(\omega;Z_t\mid\mathcal G_t)
		\geq
		2q^2(W_t-V_t)^2 .	\]
		Here mutual information and KL divergence are computed using natural
		logarithms.
	\end{Lemma}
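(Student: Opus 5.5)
The plan is to compute the conditional mutual information realization by realization of $\mathcal G_t$, bound it from below by Pinsker's inequality, and then compare the resulting quadratic expression with the conditional accuracy gain $W_t-V_t$ through a pointwise inequality followed by Jensen.

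\emph{Step 1 (conditional law of $Z_t$).} Fix a realization $g$ of $\mathcal G_t$ and write $p:=B_t=\Pr^\sigma(\omega=1\mid\mathcal G_t=g)$ and
\[
x_\theta(g):=\Pr^\sigma(a_t=1\mid\omega=\theta,\mathcal G_t=g),
\]
as in the subsection on informative missing observations. Since $Z_t=\mathbf 1\{a_t=1\}\mathbf 1\{U_t\le q\}$ with $U_t$ independent of everything else, $Z_t\mid(\omega=\theta,g)\sim\operatorname{Bernoulli}(qx_\theta(g))$. Let $\bar x:=px_1+(1-p)x_0$; then $Z_t\mid g\sim\operatorname{Bernoulli}(q\bar x)$. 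Note that $x_\theta$ is defined by conditioning on $g$ even though $a_t$ is driven by agent $t$'s own fresh sample; no conditional-independence claim is needed, only this definition.

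\emph{Step 2 (Pinsker).} Write the conditional mutual information as an expected divergence:
\[
I^\sigma(\omega;Z_t\mid\mathcal G_t)=\mathbb E^\sigma\Big[\textstyle\sum_\theta \Pr(\omega=\theta\mid g)\,\mathrm{KL}\big(\mathrm{Ber}(qx_\theta)\,\|\,\mathrm{Ber}(q\bar x)\big)\Big].
\]
Pinsker's inequality for Bernoulli laws (natural logarithms) gives $\mathrm{KL}\ge 2q^2(x_\theta-\bar x)^2$. Since $x_1-\bar x=(1-p)(x_1-x_0)$ and $x_0-\bar x=-p(x_1-x_0)$, summing yields
\[
I^\sigma(\omega;Z_t\mid\mathcal G_t)\ge 2q^2\,\mathbb E^\sigma\big[B_t(1-B_t)(x_1-x_0)^2\big].
\]

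\emph{Step 3 (relating to the accuracy gap).} Conditional on $g$, agent $t$'s accuracy is $\mathbb E[\mathbf 1\{a_t=\omega\}\mid g]=px_1+(1-p)(1-x_0)$, while $v(p)=\max\{p,1-p\}$. If $p\ge 1/2$, then using $p\ge 1-p$,
\[
\text{gain}(g):=px_1+(1-p)(1-x_0)-p=(1-p)(1-x_0)-p(1-x_1)\le(1-p)(x_1-x_0).
\]
The case $p<1/2$ is symmetric. Hence $\text{gain}(g)\le\min\{p,1-p\}\,|x_1-x_0|\le\sqrt{p(1-p)}\,|x_1-x_0|$, where the last step uses $\min\{p,1-p\}^2\le p(1-p)$. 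Taking expectations gives $W_t-V_t=\mathbb E^\sigma[\text{gain}]$. By \eqref{eq:improvement-gap}, $W_t-V_t=\mathbb E^\sigma[\Delta(B_t)]\ge0$, so squaring preserves the inequality. Cauchy--Schwarz/Jensen then gives
\[
(W_t-V_t)^2\le \big(\mathbb E^\sigma[\sqrt{B_t(1-B_t)}\,|x_1-x_0|]\big)^2\le\mathbb E^\sigma\big[B_t(1-B_t)(x_1-x_0)^2\big].
\]
Combining this with Step 2 yields the claimed bound $I^\sigma(\omega;Z_t\mid\mathcal G_t)\ge 2q^2(W_t-V_t)^2$.

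\emph{Main obstacle.} The step I expect to need the most care is the identity $W_t-V_t=\mathbb E^\sigma[\text{gain}(B_t)]$. It uses the actual equilibrium action $a_t$ conditioned on the analyst's filtration $\mathcal G_t$, which differs from agent $t$'s actual information. This is legitimate because both $W_t$ and the conditional accuracy formula are computed under the joint law of $(\omega,\mathcal G_t,a_t)$ with $x_\theta$ defined accordingly, and $V_t$ depends only on $B_t$. I would spell out this point explicitly rather than invoke \eqref{eq:Wt-Phi}; that equation is used only to guarantee that $W_t-V_t\ge0$.
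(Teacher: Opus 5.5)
Your proof is correct and follows essentially the same route as the paper: Pinsker's inequality on the $\mathrm{Bernoulli}(qx_\theta(g))$ laws gives $2q^2B_t(1-B_t)(x_1-x_0)^2$, the per-realization accuracy gain is bounded by $\min\{B_t,1-B_t\}\,|x_1-x_0|$, and Jensen together with $W_t\ge V_t$ finishes the argument. The only cosmetic difference is that you bound the gain of the specific rule ``guess $a_t$'' directly, whereas the paper bounds the full value of information $\iota(g)$ of observing $a_t$ and then uses feasibility of guessing $a_t$ to obtain $\mathbb E^\sigma[\iota(\mathcal G_t)]\ge W_t-V_t$.
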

	
	\begin{proof}
		For a realization $g$ of $\mathcal G_t$, write
		\[
		p(g)
		:=
		\mathbb P^\sigma(\omega=1\mid\mathcal G_t=g)
		\]
		and, for $\theta\in\{0,1\}$,
		\[
		x_\theta(g)
		:=
		\mathbb P^\sigma
		(a_t=1\mid\omega=\theta,\mathcal G_t=g).
		\]
		
		We first measure how informative the equilibrium action $a_t$ is about
		the state, conditional on $\mathcal G_t=g$.
		Let
		\[
		\iota(g)
		:=
		V(\omega\mid\mathcal G_t=g,a_t)
		-
		V(\omega\mid\mathcal G_t=g),
		\]
		where $V$ denotes the maximal probability of correctly guessing the
		binary state.
		
		We claim that
		\begin{equation}
			\label{eq:iota-tv-bound}
			\iota(g)
			\leq
			\min\{p(g),1-p(g)\}
			\left|x_1(g)-x_0(g)\right|.
		\end{equation}
		
		To see this, suppose first that $p(g)\leq 1/2$.
		Without observing $a_t$, the optimal action is to guess state $0$, so
		the baseline success probability is $1-p(g)$. Hence
		\[
		\iota(g)
		=
		\sum_{a\in\{0,1\}}
		\left[
		p(g)\mathbb P^\sigma(a_t=a\mid\omega=1,g)
		-
		(1-p(g))
		\mathbb P^\sigma(a_t=a\mid\omega=0,g)
		\right]_+ .
		\]
		Since $p(g)\leq 1-p(g)$,
		\[
		\iota(g)
		\leq
		p(g)
		\sum_{a\in\{0,1\}}
		\left[
		\mathbb P^\sigma(a_t=a\mid\omega=1,g)
		-
		\mathbb P^\sigma(a_t=a\mid\omega=0,g)
		\right]_+ .
		\]
		The last sum is the total-variation distance between the two
		conditional distributions of $a_t$. Since $a_t$ is binary, this equals
		\[
		\left|x_1(g)-x_0(g)\right|.
		\]
		Thus
		\[
		\iota(g)
		\leq
		p(g)\left|x_1(g)-x_0(g)\right|.
		\]
		The case $p(g)\geq 1/2$ is symmetric, proving
		\eqref{eq:iota-tv-bound}.
		
		Now consider the auxiliary observation $Z_t$.
		Conditional on $(\omega=\theta,\mathcal G_t=g)$,
		\[
		Z_t\sim\mathrm{Bernoulli}\bigl(qx_\theta(g)\bigr).
		\]
		Let $P_\theta^g$ denote this conditional law. Its total-variation
		distance across the two states is therefore
		\begin{equation}
			\label{eq:z-tv}
			\operatorname{TV}(P_1^g,P_0^g)
			=
			q\left|x_1(g)-x_0(g)\right|.
		\end{equation}
		
		Let
		\[
		\overline P^g
		=
		p(g)P_1^g+(1-p(g))P_0^g
		\]
		be the conditional mixture distribution of $Z_t$. The conditional
		mutual information is
		\[
		J(g)
		:=
		I^\sigma(\omega;Z_t\mid\mathcal G_t=g)
		\]
		and can be written as
		\[
		J(g)
		=
		p(g)D_{\mathrm{KL}}(P_1^g\|\overline P^g)
		+
		(1-p(g))
		D_{\mathrm{KL}}(P_0^g\|\overline P^g).
		\]
		By Pinsker's inequality,
		\[
		D_{\mathrm{KL}}(P\|Q)
		\geq
		2\operatorname{TV}(P,Q)^2.
		\]
		Moreover,
		\[
		\operatorname{TV}(P_1^g,\overline P^g)
		=
		(1-p(g))\operatorname{TV}(P_1^g,P_0^g),
		\]
		and
		\[
		\operatorname{TV}(P_0^g,\overline P^g)
		=
		p(g)\operatorname{TV}(P_1^g,P_0^g).
		\]
		Consequently,
		\begin{align*}
			J(g)
			&\geq
			2p(g)(1-p(g))
			\operatorname{TV}(P_1^g,P_0^g)^2
			\\
			&=
			2q^2p(g)(1-p(g))
			\left(x_1(g)-x_0(g)\right)^2.
		\end{align*}
		Since
		\[
		p(g)(1-p(g))
		\geq
		\min\{p(g),1-p(g)\}^2,
		\]
		equation \eqref{eq:iota-tv-bound} implies
		\begin{equation}
			\label{eq:conditional-information-bound}
			J(g)
			\geq
			2q^2\iota(g)^2.
		\end{equation}
		
		Taking expectations and applying Jensen's inequality gives
		\begin{align}
			I^\sigma(\omega;Z_t\mid\mathcal G_t)
			&=
			\mathbb E^\sigma[J(\mathcal G_t)]
			\nonumber\\
			&\geq
			2q^2
			\mathbb E^\sigma[\iota(\mathcal G_t)^2]
			\nonumber\\
			&\geq
			2q^2
			\left(
			\mathbb E^\sigma[\iota(\mathcal G_t)]
			\right)^2.
			\label{eq:average-information}
		\end{align}
		
		It remains to relate the last term to equilibrium accuracy.
		By definition,
		\[
		\mathbb E^\sigma[\iota(\mathcal G_t)]
		=
		V(\omega\mid\mathcal G_t,a_t)-V_t.
		\]
		An observer who sees $(\mathcal G_t,a_t)$ can always use the feasible
		decision rule
		\[
		\widehat\omega=a_t.
		\]
		Therefore
		\[
		V(\omega\mid\mathcal G_t,a_t)
		\geq
		\mathbb P^\sigma(a_t=\omega)
		=
		W_t,
		\]
		and hence
		\[
		\mathbb E^\sigma[\iota(\mathcal G_t)]
		\geq
		W_t-V_t.
		\]
		Because the equilibrium agent can ignore her private signal, while her
		fresh social observation has the same statistical experiment as
		$\mathcal G_t$, we also have $W_t\geq V_t$.
		
		Substituting into \eqref{eq:average-information} yields
		\[
		I^\sigma(\omega;Z_t\mid\mathcal G_t)
		\geq
		2q^2(W_t-V_t)^2,
		\]
		as claimed.
	\end{proof}
	
	\medskip
	\noindent
	\textbf{Step 3: The equilibrium improvement must vanish.}
	
	Apply Lemma~\ref{lem:quadratic-information}.  For every $t$,
	\begin{equation}
		\label{eq:quadratic-info-application}
		I^\sigma(\omega;Z_t\mid\mathcal G_t)
		\geq
		2q^2(W_t-V_t)^2.
	\end{equation}
	
	Since
	\[
	\mathcal G_{t+1}
	=
	\mathcal G_t\vee\sigma(Z_t),
	\]
	the chain rule for mutual information gives, for every $T$,
	\begin{align}
		\sum_{t=1}^{T}
		I^\sigma(\omega;Z_t\mid\mathcal G_t)
		&=
		I^\sigma(\omega;Z_1,\ldots,Z_T)
		\nonumber\\
		&=
		I^\sigma(\omega;\mathcal G_{T+1})
		\nonumber\\
		&\leq
		H(\omega).
		\label{eq:entropy-budget}
	\end{align}
	Because the state is binary,
	\[
	H(\omega)<\infty.
	\]
	Combining
	\eqref{eq:quadratic-info-application}
	and
	\eqref{eq:entropy-budget},
	\[
	2q^2
	\sum_{t=1}^{\infty}
	(W_t-V_t)^2
	\leq
	H(\omega).
	\]
	Therefore
	\begin{equation}
		\label{eq:gap-to-zero}
		W_t-V_t\longrightarrow0.
	\end{equation}
	
	Using \eqref{eq:improvement-gap},
	\[
	\mathbb E^\sigma[\Delta(B_t)]
	\longrightarrow0.
	\]
	
	By \eqref{eq:Bt-convergence}, $B_t\to B_\infty$ almost surely.
	Since $\Delta$ is continuous and bounded, dominated convergence yields
	\[
	\mathbb E^\sigma[\Delta(B_\infty)]
	=
	0.
	\]
	As $\Delta\geq0$ and
	\[
	\Delta(p)>0
	\qquad
	\text{for every }p\in(0,1),
	\]
	we conclude that
	\begin{equation}
		\label{eq:B-infty-extreme}
		B_\infty\in\{0,1\}
		\qquad
		\Pr^\sigma\text{-a.s.}
	\end{equation}
	
	\medskip
	\noindent
	\textbf{Step 4: The limiting posterior identifies the true state.}
	
	Let
	\[
	X:=\mathbf 1\{\omega=1\}
	\]
	and
	\[
	\mathcal G_\infty
	:=
	\sigma\left(\bigcup_{t\geq1}\mathcal G_t\right).
	\]
	Since
	\[
	B_t
	=
	\mathbb E^\sigma[X\mid\mathcal G_t],
	\]
	the martingale convergence theorem implies
	\[
	B_\infty
	=
	\mathbb E^\sigma[X\mid\mathcal G_\infty].
	\]
	Therefore
	\[
	\mathbb E^\sigma
	\left[
	(X-B_\infty)^2
	\right]
	=
	\mathbb E^\sigma
	\left[
	\operatorname{Var}^\sigma
	(X\mid\mathcal G_\infty)
	\right]
	=
	\mathbb E^\sigma
	\left[
	B_\infty(1-B_\infty)
	\right].
	\]
	By \eqref{eq:B-infty-extreme}, the right-hand side is zero.  Hence
	\[
	B_\infty
	=
	\mathbf 1\{\omega=1\}
	\qquad
	\Pr^\sigma\text{-a.s.}
	\]
	
	It follows that
	\[
	v(B_t)\longrightarrow1
	\qquad
	\Pr^\sigma\text{-a.s.}
	\]
	and thus, by dominated convergence,
	\[
	V_t\longrightarrow1.
	\]
	Together with \eqref{eq:gap-to-zero},
	\[
	W_t\longrightarrow1.
	\]
	
	Finally,
	\[
	1-W_t
	=
	(1-\mu_0)
	\Pr^\sigma_0(a_t=1)
	+
	\mu_0
	\Pr^\sigma_1(a_t=0).
	\]
	Since $\mu_0\in(0,1)$ and both terms on the right-hand side are
	nonnegative,
	\[
	\Pr^\sigma_0(a_t=1)\longrightarrow0,
	\qquad
	\Pr^\sigma_1(a_t=0)\longrightarrow0.
	\]
	Equivalently,
	\[
	\Pr^\sigma_\omega(a_t=\omega)\longrightarrow1,
	\qquad
	\omega\in\{0,1\}.
	\]
	This proves asymptotic learning.

\subsection{Proof of Theorem \ref{thm:anonymous-learning}}
\label{app:anonymous-learning}

Fix an arbitrary Bayesian equilibrium
\[
\sigma=(\sigma_t)_{t\ge1}.
\]
All probabilities and expectations below are computed under the objective
probability measure induced by this same equilibrium.

Let
\[
\mu_0:=\Pr(\omega=1)\in(0,1).
\]
For each $t\ge2$, define
\[
K_{t-1}:=\sum_{i<t}a_i,
\qquad
X_{t-1}:=\frac{K_{t-1}}{t-1}.
\]
Thus $X_{t-1}$ is the realized fraction of predecessors choosing action $1$.

Under anonymous selective sampling, agent $t$ observes only
\[
N_t:=\text{number of displayed predecessors}.
\]
Since $Q(0)=0$ and $Q(1)=q$, conditional on the realized action history,
\[
N_t\mid a_1,\ldots,a_{t-1}
\sim
\operatorname{Bin}(K_{t-1},q).
\tag{A.1}
\]

The proof proceeds in four steps.

\paragraph{Step 1: Anonymous sampling approximately reveals the aggregate action share.}

Define
\[
\widehat X_t
:=
\Pi_{[0,1]}
\left(
\frac{N_t}{q(t-1)}
\right),
\tag{A.2}
\]
where $\Pi_{[0,1]}$ denotes projection onto $[0,1]$.

Conditional on the realized action history,
\[
\mathbb E^\sigma
\left[
\frac{N_t}{q(t-1)}
\Bigm|
a_1,\ldots,a_{t-1}
\right]
=
X_{t-1}.
\]
Moreover,
\begin{align*}
	\operatorname{Var}^\sigma
	\left(
	\frac{N_t}{q(t-1)}
	\Bigm|
	a_1,\ldots,a_{t-1}
	\right)
	&=
	\frac{K_{t-1}q(1-q)}
	{q^2(t-1)^2} \\
	&=
	\frac{1-q}{q(t-1)}X_{t-1}\\
	&\le
	\frac{1-q}{q(t-1)}.
\end{align*}
Because projection onto $[0,1]$ cannot increase the distance from
$X_{t-1}\in[0,1]$,
\[
\left|
\widehat X_t-X_{t-1}
\right|
\le
\left|
\frac{N_t}{q(t-1)}-X_{t-1}
\right|.
\]
Hence, by conditional Cauchy--Schwarz,
\[
\mathbb E_\omega^\sigma
\left[
\left|
\widehat X_t-X_{t-1}
\right|
\right]
\le
\varepsilon_t,
\qquad
\omega\in\{0,1\},
\tag{A.3}
\]
where
\[
\varepsilon_t
:=
\sqrt{\frac{1-q}{q(t-1)}}.
\tag{A.4}
\]
In particular,
\[
\varepsilon_t=O(t^{-1/2})
\]
and therefore
\[
\sum_{t=2}^\infty
\frac{\varepsilon_t}{t}
<
\infty.
\tag{A.5}
\]

\paragraph{Step 2: Approximate imitation of a uniformly random predecessor.}

Let
\[
B_t
:=
\Pr^\sigma(\omega=1\mid N_t)
\]
be the posterior generated by the anonymous social observation, and define
\[
v(p):=\max\{p,1-p\}.
\]
The optimal probability of matching the state using only the social observation
is
\[
V_t
:=
\mathbb E^\sigma[v(B_t)].
\tag{A.6}
\]

Also define equilibrium accuracy
\[
W_t
:=
\Pr^\sigma(a_t=\omega)
\]
and average predecessor accuracy
\[
\overline W_{t-1}
:=
\frac{1}{t-1}\sum_{i<t}W_i.
\tag{A.7}
\]

Consider the following feasible decision rule based only on $N_t$: choose
action $1$ with probability $\widehat X_t$ and action $0$ with probability
$1-\widehat X_t$. Its ex ante probability of matching the state is
\[
\widetilde V_t
=
\mu_0
\mathbb E^\sigma_1[\widehat X_t]
+
(1-\mu_0)
\mathbb E^\sigma_0[1-\widehat X_t].
\tag{A.8}
\]

On the other hand,
\begin{align*}
	\overline W_{t-1}
	&=
	\frac{1}{t-1}
	\sum_{i<t}
	\left[
	\mu_0
	\Pr^\sigma_1(a_i=1)
	+
	(1-\mu_0)
	\Pr^\sigma_0(a_i=0)
	\right]
	\\
	&=
	\mu_0
	\mathbb E^\sigma_1[X_{t-1}]
	+
	(1-\mu_0)
	\mathbb E^\sigma_0[1-X_{t-1}].
	\tag{A.9}
\end{align*}
Therefore, by (A.3),
\begin{align*}
	\widetilde V_t-\overline W_{t-1}
	&=
	\mu_0
	\mathbb E^\sigma_1[
	\widehat X_t-X_{t-1}]
	-
	(1-\mu_0)
	\mathbb E^\sigma_0[
	\widehat X_t-X_{t-1}]
	\\
	&\ge
	-\varepsilon_t.
\end{align*}
Thus
\[
\widetilde V_t
\ge
\overline W_{t-1}
-
\varepsilon_t.
\tag{A.10}
\]

Since $V_t$ is the optimal accuracy obtainable from $N_t$, while
$\widetilde V_t$ is generated by a particular feasible randomized rule,
\[
V_t
\ge
\widetilde V_t.
\]
Combining this with (A.10),
\[
V_t
\ge
\overline W_{t-1}
-
\varepsilon_t.
\tag{A.11}
\]

Equation (A.11) is the anonymous approximate-imitation inequality. If
$X_{t-1}$ were observed exactly, choosing action $1$ with probability
$X_{t-1}$ would have exactly the same ex ante accuracy as uniformly drawing
one predecessor and imitating her action. Anonymous selective sampling does
not permit exact imitation, but (A.3) shows that the loss from implementing
this rule through the display count vanishes at rate $t^{-1/2}$.

\paragraph{Step 3: Add the value of the fresh private signal.}

Let $f_0$ and $f_1$ denote the private-signal densities under the two states
with respect to a common dominating measure $\nu$. For a prior $p\in[0,1]$,
define
\[
\Phi(p)
:=
\int
\max
\left\{
pf_1(s),(1-p)f_0(s)
\right\}
\nu(ds),
\tag{A.12}
\]
and let
\[
\Delta(p)
:=
\Phi(p)-v(p).
\tag{A.13}
\]

The current private signal is conditionally independent of $N_t$ given the
state. Since agent $t$ is Bayesian and chooses optimally,
\[
W_t
=
\mathbb E^\sigma[\Phi(B_t)].
\tag{A.14}
\]
Consequently,
\[
W_t-V_t
=
\mathbb E^\sigma[\Delta(B_t)].
\tag{A.15}
\]

As established above for the identity-preserving case, unbounded private
beliefs imply
\[
\Delta(p)>0,
\qquad
p\in(0,1),
\tag{A.16}
\]
while
\[
\Delta(0)=\Delta(1)=0,
\]
and $\Delta$ is continuous on $[0,1]$.

Combining (A.11) and (A.15) gives
\[
W_t
\ge
\overline W_{t-1}
+
\mathbb E^\sigma[\Delta(B_t)]
-
\varepsilon_t.
\tag{A.17}
\label{eq:anonymous-improvement}
\]

We now show that the average welfare converges. Since
\[
\overline W_t
=
\frac{t-1}{t}\overline W_{t-1}
+
\frac1t W_t,
\]
equation (A.17) implies
\begin{align*}
	\overline W_t-\overline W_{t-1}
	&=
	\frac1t
	\left(
	W_t-\overline W_{t-1}
	\right)
	\\
	&\ge
	\frac1t
	\mathbb E^\sigma[\Delta(B_t)]
	-
	\frac{\varepsilon_t}{t}
	\\
	&\ge
	-\frac{\varepsilon_t}{t}.
	\tag{A.18}
\end{align*}

Define
\[
D_t
:=
\sum_{s=2}^t
\frac{\varepsilon_s}{s}.
\]
By (A.18),
\[
\overline W_t+D_t
\]
is weakly increasing. Moreover, by (A.5),
\[
D_t\longrightarrow D_\infty<\infty.
\]
Since $\overline W_t\le1$, the sequence
$\overline W_t+D_t$ is bounded above. It therefore converges, and hence so
does $\overline W_t$. Write
\[
\overline W_t\longrightarrow L
\qquad
\text{for some }L\le1.
\tag{A.19}
\]

Dropping the nonnegative improvement term from (A.17) gives
\[
W_t
\ge
\overline W_{t-1}-\varepsilon_t.
\]
Therefore
\[
\liminf_{t\to\infty}W_t
\ge
L.
\tag{A.20}
\]
But $\overline W_t$ is the Ces\`aro average of $(W_t)$. Hence
$\liminf_t W_t>L$ would contradict (A.19). Thus
\[
\liminf_{t\to\infty}W_t=L.
\tag{A.21}
\]

\paragraph{Step 4: A limiting accuracy below one contradicts strict private-signal improvement.}

Suppose, toward a contradiction, that
\[
L<1.
\tag{A.22}
\]
By (A.21), there exists a subsequence $(t_k)$ such that
\[
W_{t_k}\longrightarrow L.
\tag{A.23}
\]

Choose any
\[
c\in(L,1).
\]
For all sufficiently large $k$,
\[
V_{t_k}
\le
W_{t_k}
\le
c,
\tag{A.24}
\]
where the first inequality follows because the agent can always ignore her
private signal.

We claim that there exists $\eta(c)>0$ such that
\[
V_t\le c
\quad\Longrightarrow\quad
\mathbb E^\sigma[\Delta(B_t)]
\ge
\eta(c).
\tag{A.25}
\]
To see this, define
\[
m(p):=\min\{p,1-p\}=1-v(p).
\]
Since
\[
1-V_t
=
\mathbb E^\sigma[m(B_t)],
\]
the inequality $V_t\le c$ implies
\[
\mathbb E^\sigma[m(B_t)]
\ge
1-c.
\tag{A.26}
\]
Let
\[
\delta:=\frac{1-c}{2}>0.
\]
Since $m(p)\le\delta$ outside $[\delta,1-\delta]$ and
$m(p)\le1/2$ everywhere,
\[
\mathbb E^\sigma[m(B_t)]
\le
\delta
+
\frac12
\Pr^\sigma
\left(
B_t\in[\delta,1-\delta]
\right).
\]
Combining this inequality with (A.26),
\[
\Pr^\sigma
\left(
B_t\in[\delta,1-\delta]
\right)
\ge
1-c.
\tag{A.27}
\]

By continuity of $\Delta$ and strict positivity on $(0,1)$,
\[
\underline\Delta_\delta
:=
\min_{p\in[\delta,1-\delta]}
\Delta(p)
>
0.
\]
Hence
\[
\mathbb E^\sigma[\Delta(B_t)]
\ge
(1-c)\underline\Delta_\delta
=:
\eta(c)
>
0,
\]
which proves (A.25).

Applying (A.25) along the subsequence $(t_k)$ and using (A.17),
\[
W_{t_k}
\ge
\overline W_{t_k-1}
+
\eta(c)
-
\varepsilon_{t_k}
\]
for all sufficiently large $k$. Taking lower limits and using
$\overline W_t\to L$ and $\varepsilon_t\to0$,
\[
\liminf_{k\to\infty}W_{t_k}
\ge
L+\eta(c).
\]
This contradicts (A.23). Therefore
\[
L=1.
\tag{A.28}
\]

Finally, (A.20) and $W_t\le1$ imply
\[
W_t\longrightarrow1.
\]
Since
\[
1-W_t
=
(1-\mu_0)
\Pr^\sigma_0(a_t=1)
+
\mu_0
\Pr^\sigma_1(a_t=0),
\]
and $\mu_0\in(0,1)$, both nonnegative terms must converge to zero. Hence
\[
\Pr^\sigma_0(a_t=0)\longrightarrow1
\]
and
\[
\Pr^\sigma_1(a_t=1)\longrightarrow1.
\]
Equivalently,
\[
\Pr^\sigma_\omega(a_t=\omega)
\longrightarrow1,
\qquad
\omega\in\{0,1\}.
\]
This proves Theorem \ref{thm:anonymous-learning}.

\subsection{Proof of Theorem~\ref{thm:aware-full-support}}
\label{app:aware-full-support}

We continue to use the equilibrium, objective probability measure, and
notation introduced in the preceding subsection. The only new ingredient is
the neutral thinning permitted by full support.

Set
\[
\underline q
:=
\min_{a\in\{0,1\}}Q(a)
>0.
\]

\subsubsection{Neutral thinning}

For every pair \(i<t\), introduce an auxiliary random variable
\[
U_{i,t}\sim\mathrm{Unif}[0,1],
\]
independent across \(i,t\) and independent of all economic primitives and of
the original display process. Define
\begin{equation}
	\widetilde D_{i,t}
	:=
	D_{i,t}
	\mathbf 1
	\left\{
	U_{i,t}
	\leq
	\frac{\underline q}{Q(a_i)}
	\right\}.
	\tag{A.F.1}
\end{equation}

Let
\[
\widetilde B(t)
:=
\{i<t:\widetilde D_{i,t}=1\}.
\]

The variables \(U_{i,t}\) are used only by the analyst. Since
\(\widetilde D_{i,t}=1\) implies \(D_{i,t}=1\), every action retained by
this construction is contained in agent \(t\)'s actual displayed history.

\begin{Lemma}[Neutral thinning]
	\label{lem:neutral-thinning}
	Conditional on any realization of the predecessor action history,
	\[
	(\widetilde D_{i,t})_{i<t}
	\]
	are independent Bernoulli random variables with common success probability
	\(\underline q\). Consequently, \(\widetilde B(t)\) is independent of the
	state, the predecessor action history, and agent \(t\)'s private signal.
\end{Lemma}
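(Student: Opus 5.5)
The plan is to compute the conditional joint law of $(\widetilde D_{i,t})_{i<t}$ directly from the primitive randomization and then read off the independence claims. Fix a realization $(a_1,\ldots,a_{t-1})$ of the predecessor actions. Represent the display indicators through independent uniforms, $D_{i,t}=\mathbf 1\{U'_{i,t}\le Q(a_i)\}$, where the $(U'_{i,t})$ are independent of all primitives and of the analyst's $(U_{i,t})$. This representation is exactly the one used in Step 2 of the proof of Theorem~\ref{thm:unaware-nonfull}, and it is consistent with (1). Then
\[
\widetilde D_{i,t}=\mathbf 1\{U'_{i,t}\le Q(a_i)\}\,\mathbf 1\{U_{i,t}\le \underline q/Q(a_i)\}.
\]
Full support guarantees $Q(a_i)\ge\underline q>0$, so the threshold $\underline q/Q(a_i)$ is well defined and lies in $(0,1]$. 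Conditional on the action history, the pairs $(U'_{i,t},U_{i,t})_{i<t}$ are mutually independent across $i$. Hence the $\widetilde D_{i,t}$ are conditionally independent Bernoulli variables, with success probability $Q(a_i)\cdot\underline q/Q(a_i)=\underline q$. This success probability does not depend on $a_i$.

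Next I will check the independence claims, and this is where care is needed. The conditional law of the vector $(\widetilde D_{i,t})_{i<t}$ given $(a_1,\ldots,a_{t-1})$ is the product law $\mathrm{Bern}(\underline q)^{\otimes(t-1)}$, which does not depend on the action history. It follows that $(\widetilde D_{i,t})_{i<t}$, and hence $\widetilde B(t)$, is independent of $(a_1,\ldots,a_{t-1})$.

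To add the state and the current signal $s_t$, I will condition on the larger $\sigma$-field generated by $(\omega,s_1,\ldots,s_t,a_1,\ldots,a_{t-1})$, together with any equilibrium randomization. The uniforms $(U'_{i,t},U_{i,t})_{i<t}$ are independent of this $\sigma$-field. The reason is that $a_1,\ldots,a_{t-1}$ are determined by earlier signals and earlier display draws $D_{j,s}$ with $s<t$, and these are independent of the date-$t$ draws, since display decisions are redrawn freshly for each successor. So, conditional on this larger $\sigma$-field, $\widetilde D_{i,t}$ is still a function of $a_i$ and of independent uniforms, with conditional probability $\underline q$. The joint conditional law is again $\mathrm{Bern}(\underline q)^{\otimes(t-1)}$, a constant law. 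A constant conditional law implies independence of $\widetilde B(t)$ from $(\omega,a_1,\ldots,a_{t-1},s_t)$ jointly, which is stronger than the separate independence claims in the statement.

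The main obstacle is this independence bookkeeping rather than the probability computation. One must verify that agent $t$'s own display draws $(D_{i,t})_{i<t}$ are independent of everything that determines the predecessor actions. In the fresh-sampling model this holds because $D_{i,t}$ enters only agent $t$'s information and no earlier agent's. I will state explicitly that the $U'_{i,t}$ for fixed $t$ are independent of $\{U'_{j,s}:s<t\}$ and of all signals. Once that is in place, the lemma is an immediate product-measure calculation.
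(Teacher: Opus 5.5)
Your proposal is correct and follows the same route as the paper: compute the conditional product law $\mathrm{Bern}(\underline q)^{\otimes(t-1)}$ of $(\widetilde D_{i,t})_{i<t}$ given the predecessor actions, then deduce independence from the fact that this law is constant. Your explicit uniform representation of $D_{i,t}$, together with the point that the date-$t$ display draws are independent of everything determining $(\omega,s_t,a_1,\ldots,a_{t-1})$, makes rigorous the step the paper compresses into the remark that conditioning further on the state or the private signal does not change this law, and it gives joint rather than merely separate independence.
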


\begin{proof}
	Conditional on the predecessor action history,
	\[
	\begin{aligned}
		\Pr^\sigma
		\left(
		\widetilde D_{i,t}=1
		\mid A^{t-1}
		\right)
		&=
		\Pr^\sigma(D_{i,t}=1\mid a_i)
		\Pr
		\left(
		U_{i,t}
		\leq
		\frac{\underline q}{Q(a_i)}
		\right)
		\\
		&=
		Q(a_i)
		\frac{\underline q}{Q(a_i)}
		\\
		&=
		\underline q.
	\end{aligned}
	\]
	
	Conditional independence of the original display indicators and independence
	of the auxiliary randomization therefore imply that, for every
	\(d\in\{0,1\}^{t-1}\),
	\[
	\Pr^\sigma
	\left(
	\widetilde D_{i,t}=d_i
	\text{ for all }i<t
	\mid A^{t-1}
	\right)
	=
	\prod_{i<t}
	\underline q^{d_i}
	(1-\underline q)^{1-d_i}.
	\]
	
	The right-hand side does not depend on the realized action history.
	Conditioning further on the state or on the current private signal therefore
	does not change this law, proving the claimed independence.
\end{proof}

Full support also implies expansion of the retained neighborhood. For every
fixed \(K\),
\begin{equation}
	\begin{aligned}
		\Pr^\sigma
		\left(
		\widetilde B(t)
		\cap
		\{K,\ldots,t-1\}
		=
		\varnothing
		\right)
		&=
		(1-\underline q)^{t-K}
		\\
		&\longrightarrow 0.
	\end{aligned}
	\tag{A.F.2}
		\label{A.F.2}
\end{equation}

Notice that this is exactly the step unavailable in the preceding
non-full-support argument. If \(\min_aQ(a)=0\), the only action-neutral
retention rate obtainable from the same construction is zero, and the
retained network would be empty.

\subsubsection{Virtual strong improvement}

Under the maintained assumptions on private beliefs, let
\[
Z:[1/2,1]\to[1/2,1]
\]
denote the strong-improvement function, which is continuous and increasing
and satisfies
\begin{equation}
	Z(\alpha)>\alpha
	\qquad
	\text{for every }\alpha<1.
	\tag{A.F.3}
		\label{A.F.3}
\end{equation}

For every nonempty \(B\subseteq\{1,\ldots,t-1\}\), choose
\[
h^\sigma(B)
\in
\arg\max_{i\in B}W_i.
\]

\begin{Lemma}[Virtual strong improvement]
	\label{lem:virtual-strong-improvement}
	For every nonempty retained neighborhood \(B\),
	\begin{equation}
		\Pr^\sigma
		\left(
		a_t=\omega
		\mid
		\widetilde B(t)=B
		\right)
		\geq
		Z
		\left(
		\max_{i\in B}W_i
		\right).
		\tag{A.F.4}
		\label{A.F.4}
	\end{equation}
\end{Lemma}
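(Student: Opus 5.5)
The plan is to follow the Acemoglu et al. (2011) strong-improvement argument, applied inside the original equilibrium. The needed independence comes from Lemma~\ref{lem:neutral-thinning}. Fix a nonempty $B$ and let $h=h^\sigma(B)$, so that $W_h=\max_{i\in B}W_i$.

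\textbf{Step 1: a feasible rule for agent $t$.} On the event $\{\widetilde B(t)=B\}$ we have $h\in\widetilde B(t)$. Hence $\widetilde D_{h,t}=1$ implies $D_{h,t}=1$, and agent $t$ actually observes $a_h$. Consider the rule that combines only $a_h$ with $s_t$ optimally. Averaging over the auxiliary $U_{i,t}$, this rule is measurable with respect to agent $t$'s true information $(s_t,t,H^I_t)$. Since $\sigma_t$ is Bayes-optimal given that information, I would argue that the equilibrium accuracy conditional on $\{\widetilde B(t)=B\}$ weakly dominates this rule's accuracy.

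The delicate point is that conditioning on $\widetilde B(t)=B$ could in principle distort what the equilibrium rule achieves. I would handle this by noting that the $U_{i,t}$ are analyst-only randomization and by writing
\[
\Pr(a_t=\omega\mid \widetilde B=B)=\mathbb E\bigl[\Pr(a_t=\omega\mid H^I_t,s_t)\mid \widetilde B=B\bigr].
\]
Optimality of $a_t$ pointwise in $(H^I_t,s_t)$ gives
\[
\Pr(a_t=\omega\mid H^I_t,s_t)\ge \Pr(\hat a=\omega\mid H^I_t,s_t)
\]
for any rule $\hat a$ depending only on $(a_h,s_t)$, since $a_h$ is a coordinate of $H^I_t$ on this event. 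Integrating over the event then yields the comparison.

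\textbf{Step 2: the law of $(\omega,a_h,s_t)$ given $\widetilde B(t)=B$.} By Lemma~\ref{lem:neutral-thinning}, the event $\{\widetilde B(t)=B\}$ is independent of $(\omega,a_1,\dots,a_{t-1},s_t)$. So conditioning on it leaves the joint law of $(\omega,a_h,s_t)$ unchanged. In that law, $s_t$ is conditionally independent of $a_h$ given $\omega$, and $\Pr(a_h=\omega)=W_h$. This is exactly where neutral thinning removes the selection bias: with action-dependent $D_{h,t}$, the law of $a_h$ would be tilted.

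\textbf{Step 3: invoke the improvement function.} The optimal accuracy from combining a binary predecessor action of accuracy $\alpha$ with one independent private signal is at least $Z(\alpha)$. This is the defining property of the strong-improvement function in (\ref{A.F.3}), as in Acemoglu et al. The main obstacle is that $a_h$ may be asymmetric, with different accuracies in the two states. The standard fix is to define $Z$ through the worst case over binary experiments with overall accuracy $\alpha$, or via the bound
\[
\text{accuracy}\ge \alpha+\tfrac{(1-\alpha)^2}{8}\cdot(\text{tail term})
\]
used by Acemoglu et al. I would invoke their construction directly, using the assumption that $Z$ is defined so that it applies to any such pair. Monotonicity of $Z$ and $W_h=\max_{i\in B}W_i$ then give (\ref{A.F.4}).
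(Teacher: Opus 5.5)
Your proposal is correct and follows the paper's proof essentially step for step. The paper likewise (i) notes that the analyst-only randomization $\mathcal U_t$ leaves $\Pr^\sigma(\omega\mid\mathcal I_t)$ unchanged, so that Bayesian optimality of $a_t$ dominates the restricted rule based on $(s_t,a_{h^\sigma(B)})$, which is feasible because every retained predecessor is displayed; (ii) uses Lemma~\ref{lem:neutral-thinning} to conclude that conditioning on $\widetilde B(t)=B$ does not alter the experiment generated by $(\omega,a_{h^\sigma(B)},s_t)$; and (iii) invokes the strong-improvement property of $Z$ at $W_{h^\sigma(B)}=\max_{i\in B}W_i$. Your tower-property argument in Step 1 is an equivalent rewriting of the paper's identity $\Pr^\sigma(\omega\mid\mathcal I_t,\mathcal U_t)=\Pr^\sigma(\omega\mid\mathcal I_t)$, and, like the paper, you take the defining property of $Z$ as given, including its applicability to binary actions whose accuracy differs across states.
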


\begin{proof}
	Let \(\mathcal I_t\) denote agent \(t\)'s actual information before choosing
	her action and let
	\[
	\mathcal U_t
	:=
	\sigma(U_{i,t}:i<t).
	\]
	
	Because the auxiliary randomization is independent of the economic
	environment,
	\[
	\Pr^\sigma(\omega\mid\mathcal I_t,\mathcal U_t)
	=
	\Pr^\sigma(\omega\mid\mathcal I_t).
	\]
	Hence adjoining \(\mathcal U_t\) does not change the Bayesian optimality of
	the equilibrium action \(a_t\).
	
	Conditional on
	\(\widetilde B(t)=B\neq\varnothing\), consider the feasible restricted rule
	that discards all information except the current private signal and
	\[
	a_{h^\sigma(B)}.
	\]
	Because every retained predecessor is actually displayed, this rule is
	measurable with respect to
	\(\mathcal I_t\vee\mathcal U_t\). Bayesian optimality therefore implies that
	the equilibrium action performs at least as well as this restricted rule.
	
	By Lemma~\ref{lem:neutral-thinning},
	\(\widetilde B(t)\) is independent of the state, predecessor actions, and
	the current private signal. Hence conditioning on
	\(\widetilde B(t)=B\) does not change the statistical experiment generated by
	\[
	\bigl(s_t,a_{h^\sigma(B)}\bigr).
	\]
	
	Since
	\[
	W_{h^\sigma(B)}
	=
	\max_{i\in B}W_i,
	\]
	the strong-improvement property gives Lemma 5.
\end{proof}

The role of neutral thinning is therefore precisely to remove the selection
problem discussed in the main text. In the original selectively displayed
sample, conditioning on being observed changes the distribution of the
predecessor's action. In the retained sample, the retention event is
independent of predecessor actions, so the comparison can once again be made
with the predecessor's unconditional equilibrium accuracy.

\subsubsection{Iteration}

Define recursively
\begin{equation}
	\phi_1:=\frac12,
	\qquad
	\phi_{k+1}
	:=
	\frac{\phi_k+Z(\phi_k)}{2}.
	\tag{A.F.5}
	\label{A.F.5}
\end{equation}

By \eqref{A.F.3},
\[
\phi_{k+1}>\phi_k
\qquad
\text{whenever }\phi_k<1.
\]

We claim that, for every \(k\), there exists \(T_k<\infty\) such that
\begin{equation}
	W_t\geq\phi_k
	\qquad
	\text{for every }t\geq T_k.
	\tag{A.F.6}
	\label{A.F.6}
\end{equation}

The claim is immediate for \(k=1\).

Suppose that \eqref{A.F.6} holds for some \(k\). For \(t>T_k\), define
\[
E_{t,k}
:=
\left\{
\widetilde B(t)
\cap
\{T_k,\ldots,t-1\}
\neq
\varnothing
\right\}.
\]

On \(E_{t,k}\), the retained neighborhood contains a predecessor
\(i\geq T_k\), and hence
\[
\max_{i\in\widetilde B(t)}W_i
\geq
\phi_k.
\]

Lemma~\ref{lem:virtual-strong-improvement} and monotonicity of \(Z\) imply
that, on \(E_{t,k}\),
\[
\Pr^\sigma
\left(
a_t=\omega
\mid
\widetilde B(t)
\right)
\geq
Z(\phi_k).
\]
Therefore
\begin{equation}
	W_t
	\geq
	Z(\phi_k)
	\Pr^\sigma(E_{t,k}).
	\tag{A.F.7}
	\label{A.F.7}
\end{equation}

By \eqref{A.F.2},
\[
\Pr^\sigma(E_{t,k})
=
1-(1-\underline q)^{t-T_k}
\longrightarrow1.
\]

Since
\[
\phi_{k+1}
=
\frac{\phi_k+Z(\phi_k)}2
<
Z(\phi_k),
\]
there exists \(T_{k+1}>T_k\) such that, for every
\(t\geq T_{k+1}\),
\[
Z(\phi_k)
\left[
1-(1-\underline q)^{t-T_k}
\right]
\geq
\phi_{k+1}.
\]

Equation \eqref{A.F.7} therefore gives
\[
W_t\geq\phi_{k+1}
\qquad
\text{for every }t\geq T_{k+1},
\]
which proves the induction claim.

Finally, \((\phi_k)\) is increasing and bounded above by \(1\), so
\[
\phi_k\longrightarrow\phi^*
\]
for some \(\phi^*\leq1\).

Passing to the limit in \eqref{A.F.5} and using continuity of \(Z\),
\[
\phi^*
=
\frac{\phi^*+Z(\phi^*)}{2},
\]
and therefore
\[
Z(\phi^*)=\phi^*.
\]

By \eqref{A.F.3}, no \(\phi^*<1\) can satisfy this equality. Hence
\[
\phi^*=1.
\]

Since \eqref{A.F.6} holds for every \(k\),
\[
\liminf_{t\to\infty}W_t
\geq
\phi_k
\]
for every \(k\). Letting \(k\to\infty\) gives
\[
W_t\longrightarrow1.
\]

Using the statewise decomposition of ex ante error from the preceding
subsection,
\[
1-W_t
=
(1-\mu_0)\Pr^\sigma_0(a_t=1)
+
\mu_0\Pr^\sigma_1(a_t=0),
\]
and the fact that both terms are nonnegative, we obtain
\[
\Pr^\sigma_0(a_t=0)\longrightarrow1,
\qquad
\Pr^\sigma_1(a_t=1)\longrightarrow1.
\]

Therefore
\[
\Pr^\sigma_\omega(a_t=\omega)
\longrightarrow1,
\qquad
\omega\in\{0,1\}.
\]
This proves Theorem~\ref{thm:aware-full-support}.

\subsection{Proofs for the informed-designer extension}

We first establish the ordering of private-only action probabilities used in the construction.

\begin{Lemma}
	\label{lem:designer-private-action}
	Let
	\[
	a^P(s)
	=
	\mathbf 1\{r_0+\ell(s)\geq 0\}
	\]
	and
	\[
	\alpha_\theta
	=
	\Pr_\theta\big(a^P(s)=1\big).
	\]
	Under two-sided unbounded private beliefs,
	\[
	0<\alpha_0<\alpha_1<1.
	\]
\end{Lemma}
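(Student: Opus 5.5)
The claim involves two bounds and one strict inequality, and I would treat them in turn. Fix the threshold $c:=-r_0$, which is finite because $\mu_0(\theta)>0$ for both states. Then
\[
\alpha_\theta=\Pr_\theta\big(\ell(s)\geq c\big)=\rho_\theta(r_0),
\]
in the notation $\rho_\theta(R)=\Pr_\theta(\ell(s)\ge -R)$ used in the proof of Theorem~\ref{thm:unaware-nonfull}.

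For the interior bounds, I would use upper unboundedness, $\operatorname*{ess\,sup}\ell=+\infty$. This gives $\nu$-positive, and hence $F_\theta$-positive (by mutual absolute continuity), mass on $\{\ell\ge c\}$ under both states. So $\alpha_0>0$ and $\alpha_1>0$. Symmetrically, lower unboundedness gives positive mass on $\{\ell<c\}$, so $1-\alpha_\theta>0$ and in particular $\alpha_1<1$. The point to be careful about is that the essential supremum and infimum refer to the dominating measure restricted to the common support. Since $f_0,f_1>0$ there, any set that is $\nu$-positive is $F_\theta$-positive for each $\theta$.

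For the strict middle inequality $\alpha_0<\alpha_1$, I would reuse the computation (A.2)--(A.4) from the proof of Theorem~\ref{thm:unaware-nonfull} with $R=r_0$. It uses the change of measure $dF_1/dF_0=e^{\ell}$ and splits on the sign of $c$. If $c\ge0$, then
\[
\alpha_1-\alpha_0=E_0\!\left[(e^{\ell}-1)\mathbf 1\{\ell\ge c\}\right],
\]
and the integrand is nonnegative, and strictly positive on $\{\ell>\max(c,0)\}$. That set has positive $F_0$-probability by upper unboundedness. If $c<0$, I would pass to complements:
\[
\alpha_1-\alpha_0=E_0\!\left[(1-e^{\ell})\mathbf 1\{\ell<c\}\right],
\]
which is strictly positive because $\{\ell<c\}\subseteq\{\ell<0\}$ has positive mass by lower unboundedness.

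The only real obstacle is getting strictness in the second inequality rather than mere weak inequality. The case split on the sign of $c$ handles it, because in each case it places the relevant indicator on a region where $e^{\ell}-1$ has a strict sign. Everything else is routine.
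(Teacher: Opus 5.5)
Your proof is correct and follows essentially the same route as the paper. The paper writes the event as $\{e^{\ell}\ge e^{-r_0}\}$ and gets $0<\alpha_\theta<1$ directly from two-sided unboundedness. It then obtains strictness by the same change of measure $dF_1=e^{\ell}\,dF_0$, splitting on whether the likelihood-ratio threshold is at least $1$ (your sign of $c$) and passing to the complementary event in the second case.
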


\begin{proof}
	Let
	\[
	L(s)
	:=
	\frac{f_1(s)}{f_0(s)}
	=
	e^{\ell(s)}
	\]
	and set
	\[
	c:=e^{-r_0}.
	\]
	Then
	\[
	a^P(s)=1
	\quad\Longleftrightarrow\quad
	L(s)\geq c.
	\]
	Two-sided unbounded private beliefs imply that both this event and its complement have strictly positive probability under either state. Hence
	\[
	0<\alpha_\theta<1.
	\]
	
	It remains to establish strict ordering across states. If $c\geq 1$, then
	\[
	\begin{aligned}
		\alpha_1-\alpha_0
		&=
		\int_{\{L\geq c\}} dF_1
		-
		\int_{\{L\geq c\}} dF_0   \\
		&=
		\int_{\{L\geq c\}}
		(L(s)-1)\,dF_0(s)
		>0.
	\end{aligned}
	\]
	If $c<1$, use the complementary event:
	\[
	\begin{aligned}
		\alpha_1-\alpha_0
		&=
		-\left[
		\Pr_1(L<c)-\Pr_0(L<c)
		\right]  \\
		&=
		\int_{\{L<c\}}
		(1-L(s))\,dF_0(s)
		>0.
	\end{aligned}
	\]
	Thus
	\[
	0<\alpha_0<\alpha_1<1.
	\]
\end{proof}

\begin{proof}[Proof of Proposition~\ref{prop:designer-camouflage}]
	Fix any
	\[
	\lambda\in(0,\alpha_0)
	\]
	and let
	\[
	q_\theta^*
	:=
	\frac{\lambda}{\alpha_\theta}.
	\]
	By Lemma~\ref{lem:designer-private-action},
	\[
	0<q_1^*<q_0^*<1.
	\]
	
	We construct an equilibrium in which the designer chooses $q_\theta^*$ in state $\theta$ and every agent uses the private-only rule
	\[
	a_t
	=
	a^P(s_t)
	=
	\mathbf 1\{r_0+\ell(s_t)\geq 0\}
	\]
	after every social history.
	
	We verify sequential rationality and consistency by induction.
	
	\paragraph{Step 1: the initial agent.}
	Agent $1$ has no social observation. Her posterior log odds after observing her private signal are therefore
	\[
	r_0+\ell(s_1),
	\]
	so her Bayesian best response is
	\[
	a_1
	=
	\mathbf 1\{r_0+\ell(s_1)\geq 0\}.
	\]
	Consequently,
	\[
	\Pr_\theta(a_1=1)=\alpha_\theta.
	\]
	
	\paragraph{Step 2: equality of the social experiments.}
	Suppose agents $1,\ldots,t-1$ use the private-only strategy. Since private signals are conditionally i.i.d., their actions are conditionally independent given the state and satisfy
	\[
	\Pr_\theta(a_i=1)=\alpha_\theta.
	\]
	
	Consider first identity-preserving observation. For predecessor $i<t$, let
	\[
	Y_{i,t}
	=
	\begin{cases}
		1, &\text{if }a_i=1\text{ and $i$ is displayed},\\
		\emptyset, &\text{otherwise}.
	\end{cases}
	\]
	Because
	\[
	Q_\theta^*(0)=0,
	\qquad
	Q_\theta^*(1)=q_\theta^*,
	\]
	we have
	\[
	\begin{aligned}
		\Pr_\theta(Y_{i,t}=1)
		&=
		\Pr_\theta(a_i=1)q_\theta^*   \\
		&=
		\alpha_\theta
		\frac{\lambda}{\alpha_\theta}
		=
		\lambda,
	\end{aligned}
	\]
	and therefore
	\[
	\Pr_\theta(Y_{i,t}=\emptyset)=1-\lambda.
	\]
	These probabilities do not depend on $\theta$.
	
	Moreover, because agents' private signals are independent across predecessors and display draws are independent conditional on actions,
	\[
	(Y_{1,t},\ldots,Y_{t-1,t})
	\]
	is conditionally i.i.d.\ with common distribution
	\[
	\Pr(Y_{i,t}=1)=\lambda,
	\qquad
	\Pr(Y_{i,t}=\emptyset)=1-\lambda
	\]
	under either state. Hence
	\[
	\mathcal L^\sigma
	\bigl(
	H_t^I\mid\omega=0,q=q_0^*
	\bigr)
	=
	\mathcal L^\sigma
	\bigl(
	H_t^I\mid\omega=1,q=q_1^*
	\bigr).
	\tag{A.D.1}
	\]
	
	For anonymous observation,
	\[
	N_t
	=
	\sum_{i<t}\mathbf 1\{Y_{i,t}=1\}.
	\]
	Therefore, under either state,
	\[
	N_t
	\sim
	\operatorname{Bin}(t-1,\lambda),
	\tag{A.D.2}
	\]
	and hence
	\[
	\mathcal L^\sigma
	\bigl(
	H_t^A\mid\omega=0,q=q_0^*
	\bigr)
	=
	\mathcal L^\sigma
	\bigl(
	H_t^A\mid\omega=1,q=q_1^*
	\bigr).
	\tag{A.D.3}
	\]
	
	Thus, under either observation format,
	\[
	\mathcal L^\sigma(H_t\mid\omega=0)
	=
	\mathcal L^\sigma(H_t\mid\omega=1).
	\tag{A.D.4}
	\]
	
	\paragraph{Step 3: agents optimally ignore social history.}
	By (A.D.4), every equilibrium social history has likelihood ratio one across the two states. Bayes' rule therefore implies
	\[
	\Pr^\sigma(\omega=1\mid H_t)
	=
	\mu_0
	\qquad\text{a.s.}
	\tag{A.D.5}
	\]
	Thus observing $H_t$ leaves the agent's belief equal to the prior.
	
	The current private signal $s_t$ is conditionally independent of $H_t$ given the state. Consequently, after observing $(H_t,s_t)$, agent $t$'s posterior log odds are simply
	\[
	r_0+\ell(s_t).
	\]
	Her Bayesian best response is therefore
	\[
	a_t
	=
	\mathbf 1\{r_0+\ell(s_t)\geq 0\}
	=
	a^P(s_t).
	\tag{A.D.6}
	\]
	This closes the induction.
	
	Hence, for every $t$ and every state $\theta$,
	\[
	\Pr_\theta^\sigma(a_t=1)
	=
	\alpha_\theta.
	\tag{A.D.7}
	\]
	The endogenous action process is therefore precisely the private-only process.
	
	Since
	\[
	0<\alpha_0<\alpha_1<1,
	\]
	we have, for every $t$,
	\[
	\Pr_0^\sigma(a_t=0)=1-\alpha_0<1
	\]
	and
	\[
	\Pr_1^\sigma(a_t=1)=\alpha_1<1.
	\]
	Thus asymptotic learning fails in both states.
	
	\paragraph{Step 4: the designer's incentive constraint.}
	It remains to verify that $q_\theta^*$ is optimal for each designer type.
	
	Because
	\[
	0<\lambda<1,
	\]
	under identity-preserving observation every history
	\[
	h_t\in\{1,\emptyset\}^{t-1}
	\]
	has strictly positive equilibrium probability. Under anonymous observation every count
	\[
	n\in\{0,\ldots,t-1\}
	\]
	likewise has strictly positive equilibrium probability. Thus the private-only strategy in (A.D.6) is sequentially optimal at every social history that can arise following any unilateral deviation to another $q\in[0,1]$.
	
	Fix state $\theta$ and suppose the designer deviates from $q_\theta^*$ to some $q'\in[0,1]$. The choice $q'$ is not directly observed by agents. The deviation changes the probability with which different social histories arise, but it does not change the action prescribed after any realized history:
	\[
	a_t=a^P(s_t).
	\]
	Consequently, under the deviation,
	\[
	\Pr_\theta(a_t=1)=\alpha_\theta
	\]
	for every $t$, exactly as on the equilibrium path. Indeed, the entire distribution of the action process is unchanged.
	
	By assumption, the designer's payoff depends on the state and the induced action process but not directly on the selection intensity. Hence, for every $q'\in[0,1]$,
	\[
	U_D^\theta(q_\theta^*;\sigma)
	=
	U_D^\theta(q';\sigma).
	\]
	Therefore $q_\theta^*$ is a best response for designer type $\theta$. Since this holds for both states, the designer strategy together with the agents' strategies and Bayes-consistent beliefs constitutes a perfect Bayesian equilibrium.
	
	Finally, by (A.D.7), the endogenous action-$1$ frequency is
	\[
	x_\theta=\alpha_\theta,
	\]
	while the equilibrium selection intensity satisfies
	\[
	q_\theta^*x_\theta
	=
	q_\theta^*\alpha_\theta
	=
	\lambda.
	\]
	Thus the state dependence of endogenous behavior and the state dependence of selection exactly offset one another.
\end{proof}

\begin{proof}[Full-support construction]
	Let
	\[
	\pi_\theta(a)
	:=
	\Pr_\theta(a^P(s)=a),
	\qquad a\in\{0,1\}.
	\]
	By two-sided unbounded private beliefs,
	\[
	\pi_\theta(a)>0
	\qquad
	\text{for every }a,\theta.
	\]
	Choose $m_0,m_1>0$ satisfying
	\[
	m_a<\min_{\theta}\pi_\theta(a),
	\qquad
	m_0+m_1<1,
	\]
	and define
	\[
	Q_\theta^*(a)
	:=
	\frac{m_a}{\pi_\theta(a)}.
	\]
	Then
	\[
	0<Q_\theta^*(a)<1
	\]
	for every $a$ and $\theta$.
	
	Suppose predecessors use the private-only action rule. Conditional on state $\theta$,
	\[
	\begin{aligned}
		\Pr_\theta(Y_{i,t}=a)
		&=
		\Pr_\theta(a_i=a)Q_\theta^*(a)  \\
		&=
		\pi_\theta(a)
		\frac{m_a}{\pi_\theta(a)}
		=
		m_a,
		\qquad a\in\{0,1\},
	\end{aligned}
	\]
	while
	\[
	\Pr_\theta(Y_{i,t}=\emptyset)
	=
	1-m_0-m_1.
	\]
	Thus the observable outcome generated by each predecessor has the same distribution in the two states:
	\[
	Y_{i,t}
	\sim
	\operatorname{Categorical}
	\bigl(m_0,m_1,1-m_0-m_1\bigr).
	\]
	Independence across predecessors therefore implies equality of the complete identity-preserving social experiments at every date. Anonymous observation is a coarsening of the same experiment, so its distribution is also identical across states.
	
	It follows that the social posterior remains equal to the prior at every date, and hence every agent optimally uses the private-only rule. The same induction and designer incentive argument as in Proposition~\ref{prop:designer-camouflage} complete the construction.
\end{proof}

\subsection{Proofs for Section \ref{sec:miscalibration}}
\label{app:miscalibration}

\subsubsection{Permanent-Board Benchmark}
\label{app:permanent-board-fragility}

\begin{proof}[Proof of Proposition~\ref{prop:permanent-board-fragility}]
	We prove the result under the true state $\omega=0$.
	
	Under the permanent-board technology, each predecessor generates one public
	outcome
	\[
	Y_t\in\{1,\varnothing\},
	\]
	where $Y_t=1$ means that agent $t$ chose action $1$ and that this action was
	displayed, while $Y_t=\varnothing$ means that no action was added to the
	public board at date $t$.
	
	Let $R_t$ denote the subjective public log-likelihood ratio before agent
	$t$ observes her private signal, and define
	\[
	\rho_\theta(R)
	:=
	\Pr_\theta\!\left(\ell(s)\ge -R\right),
	\qquad \theta\in\{0,1\}.
	\]
	Conditional on $R_t=R$, an agent chooses action $1$ in state $\theta$ with
	probability $\rho_\theta(R)$.
	
	We proceed in four steps.
	
	\medskip
	\noindent
	\textit{Step 1: Action $1$ is strictly more likely in state $1$ at every
		finite public belief.}
	
	Let $F_\theta^\ell$ denote the distribution of the private log-likelihood
	ratio $\ell(s)$ under state $\theta$. Since
	\[
	\ell(s)=\log\frac{dF_1}{dF_0}(s),
	\]
	the induced likelihood-ratio distributions satisfy
	\[
	dF_1^\ell(x)=e^x\,dF_0^\ell(x).
	\]
	Fix a finite $R$ and write $c=-R$.
	
	If $c\ge0$, then
	\[
	\begin{aligned}
		\rho_1(R)-\rho_0(R)
		&=
		\int_{\{x\ge c\}}(e^x-1)\,dF_0^\ell(x)
		>0,
	\end{aligned}
	\]
	where strictness follows from upper unboundedness.
	
	If $c<0$, use
	\[
	\int (e^x-1)\,dF_0^\ell(x)=0
	\]
	to obtain
	\[
	\begin{aligned}
		\rho_1(R)-\rho_0(R)
		&=
		-\int_{\{x<c\}}(e^x-1)\,dF_0^\ell(x)
		>0,
	\end{aligned}
	\]
	where strictness follows from lower unboundedness. Hence
	\[
	\rho_1(R)>\rho_0(R)
	\qquad\text{for every finite }R.
	\tag{PB.1}
	\]
	
	Moreover,
	\[
	\rho_0(R)\longrightarrow1,
	\qquad
	\rho_1(R)\longrightarrow1
	\qquad\text{as }R\to+\infty,
	\]
	so
	\[
	\frac{\rho_1(R)}{\rho_0(R)}
	\longrightarrow1.
	\tag{PB.2}
	\]
	
	\medskip
	\noindent
	\textit{Step 2: The incorrect upper tail is locally self-reinforcing.}
	
	Conditional on $R_t=R$, the objective probability of observing $Y_t=1$
	under the true state $0$ is
	\[
	p(R):=q\rho_0(R).
	\]
	Under the agents' subjective model, the corresponding probabilities in the
	two states are
	\[
	a(R):=\widehat q\rho_0(R),
	\qquad
	b(R):=\widehat q\rho_1(R).
	\]
	Since $\widehat q<q$, equations \textup{(PB.1)}--\textup{(PB.2)} imply that
	there exists a finite $\overline R$ such that, for every
	$R\ge\overline R$,
	\[
	0<a(R)<b(R)<p(R)<1.
	\tag{PB.3}
	\]
	
	Fix any $\kappa\in(0,1)$. Define
	\[
	\Psi_\kappa(R)
	:=
	p(R)
	\left(\frac{a(R)}{b(R)}\right)^\kappa
	+
	\bigl(1-p(R)\bigr)
	\left(
	\frac{1-a(R)}{1-b(R)}
	\right)^\kappa .
	\]
	For $R\ge\overline R$, let
	\[
	x:=\frac{a(R)}{b(R)}<1,
	\qquad
	y:=\frac{1-a(R)}{1-b(R)}>1.
	\]
	Because $p(R)>b(R)$ and $x^\kappa<y^\kappa$,
	\[
	\Psi_\kappa(R)
	<
	b(R)x^\kappa+\bigl(1-b(R)\bigr)y^\kappa.
	\]
	Since $z\mapsto z^\kappa$ is strictly concave,
	\[
	\begin{aligned}
		b(R)x^\kappa+\bigl(1-b(R)\bigr)y^\kappa
		&<
		\left(
		b(R)x+\bigl(1-b(R)\bigr)y
		\right)^\kappa \\
		&=1.
	\end{aligned}
	\]
	Therefore
	\[
	\Psi_\kappa(R)<1
	\qquad\text{for every }R\ge\overline R.
	\tag{PB.4}
	\]
	
	The subjective public-belief recursion is
	\[
	R_{t+1}
	=
	R_t+
	\log
	\frac{\widehat f_1(Y_t\mid R_t)}
	{\widehat f_0(Y_t\mid R_t)},
	\]
	where
	\[
	\frac{\widehat f_1(1\mid R)}
	{\widehat f_0(1\mid R)}
	=
	\frac{b(R)}{a(R)}
	=
	\frac{\rho_1(R)}{\rho_0(R)}
	\]
	and
	\[
	\frac{\widehat f_1(\varnothing\mid R)}
	{\widehat f_0(\varnothing\mid R)}
	=
	\frac{1-b(R)}{1-a(R)}.
	\]
	
	Suppose the process starts from some $H>\overline R$, and define the first
	exit time
	\[
	\tau:=\inf\{t\ge0:R_t<\overline R\}.
	\]
	By \textup{(PB.4)},
	\[
	M_t:=e^{-\kappa R_{t\wedge\tau}}
	\]
	is a nonnegative supermartingale under the true state $0$. Indeed, before
	$\tau$,
	\[
	\begin{aligned}
		E_0[M_{t+1}\mid\mathcal F_t]
		&=
		e^{-\kappa R_t}\Psi_\kappa(R_t) \\
		&\le e^{-\kappa R_t}
		=
		M_t.
	\end{aligned}
	\]
	
	On the event $\{\tau<\infty\}$,
	\[
	M_\tau>e^{-\kappa\overline R}.
	\]
	Hence the maximal inequality for nonnegative supermartingales gives
	\[
	\Pr_0(\tau<\infty\mid R_0=H)
	\le
	e^{-\kappa(H-\overline R)}
	<1.
	\]
	Consequently,
	\[
	\Pr_0
	\left(
	R_t\ge\overline R\text{ for all }t
	\,\middle|\,
	R_0=H
	\right)
	\ge
	1-e^{-\kappa(H-\overline R)}
	>0.
	\tag{PB.5}
	\]
	
	Thus the incorrect upper tail is locally stable.
	
	\medskip
	\noindent
	\textit{Step 3: Conditional on never leaving the upper tail, the public
		belief diverges to $+\infty$.}
	
	Because $(M_t)$ is a nonnegative supermartingale, it converges almost surely.
	On the event $\{\tau=\infty\}$,
	\[
	M_t=e^{-\kappa R_t}.
	\]
	If its limit were strictly positive, then $R_t$ would converge to some finite
	$R_\infty\ge\overline R$.
	
	This is impossible. Whenever $Y_t=1$, the public log-likelihood ratio
	increases by
	\[
	\Delta_1(R_t)
	=
	\log
	\frac{\rho_1(R_t)}{\rho_0(R_t)}
	>0.
	\tag{PB.6}
	\]
	Moreover, on $\{R_t\ge\overline R\}$,
	\[
	\Pr_0(Y_t=1\mid\mathcal F_t)
	=
	q\rho_0(R_t)
	\ge
	q\rho_0(\overline R)
	>0.
	\]
	Hence displayed action-$1$ observations occur infinitely often almost surely
	on $\{\tau=\infty\}$.
	
	If $R_t$ converged to a finite $R_\infty$, then along any sequence
	$R_t\to R_\infty$ the increment in \textup{(PB.6)} would remain bounded away
	from zero for all sufficiently large $t$. Indeed, the one-sided limits of the
	tail probabilities at any finite threshold still satisfy the strict
	state-separation established in Step 1. Infinitely many such positive jumps
	are therefore inconsistent with convergence to a finite value.
	
	It follows that
	\[
	R_t\longrightarrow+\infty
	\qquad
	\text{on }\{\tau=\infty\}.
	\tag{PB.7}
	\]
	
	\medskip
	\noindent
	\textit{Step 4: The incorrect upper tail is reachable from every finite
		initial belief.}
	
	Starting from any finite $R^0$, consider the deterministic public-belief path
	generated by a sequence of displayed action-$1$ observations:
	\[
	R^{n+1}
	=
	R^n+
	\log
	\frac{\rho_1(R^n)}{\rho_0(R^n)}.
	\tag{PB.8}
	\]
	By \textup{(PB.1)}, $(R^n)$ is strictly increasing.
	
	We claim that
	\[
	R^n\longrightarrow+\infty.
	\]
	Otherwise $R^n\uparrow R^*<\infty$. Writing
	$c_n=-R^n\downarrow c^*:=-R^*$, monotone convergence gives
	\[
	\rho_\theta(R^n)
	\longrightarrow
	\Pr_\theta(\ell>c^*),
	\qquad \theta\in\{0,1\}.
	\]
	The same change-of-measure argument as in Step 1 gives
	\[
	\Pr_1(\ell>c^*)>\Pr_0(\ell>c^*).
	\]
	Hence the increments in \textup{(PB.8)} converge to a strictly positive
	number, contradicting convergence of $R^n$.
	
	Therefore there exists a finite $K$ such that, after $K$ consecutive displayed
	action-$1$ observations,
	\[
	R^K=H>\overline R.
	\]
	Under the true state $0$, this finite event has strictly positive probability:
	\[
	\Pr_0(Y_0=\cdots=Y_{K-1}=1)
	=
	\prod_{n=0}^{K-1}q\rho_0(R^n)
	>0,
	\tag{PB.9}
	\]
	because $q>0$ and upper unboundedness implies
	$\rho_0(R^n)>0$ at every finite $R^n$.
	
	Combining \textup{(PB.5)}, \textup{(PB.7)}, and \textup{(PB.9)} yields
	\[
	\Pr_0(R_t\to+\infty)>0.
	\tag{PB.10}
	\]
	
	Finally,
	\[
	\Pr_0(a_t=1)
	=
	E_0[\rho_0(R_t)].
	\]
	On the positive-probability event in \textup{(PB.10)},
	\[
	R_t\to+\infty
	\quad\Longrightarrow\quad
	\rho_0(R_t)\to1.
	\]
	Fatou's lemma therefore gives
	\[
	\liminf_{t\to\infty}\Pr_0(a_t=1)
	=
	\liminf_{t\to\infty}E_0[\rho_0(R_t)]
	\ge
	\Pr_0(R_t\to+\infty)
	>0.
	\]
	Hence asymptotic learning fails.
\end{proof}

\subsubsection{Extreme Under-Calibration}

We begin with a stochastic-ordering observation that replaces any aggregate
MLR requirement.

\begin{Lemma}[Aggregate first-order stochastic dominance]
	\label{lem:aggregate-fosd}
	For every date $t$,
	\[
	K_t\mid\omega=1
	\succeq_{\mathrm{FOSD}}
	K_t\mid\omega=0
	\]
	under the agents' subjective probability system.
\end{Lemma}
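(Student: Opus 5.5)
The natural route is a \emph{pathwise coupling} of the two subjective state-conditional systems, followed by induction on $t$. Both systems will be built on one probability space. All randomness other than the state is generated by common uniforms: for each agent $i$, a uniform $V_i$ that produces the private log-likelihood ratio by quantile transform, $\ell_i^\theta:=(F^\ell_\theta)^{-1}(V_i)$; and for each pair $i<t$, a uniform $U_{i,t}$ that produces the display draw $\mathbf 1\{U_{i,t}\le \widehat q\}$ applied to a predecessor with $a_i=1$. Since $dF_1^\ell(x)=e^x dF_0^\ell(x)$, the law $F^\ell_1$ dominates $F^\ell_0$ in the likelihood-ratio order, hence in FOSD. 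Therefore $\ell_i^1\ge \ell_i^0$ pathwise. This is the same change-of-measure computation used in Step 1 of the proof of Proposition~\ref{prop:permanent-board-fragility} and in Lemma~\ref{lem:designer-private-action}. In state $\theta$, agent $i$ plays $a_i^\theta=\mathbf 1\{\widehat R_i(N_i^\theta)+\ell_i^\theta\ge 0\}$ by \eqref{eq:miscalibrated-decision}. Here $\widehat R_i$ is the \emph{same} deterministic function in both states, because it depends only on the subjective model and not on the realized state.

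The inductive hypothesis would be the pathwise dominance $a_j^1\ge a_j^0$ for all $j<i$. Under common display uniforms this gives $N_i^1\ge N_i^0$, and the target is to deduce $a_i^1\ge a_i^0$; summing then yields $K_t^1\ge K_t^0$ pathwise, which is stronger than the claimed FOSD. The private-signal part of the step is immediate, since $\ell_i^1\ge\ell_i^0$. The social part requires $\widehat R_i(N_i^1)\ge \widehat R_i(N_i^0)$, i.e.\ monotonicity of the subjective social likelihood ratio in the displayed count.

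\textbf{Main obstacle.} That monotonicity is exactly the MLR property the surrounding text says it does not want to assume. I therefore expect this step to be the crux, and I would try to avoid needing monotonicity of $n\mapsto\widehat R_i(n)$ on all of $\{0,\dots,i-1\}$. First attempt: prove by simultaneous induction that $\widehat g_{1,i}/\widehat g_{0,i}$ is nondecreasing. This follows if $K_{i-1}\mid\omega=1$ dominates $K_{i-1}\mid\omega=0$ in the likelihood-ratio order, because binomial thinning preserves likelihood-ratio ordering (the binomial kernel is TP$_2$). So the induction would carry the stronger hypothesis of likelihood-ratio dominance of $K_{i-1}$ and try to propagate it through the step $K_i=K_{i-1}+a_i$.

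Fallback, if likelihood-ratio dominance fails to propagate: drop the pathwise coupling and work directly with tail probabilities. The aim would be to show $\widehat{\Pr}_1(K_t\ge k)\ge \widehat{\Pr}_0(K_t\ge k)$ by conditioning on $K_{t-1}$ and using that, for every history, the conditional probability of $a_t=1$ is $\rho_1(\widehat R)\ge\rho_0(\widehat R)$ by \eqref{eq:strict-action-mlr}. One would then need to control the cross term created by the dependence of $\widehat R$ on $K_{t-1}$ through $N_t$. Whichever version goes through, the final step is to pass from the one-step comparison to all $k$ and conclude FOSD at every date $t$.
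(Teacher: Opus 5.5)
There is a genuine gap: the inductive step is never closed. Your main coupling targets agent-by-agent dominance $a_i^1\ge a_i^0$. As you note, this requires $n\mapsto\widehat R_i(n)$ to be nondecreasing, which is an MLR property that is not available. The paper states explicitly that the aggregate count need not satisfy MLR, and it has to impose even upper-tail monotonicity separately as Assumption~\ref{ass:upper-tail-lr}. Without that monotonicity, your coupling can produce $N_i^1>N_i^0$ together with $\widehat R_i(N_i^1)<\widehat R_i(N_i^0)$, and hence reversed individual actions. Agent-level dominance is therefore the wrong target.

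Your first repair does not rescue this. Suppose likelihood-ratio dominance of $K_{i-1}$ makes $\widehat R_i$ monotone, so that the increment probabilities are ordered across states and nondecreasing in $k$. The step $K_i=K_{i-1}+a_i$ still mixes two shifted copies of the law of $K_{i-1}$ with state-dependent weights, and such mixing need not preserve the likelihood-ratio order. For a generic skip-free chain, take both laws of $K_{i-1}$ uniform on $\{0,1\}$, increment probabilities $(0.4,0.5)$ in state $1$ and $(0.1,0.5)$ in state $0$. The likelihood ratios of $K_i$ on $\{0,1,2\}$ are then $2/3,\,3/2,\,1$, even though FOSD survives. So that propagation cannot be proved from the properties you would have. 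Your fallback has the right instinct in conditioning on $K_{t-1}$, but it worries about a cross term that does not exist and leaves the real difficulty untouched.

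Two ideas are missing, and together they are the paper's proof.

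\emph{First idea: a one-step comparison at a common level.} Under anonymous fresh sampling, given the entire past, $N_t\sim\operatorname{Bin}(K_{t-1},\widehat q)$ with fresh display draws, and $s_t$ is fresh. Hence $K$ is a Markov chain in each subjective state, with $\widehat{\Pr}_\omega(a_t=1\mid\mathcal F_{t-1})=\alpha_{\omega,t}(K_{t-1})$, where $\alpha_{\omega,t}(k)=\sum_n\binom{k}{n}\widehat q^{\,n}(1-\widehat q)^{k-n}\rho_\omega(\widehat R_t(n))$. At a fixed level $k$, the law of $N_t$ is the same in both states and $\widehat R_t$ is the same function. So $\alpha_{1,t}(k)\ge\alpha_{0,t}(k)$ follows from $\rho_1\ge\rho_0$ alone, with no monotonicity of $\widehat R_t$ in $n$ and no cross term.

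\emph{Second idea: the chain is skip-free.} Because $K$ rises by at most one per step, you never need to compare $\alpha_{1,t}(k)$ with $\alpha_{0,t}(k')$ for $k\neq k'$. Couple the two count chains, not individual agents, with a common uniform only when $K_{t-1}^1=K_{t-1}^0$. If $K_{t-1}^1>K_{t-1}^0$, the weak order survives whatever the two actions are.

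The analytic version of your fallback is the identity
$\widehat{\Pr}_\omega(K_t\ge j)=(1-\alpha_{\omega,t}(j-1))\,\widehat{\Pr}_\omega(K_{t-1}\ge j)+\alpha_{\omega,t}(j-1)\,\widehat{\Pr}_\omega(K_{t-1}\ge j-1)$.
The right-hand side is nondecreasing in $\alpha_{\omega,t}(j-1)$ and in each tail probability. You can therefore first replace $\alpha_{1,t}$ by $\alpha_{0,t}$, then replace the state-$1$ tails by the state-$0$ tails using the inductive hypothesis, and the induction closes.
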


\begin{proof}
	For $k\in\{0,\ldots,t-1\}$, define
	\[
	\alpha_{\omega,t}(k)
	:=
	\widehat{\Pr}_{\omega}
	(a_t=1\mid K_{t-1}=k).
	\]
	Conditional on $K_{t-1}=k$, the displayed count has the same sampling law
	in the two states:
	\[
	N_t\mid K_{t-1}=k
	\sim
	\operatorname{Bin}(k,\widehat q).
	\]
	Let
	\[
	\widehat R_t(n)
	:=
	r_0+
	\log
	\frac{\widehat g_{1,t}(n)}
	{\widehat g_{0,t}(n)}
	\]
	be the agent's subjective log odds after observing $N_t=n$, and write
	\[
	\rho_\omega(R)
	:=
	\Pr_\omega(\ell(s)+R\ge0).
	\]
	Then
	\[
	\alpha_{\omega,t}(k)
	=
	\sum_{n=0}^k
	\binom{k}{n}
	\widehat q^n(1-\widehat q)^{k-n}
	\rho_\omega\!\left(\widehat R_t(n)\right).
	\tag{A.1}
	\]
	
	The private-signal MLR property implies that, for every finite $R$,
	\[
	\rho_1(R)\ge\rho_0(R).
	\]
	Consequently, (A.1) gives
	\[
	\alpha_{1,t}(k)\ge\alpha_{0,t}(k)
	\qquad\text{for every }t,k.
	\tag{A.2}
	\]
	
	We now construct the two aggregate-count processes on a common probability
	space.  Suppose inductively that
	\[
	K_{t-1}^{1}\ge K_{t-1}^{0}.
	\]
	If
	\[
	K_{t-1}^{1}>K_{t-1}^{0},
	\]
	then, since each process can increase by at most one during date $t$,
	\[
	K_t^{1}\ge K_t^{0}
	\]
	regardless of the two realized actions.
	
	If instead
	\[
	K_{t-1}^{1}=K_{t-1}^{0}=k,
	\]
	draw a common random variable $U_t\sim\operatorname{Unif}[0,1]$ and set
	\[
	a_t^\omega
	:=
	\mathbf 1
	\{U_t\le\alpha_{\omega,t}(k)\}.
	\]
	By (A.2),
	\[
	a_t^1\ge a_t^0,
	\]
	and therefore again
	\[
	K_t^1\ge K_t^0.
	\]
	
	Starting from
	\[
	K_0^1=K_0^0=0,
	\]
	induction yields a coupling such that
	\[
	K_t^1\ge K_t^0
	\qquad\text{a.s. for every }t.
	\]
	This is equivalent to
	\[
	K_t\mid\omega=1
	\succeq_{\mathrm{FOSD}}
	K_t\mid\omega=0.
	\]
\end{proof}

\begin{proof}[Proof of Lemma~\ref{lem:upper-tail-sign}]
	Write
	\[
	T=t-1
	\]
	and let
	\[
	\widehat\mu_{\omega,T}(k)
	:=
	\widehat{\Pr}_{\omega}(K_T=k).
	\]
	For a fixed displayed count $n$, define the thinning kernel
	\[
	h_n(k)
	:=
	\begin{cases}
		\displaystyle
		\binom{k}{n}
		\widehat q^n(1-\widehat q)^{k-n},
		& k\ge n,\\[2mm]
		0,
		& k<n.
	\end{cases}
	\]
	Then
	\[
	\widehat g_{\omega,t}(n)
	=
	\widehat{\mathbb E}_{\omega}[h_n(K_T)].
	\tag{A.3}
	\]
	
	Suppose
	\[
	n>\widehat q T.
	\]
	For $n\le k<T$,
	\[
	\frac{h_n(k+1)}{h_n(k)}
	=
	(1-\widehat q)\frac{k+1}{k+1-n}.
	\tag{A.4}
	\]
	The right-hand side exceeds one if and only if
	\[
	n>\widehat q(k+1).
	\]
	Since $k+1\le T$ and $n>\widehat qT$, this inequality holds.  Hence
	\[
	k\longmapsto h_n(k)
	\]
	is weakly increasing on $\{0,\ldots,T\}$.
	
	Lemma~\ref{lem:aggregate-fosd} therefore implies
	\[
	\begin{aligned}
		\widehat g_{1,t}(n)
		&=
		\widehat{\mathbb E}_1[h_n(K_T)]
		\\
		&\ge
		\widehat{\mathbb E}_0[h_n(K_T)]
		\\
		&=
		\widehat g_{0,t}(n).
	\end{aligned}
	\]
	Thus
	\[
	n>\widehat q(t-1)
	\quad\Longrightarrow\quad
	\log
	\frac{\widehat g_{1,t}(n)}
	{\widehat g_{0,t}(n)}
	\ge0.
	\]
\end{proof}

\begin{proof}[Proof of Proposition~\ref{prop:extreme-under-calibration}]
	Suppose
	\[
	\widehat q<q\alpha_0.
	\]
	Choose
	\[
	x_*
	\in
	\left(
	\frac{\widehat q}{q},
	\alpha_0
	\right).
	\tag{A.5}
	\]
	Then
	\[
	qx_*>\widehat q
	\qquad\text{and}\qquad
	\alpha_0>x_*.
	\tag{A.6}
	\]
	
	We first establish an inward drift whenever the aggregate action-$1$ share
	is at least $x_*$.  Consider any realized history satisfying
	\[
	K_{t-1}\ge x_*(t-1).
	\tag{A.7}
	\]
	Under the true sampling technology,
	\[
	N_t\mid\mathcal F_{t-1}
	\sim
	\operatorname{Bin}(K_{t-1},q).
	\]
	Its conditional mean satisfies
	\[
	\mathbb E_0[N_t\mid\mathcal F_{t-1}]
	=
	qK_{t-1}
	\ge
	qx_*(t-1).
	\]
	Since $qx_*>\widehat q$, a standard binomial concentration bound gives a
	constant $c>0$, independent of the particular history satisfying (A.7),
	such that
	\[
	\Pr_0
	\left(
	N_t\le\widehat q(t-1)
	\,\middle|\,
	\mathcal F_{t-1}
	\right)
	\le
	e^{-c(t-1)}.
	\tag{A.8}
	\]
	
	On the event
	\[
	N_t>\widehat q(t-1),
	\]
	Lemma~\ref{lem:upper-tail-sign} gives
	\[
	\log
	\frac{\widehat g_{1,t}(N_t)}
	{\widehat g_{0,t}(N_t)}
	\ge0.
	\]
	Hence the agent's subjective log odds satisfy
	\[
	\widehat R_t(N_t)
	=
	r_0+
	\log
	\frac{\widehat g_{1,t}(N_t)}
	{\widehat g_{0,t}(N_t)}
	\ge r_0.
	\]
	Therefore, in the true state $\omega=0$,
	\[
	\Pr_0(a_t=1\mid N_t)
	=
	\rho_0\!\left(\widehat R_t(N_t)\right)
	\ge
	\rho_0(r_0)
	=
	\alpha_0.
	\tag{A.9}
	\]
	
	Combining (A.8) and (A.9), uniformly over histories satisfying (A.7),
	\[
	\begin{aligned}
		\Pr_0(a_t=1\mid\mathcal F_{t-1})
		&\ge
		\alpha_0
		\Pr_0
		\left(
		N_t>\widehat q(t-1)
		\,\middle|\,
		\mathcal F_{t-1}
		\right)
		\\
		&\ge
		\alpha_0
		\left(1-e^{-c(t-1)}\right).
	\end{aligned}
	\tag{A.10}
	\]
	Since $\alpha_0>x_*$, choose
	\[
	r\in(x_*,\alpha_0).
	\tag{A.11}
	\]
	Equation (A.10) implies that there exists a finite date $T$ such that, for
	every $t>T$,
	\[
	K_{t-1}\ge x_*(t-1)
	\quad\Longrightarrow\quad
	\Pr_0(a_t=1\mid\mathcal F_{t-1})\ge r.
	\tag{A.12}
	\]
	
	We next show that the process enters this region deeply enough with positive
	probability.  By upper-unbounded private beliefs, for every finite social
	history there is strictly positive probability that the current private
	signal induces action $1$.  Hence, for every finite $T$,
	\[
	E_T
	:=
	\{a_1=\cdots=a_T=1\}
	\]
	has strictly positive probability in state $0$:
	\[
	\Pr_0(E_T)>0.
	\tag{A.13}
	\]
	On $E_T$,
	\[
	K_T=T,
	\]
	so the surplus above the boundary $x_*t$ equals
	\[
	D_T
	:=
	K_T-x_*T
	=
	(1-x_*)T>0.
	\tag{A.14}
	\]
	
	For subsequent dates, while
	\[
	K_{t-1}\ge x_*(t-1),
	\]
	condition (A.12) permits a coupling of the actual action with i.i.d. random
	variables
	\[
	Y_t\sim\operatorname{Bernoulli}(r)
	\]
	such that
	\[
	a_t\ge Y_t
	\]
	until the first exit from the region.
	
	Consider the comparison process
	\[
	S_m
	:=
	(1-x_*)T
	+
	\sum_{j=1}^m
	(Y_{T+j}-x_*).
	\tag{A.15}
	\]
	Its increments have strictly positive mean,
	\[
	\mathbb E[Y_{T+j}-x_*]
	=
	r-x_*>0.
	\]
	Hence it has a strictly positive probability of remaining nonnegative
	forever.
	
	For completeness, choose $\lambda>0$ sufficiently small that
	\[
	\psi(\lambda)
	:=
	\mathbb E
	\left[
	e^{-\lambda(Y_{T+1}-x_*)}
	\right]
	<1,
	\]
	which is possible because
	\[
	\mathbb E[Y_{T+1}-x_*]>0.
	\]
	Let
	\[
	\tau
	:=
	\inf\{m\ge0:S_m<0\}.
	\]
	Then
	\[
	e^{-\lambda S_{m\wedge\tau}}
	\]
	is a nonnegative supermartingale.  Since
	\[
	e^{-\lambda S_\tau}>1
	\]
	on $\{\tau<\infty\}$, optional stopping yields
	\[
	\Pr(\tau<\infty)
	\le
	e^{-\lambda(1-x_*)T}
	<1.
	\tag{A.16}
	\]
	Thus, conditional on the finite seed event $E_T$, the comparison process,
	and therefore the actual aggregate-count process under the coupling, remains
	above the boundary forever with strictly positive probability.
	
	Consequently, there exists an event $\mathcal S$ satisfying
	\[
	\Pr_0(\mathcal S)>0
	\]
	such that
	\[
	K_t\ge x_*t
	\]
	for every sufficiently large $t$ on $\mathcal S$.
	
	Finally, define the finite-horizon survival events
	\[
	\mathcal S_t
	:=
	E_T
	\cap
	\left\{
	K_j\ge x_*j
	\text{ for every }T\le j\le t
	\right\}.
	\]
	Then
	\[
	\mathcal S_t\downarrow\mathcal S
	\]
	and
	\[
	\Pr_0(\mathcal S)>0.
	\]
	By (A.12), for all sufficiently large $t$,
	\[
	\begin{aligned}
		\Pr_0(a_t=1)
		&\ge
		\Pr_0(a_t=1,\mathcal S_{t-1})
		\\
		&=
		\mathbb E_0
		\left[
		\mathbf 1_{\mathcal S_{t-1}}
		\Pr_0(a_t=1\mid\mathcal F_{t-1})
		\right]
		\\
		&\ge
		r\,\Pr_0(\mathcal S_{t-1}).
	\end{aligned}
	\]
	Taking the lower limit and using
	\[
	\Pr_0(\mathcal S_{t-1})
	\longrightarrow
	\Pr_0(\mathcal S)>0,
	\]
	we obtain
	\[
	\liminf_{t\to\infty}\Pr_0(a_t=1)
	\ge
	r\,\Pr_0(\mathcal S)
	>0.
	\]
	Hence asymptotic learning fails.
\end{proof}

\subsubsection{Proofs for Arbitrary Under-Calibration}

We first establish the upper-tail amplification result.

\begin{proof}[Proof of Lemma~\ref{lem:upper-tail-amplification}]
	Write
	\[
	T=t-1.
	\]
	Consider the probability system perceived by the agents.  Within this
	subjective system, the sampling intensity is $\widehat q$, so the agents are
	correctly calibrated.  By the correctly calibrated anonymous-learning
	result,
	\[
	\frac{K_T}{T}\longrightarrow 0
	\qquad\text{in }\widehat{\Pr}_0\text{-probability},
	\tag{A.1}
	\]
	whereas
	\[
	\frac{K_T}{T}\longrightarrow 1
	\qquad\text{in }\widehat{\Pr}_1\text{-probability}.
	\tag{A.2}
	\]
	
	Conditional on $K_T$,
	\[
	N_{T+1}\mid K_T
	\sim \operatorname{Bin}(K_T,\widehat q).
	\]
	Moreover,
	\[
	\widehat{\mathbb E}_\omega
	\left[
	\left(
	\frac{N_{T+1}}{T}
	-
	\widehat q\frac{K_T}{T}
	\right)^2
	\,\middle|\,
	K_T
	\right]
	=
	\frac{\widehat q(1-\widehat q)K_T}{T^2}
	\le
	\frac{\widehat q(1-\widehat q)}{T}.
	\]
	Hence
	\[
	\frac{N_{T+1}}{T}
	-
	\widehat q\frac{K_T}{T}
	\longrightarrow0
	\qquad\text{in probability},
	\]
	under either subjective state.  Combining this with (A.1)--(A.2) gives
	\[
	\frac{N_{T+1}}{T}\longrightarrow0
	\qquad\text{under }\widehat{\Pr}_0,
	\tag{A.3}
	\]
	and
	\[
	\frac{N_{T+1}}{T}\longrightarrow\widehat q
	\qquad\text{under }\widehat{\Pr}_1.
	\tag{A.4}
	\]
	
	Fix $y>\widehat q$.  Choose $\eta>0$ sufficiently small that
	\[
	0<\eta<\delta,
	\qquad
	\widehat q+\eta<y,
	\]
	and define
	\[
	A_T
	:=
	\left\{
	(\widehat q-\eta)T
	\le N_{T+1}
	\le
	(\widehat q+\eta)T
	\right\}.
	\]
	By (A.3)--(A.4),
	\[
	\widehat{\Pr}_1(A_T)\longrightarrow1,
	\qquad
	\widehat{\Pr}_0(A_T)\longrightarrow0.
	\tag{A.5}
	\]
	
	Using the likelihood ratio
	\[
	L_{T+1}(n)
	=
	\frac{\widehat g_{1,T+1}(n)}
	{\widehat g_{0,T+1}(n)},
	\]
	we have
	\[
	\begin{aligned}
		\widehat{\Pr}_1(A_T)
		&=
		\sum_{n\in A_T}
		L_{T+1}(n)\widehat g_{0,T+1}(n).
	\end{aligned}
	\]
	For all sufficiently large $T$, Assumption~\ref{ass:upper-tail-lr} applies
	throughout $A_T$.  Therefore
	\[
	L_{T+1}(n)
	\le
	L_{T+1}
	\left(
	\left\lfloor(\widehat q+\eta)T\right\rfloor
	\right)
	\qquad\text{for every }n\in A_T.
	\]
	It follows that
	\[
	\widehat{\Pr}_1(A_T)
	\le
	L_{T+1}
	\left(
	\left\lfloor(\widehat q+\eta)T\right\rfloor
	\right)
	\widehat{\Pr}_0(A_T),
	\]
	and hence
	\[
	L_{T+1}
	\left(
	\left\lfloor(\widehat q+\eta)T\right\rfloor
	\right)
	\ge
	\frac{\widehat{\Pr}_1(A_T)}
	{\widehat{\Pr}_0(A_T)}
	\longrightarrow+\infty.
	\tag{A.6}
	\]
	
	Since $\widehat q+\eta<y$ and $L_{T+1}$ is weakly increasing throughout
	the relevant upper-tail region,
	\[
	\inf_{n\ge yT}L_{T+1}(n)
	\ge
	L_{T+1}
	\left(
	\left\lfloor(\widehat q+\eta)T\right\rfloor
	\right).
	\]
	Combining this inequality with (A.6) gives
	\[
	\inf_{n\ge yT}
	\log
	\frac{\widehat g_{1,T+1}(n)}
	{\widehat g_{0,T+1}(n)}
	\longrightarrow+\infty.
	\]
	This proves the result.
\end{proof}

\begin{proof}[Proof of Proposition~\ref{prop:arbitrary-under-calibration}]
	Suppose
	\[
	\widehat q<q.
	\]
	Choose
	\[
	x_*
	\in
	\left(
	\frac{\widehat q}{q},1
	\right).
	\]
	Then
	\[
	qx_*>\widehat q.
	\]
	Choose further
	\[
	y\in(\widehat q,qx_*).
	\tag{A.7}
	\]
	
	We first show that, whenever the realized action share is at least $x_*$,
	the next agent chooses the incorrect action $1$ with probability converging
	uniformly to one.
	
	Fix a history satisfying
	\[
	K_{t-1}\ge x_*(t-1).
	\tag{A.8}
	\]
	Under the objective sampling technology,
	\[
	N_t\mid\mathcal F_{t-1}
	\sim
	\operatorname{Bin}(K_{t-1},q).
	\]
	Hence
	\[
	\mathbb E_0[N_t\mid\mathcal F_{t-1}]
	=
	qK_{t-1}
	\ge
	qx_*(t-1).
	\]
	Since $y<qx_*$, Hoeffding's inequality implies that there exists
	$c>0$, independent of the particular history satisfying (A.8), such that
	\[
	\Pr_0
	\left(
	N_t<y(t-1)
	\,\middle|\,
	\mathcal F_{t-1}
	\right)
	\le
	e^{-c(t-1)}.
	\tag{A.9}
	\]
	
	By Lemma~\ref{lem:upper-tail-amplification}, define
	\[
	b_t
	:=
	\inf_{n\ge y(t-1)}
	\log
	\frac{\widehat g_{1,t}(n)}
	{\widehat g_{0,t}(n)}.
	\]
	Then
	\[
	b_t\longrightarrow+\infty.
	\tag{A.10}
	\]
	
	Let
	\[
	r_0
	=
	\log\frac{\mu_0(1)}{\mu_0(0)}
	\]
	denote the prior log odds.  Whenever $N_t\ge y(t-1)$, the agent's
	subjective social log-likelihood ratio satisfies
	\[
	\widehat R_t(N_t)
	=
	r_0+
	\log
	\frac{\widehat g_{1,t}(N_t)}
	{\widehat g_{0,t}(N_t)}
	\ge
	r_0+b_t.
	\tag{A.11}
	\]
	Write
	\[
	\rho_0(R)
	:=
	\Pr_0(\ell(s_t)+R\ge0).
	\]
	Because every realized private log-likelihood ratio is finite and private
	beliefs are unbounded,
	\[
	\rho_0(R)\longrightarrow1
	\qquad\text{as }R\to+\infty.
	\tag{A.12}
	\]
	
	Combining (A.9)--(A.12), uniformly over all histories satisfying (A.8),
	\[
	\begin{aligned}
		\Pr_0(a_t=1\mid\mathcal F_{t-1})
		&\ge
		\Pr_0
		\left(
		N_t\ge y(t-1)
		\,\middle|\,
		\mathcal F_{t-1}
		\right)
		\rho_0(r_0+b_t)
		\\
		&\ge
		\left(1-e^{-c(t-1)}\right)
		\rho_0(r_0+b_t)
		\longrightarrow1.
	\end{aligned}
	\tag{A.13}
	\]
	Thus
	\[
	\inf_{
		\mathcal F_{t-1}:
		K_{t-1}\ge x_*(t-1)
	}
	\Pr_0(a_t=1\mid\mathcal F_{t-1})
	\longrightarrow1.
	\tag{A.14}
	\]
	
	Choose any
	\[
	r\in(x_*,1).
	\]
	By (A.14), there exists a finite date $T$ such that for every $t>T$,
	\[
	K_{t-1}\ge x_*(t-1)
	\quad\Longrightarrow\quad
	\Pr_0(a_t=1\mid\mathcal F_{t-1})\ge r.
	\tag{A.15}
	\]
	
	We now construct a positive-probability event on which the process never
	leaves this region.  By two-sided unbounded private beliefs, the finite
	all-one event
	\[
	E_T
	:=
	\{a_1=\cdots=a_T=1\}
	\]
	has strictly positive probability under state $0$.  On $E_T$,
	\[
	K_T=T,
	\]
	and hence the initial surplus above the boundary $x_*t$ is
	\[
	D_T
	:=
	K_T-x_*T
	=
	(1-x_*)T>0.
	\tag{A.16}
	\]
	
	For $t>T$, while the process remains in the region
	\[
	K_{t-1}\ge x_*(t-1),
	\]
	use a common uniform random variable to couple the actual action $a_t$
	from below with an independent Bernoulli random variable
	\[
	Y_t\sim\operatorname{Bernoulli}(r).
	\]
	Condition (A.15) guarantees that the coupling may be chosen so that
	\[
	a_t\ge Y_t
	\]
	until the first exit from the region.
	
	Consider therefore the comparison random walk
	\[
	S_m
	:=
	(1-x_*)T
	+
	\sum_{j=1}^m
	(Y_{T+j}-x_*).
	\tag{A.17}
	\]
	Its increments have strictly positive mean
	\[
	\mathbb E[Y_{T+j}-x_*]
	=
	r-x_*>0.
	\]
	Consequently, starting from the strictly positive initial value in (A.17),
	there is a strictly positive probability that $S_m$ never becomes negative.
	For completeness, choose $\lambda>0$ sufficiently small that
	\[
	\psi(\lambda)
	:=
	\mathbb E
	\left[
	e^{-\lambda(Y_{T+1}-x_*)}
	\right]
	<1.
	\]
	If
	\[
	\tau:=\inf\{m\ge0:S_m<0\},
	\]
	then
	\[
	e^{-\lambda S_{m\wedge\tau}}
	\]
	is a nonnegative supermartingale.  Optional stopping therefore yields
	\[
	\Pr(\tau<\infty)
	\le
	e^{-\lambda(1-x_*)T}
	<1.
	\tag{A.18}
	\]
	Hence the comparison process remains above zero forever with strictly
	positive probability.
	
	By the coupling, on the intersection of this survival event with $E_T$,
	the actual process satisfies
	\[
	K_t\ge x_*t
	\qquad\text{for every }t\ge T.
	\]
	Thus there exists an event $\mathcal S$ such that
	\[
	\Pr_0(\mathcal S)>0
	\]
	and
	\[
	\frac{K_t}{t}\ge x_*
	\]
	on $\mathcal S$ for all sufficiently large $t$.
	
	Finally, let
	\[
	\mathcal S_t
	:=
	E_T
	\cap
	\left\{
	K_j\ge x_*j
	\text{ for }T\le j\le t
	\right\}.
	\]
	Then
	\[
	\mathcal S_t\downarrow\mathcal S
	\qquad\text{and}\qquad
	\Pr_0(\mathcal S)>0.
	\]
	For all sufficiently large $t$, (A.15) gives
	\[
	\begin{aligned}
		\Pr_0(a_t=1)
		&\ge
		\Pr_0(a_t=1,\mathcal S_{t-1})
		\\
		&=
		\mathbb E_0
		\left[
		\mathbf 1_{\mathcal S_{t-1}}
		\Pr_0(a_t=1\mid\mathcal F_{t-1})
		\right]
		\\
		&\ge
		r\,\Pr_0(\mathcal S_{t-1}).
	\end{aligned}
	\]
	Letting $t\to\infty$,
	\[
	\liminf_{t\to\infty}\Pr_0(a_t=1)
	\ge
	r\,\Pr_0(\mathcal S)
	>0.
	\]
	Hence asymptotic learning fails.
\end{proof}

\subsection{Proofs for Section~\ref{sec:welfare-selection}}
\label{app:welfare-selection}

We first collect two consequences of no introspection and the
regularly-varying private-belief tail.

\begin{Lemma}
	\label{lem:g-selection}
	Suppose
	\[
	dF_0(p)=\frac{1-p}{p}\,dF_1(p)
	\]
	and
	\[
	F_1(p)\sim Cp^\alpha,
	\qquad p\downarrow0.
	\]
	Then necessarily $\alpha>1$, and
	\begin{equation}
		F_0(p)
		\sim
		\frac{\alpha C}{\alpha-1}p^{\alpha-1}.
		\label{eq:F0-tail-selection}
	\end{equation}
	Moreover, for
	\[
	g(e)=eF_0(e)-(1-e)F_1(e),
	\]
	we have
	\begin{equation}
		g(e)
		=
		\int_{(0,e]}
		\frac{e-p}{p}\,dF_1(p),
		\label{eq:g-integral-selection}
	\end{equation}
	so that $g$ is strictly increasing on $(0,1)$ and
	\begin{equation}
		g(e)
		\sim
		\frac{C}{\alpha-1}e^\alpha
		\qquad
		\text{as }e\downarrow0.
		\label{eq:g-tail-selection-proof}
	\end{equation}
\end{Lemma}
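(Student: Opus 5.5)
The approach is to reduce everything to the tail of $F_1$ near $0$ through the no-introspection relation, working with Stieltjes integrals and integration by parts. First, I will write
\[
F_0(e)=\int_{(0,e]}\frac{1-p}{p}\,dF_1(p),
\]
which is finite for every $e$ because $F_0$ is a distribution function. Next, I will derive the integral representation \eqref{eq:g-integral-selection} by direct substitution:
\[
g(e)=e\int_{(0,e]}\frac{1-p}{p}\,dF_1(p)-(1-e)\int_{(0,e]}dF_1(p)=\int_{(0,e]}\Bigl[\frac{e(1-p)}{p}-(1-e)\Bigr]dF_1(p).
\]
The integrand simplifies to $(e-ep-p+ep)/p=(e-p)/p$. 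Since $(e-p)/p\ge 0$ on $(0,e]$, and since raising $e$ enlarges both the integrand and the domain, $g$ is nondecreasing. It is strictly increasing because $F_1$ puts positive mass on every interval $(0,e]$, which follows from $F_1(p)\sim Cp^\alpha>0$. For $e'>e$ this gives $g(e')-g(e)\ge (e'-e)\int_{(0,e]}p^{-1}\,dF_1>0$. The same positivity shows $g(e)>0$.

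For the tail of $F_0$, I will integrate by parts:
\[
\int_{(0,e]}p^{-1}\,dF_1(p)=\frac{F_1(e)}{e}+\int_0^e \frac{F_1(p)}{p^2}\,dp,
\]
where the boundary term at $0$ vanishes because $F_1(p)/p\sim Cp^{\alpha-1}$. This requires $\alpha>1$, and that is also where the necessity of $\alpha>1$ comes from. If $\alpha\le1$, then $F_1(p)/p^2\gtrsim Cp^{\alpha-2}$ is not integrable near $0$, so $F_0(e)=\infty$, a contradiction. When $\alpha=1$ the integral diverges logarithmically, so this case is also excluded. With $\alpha>1$, Karamata-type integration of the asymptotic equivalence (elementary here, since $F_1(p)/p^2\sim Cp^{\alpha-2}$ is integrable at $0$) gives
\[
\int_0^e F_1(p)\,p^{-2}\,dp\sim \frac{C}{\alpha-1}e^{\alpha-1},
\]
and therefore
\[
\int_{(0,e]}p^{-1}\,dF_1\sim C\Bigl(1+\frac{1}{\alpha-1}\Bigr)e^{\alpha-1}=\frac{\alpha C}{\alpha-1}e^{\alpha-1}.
\]
Since $F_0(e)=\int p^{-1}\,dF_1-F_1(e)$ and $F_1(e)=o(e^{\alpha-1})$, this yields \eqref{eq:F0-tail-selection}.

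Finally, I will write
\[
g(e)=e\int_{(0,e]}p^{-1}\,dF_1-F_1(e)\sim\Bigl(\frac{\alpha C}{\alpha-1}-C\Bigr)e^\alpha=\frac{C}{\alpha-1}e^\alpha .
\]
The main obstacle is the asymptotic integration step: I must justify that the relation $\sim$ passes through the integral and through the boundary terms. I will handle this with an $\epsilon$-sandwich, using $(1-\epsilon)Cp^\alpha\le F_1(p)\le(1+\epsilon)Cp^\alpha$ for $p<p_\epsilon$, together with atomlessness so that the integration by parts has no jump corrections.
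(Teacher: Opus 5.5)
Your proposal is correct and follows essentially the same route as the paper: you use the no-introspection identity $F_0(e)=\int_{(0,e]}p^{-1}\,dF_1-F_1(e)$, integration by parts to get $\int_{(0,e]}p^{-1}\,dF_1=F_1(e)/e+\int_0^e F_1(p)p^{-2}\,dp$, direct substitution for the integral form of $g$, and subtraction of leading terms for the tail of $g$. Your explicit bound $g(e')-g(e)\ge(e'-e)\int_{(0,e]}p^{-1}\,dF_1$, and your use of $F_1(p)\sim Cp^\alpha$ rather than unbounded beliefs for positivity, are simply more detailed versions of the paper's one-line claims. One small tightening concerns the $\alpha>1$ step: as written, you use a boundary term that ``requires $\alpha>1$'' in order to prove $\alpha>1$, which looks circular. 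Deriving the integration-by-parts identity via Tonelli from $p^{-1}=e^{-1}+\int_p^e s^{-2}\,ds$ avoids this, since that identity holds in $[0,\infty]$ for every $\alpha$, has no boundary term at $0$, and does not need atomlessness.
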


\begin{proof}
	By no introspection,
	\[
	F_0(x)
	=
	\int_{(0,x]}
	\frac{1-p}{p}\,dF_1(p)
	=
	\int_{(0,x]}
	\frac{1}{p}\,dF_1(p)-F_1(x).
	\]
	Finiteness of $F_0(x)$ near zero requires $\alpha>1$.
	
	Stieltjes integration by parts gives
	\[
	\int_{(0,x]}
	\frac{1}{p}\,dF_1(p)
	=
	\frac{F_1(x)}{x}
	+
	\int_0^x
	\frac{F_1(p)}{p^2}\,dp.
	\]
	Using
	\[
	F_1(p)\sim Cp^\alpha,
	\]
	we obtain
	\[
	\frac{F_1(x)}{x}
	\sim
	Cx^{\alpha-1}
	\]
	and
	\[
	\int_0^x
	\frac{F_1(p)}{p^2}\,dp
	\sim
	C\int_0^x p^{\alpha-2}\,dp
	=
	\frac{C}{\alpha-1}x^{\alpha-1}.
	\]
	Hence
	\[
	\int_{(0,x]}
	\frac{1}{p}\,dF_1(p)
	\sim
	\frac{\alpha C}{\alpha-1}x^{\alpha-1}.
	\]
	Since
	\[
	F_1(x)=O(x^\alpha)
	=o(x^{\alpha-1}),
	\]
	equation~\eqref{eq:F0-tail-selection} follows.
	
	Next,
	\begin{align*}
		g(e)
		&=
		e\int_{(0,e]}
		\frac{1-p}{p}\,dF_1(p)
		-
		(1-e)F_1(e)
		\\
		&=
		\int_{(0,e]}
		\left[
		e\frac{1-p}{p}-(1-e)
		\right]dF_1(p)
		\\
		&=
		\int_{(0,e]}
		\frac{e-p}{p}\,dF_1(p).
	\end{align*}
	This proves~\eqref{eq:g-integral-selection}. The integrand is
	nonnegative, and unbounded private beliefs imply that every interval
	$(0,e]$ has positive $F_1$-probability. Hence
	\[
	g(e)>0
	\qquad
	\text{for every }e>0.
	\]
	The same representation also shows that $g$ is strictly increasing.
	
	Finally, using~\eqref{eq:F0-tail-selection},
	\begin{align*}
		g(e)
		&=
		eF_0(e)-(1-e)F_1(e)
		\\
		&\sim
		\frac{\alpha C}{\alpha-1}e^\alpha
		-
		Ce^\alpha
		\\
		&=
		\frac{C}{\alpha-1}e^\alpha.
	\end{align*}
	This proves~\eqref{eq:g-tail-selection-proof}.
\end{proof}

\begin{proof}[Proof of Proposition~\ref{prop:neutral-thinning-welfare}]
	Because both the private-signal structure and the neutral selection rule
	are symmetric, the error frequencies are the same in the two states.
	Define
	\[
	e_t
	:=
	\frac{1}{t}
	\sum_{i=1}^t
	\Pr_1(a_i=0).
	\]
	Equivalently,
	\[
	e_t
	=
	\frac{1}{t}
	\sum_{i=1}^t
	\Pr_0(a_i=1).
	\]
	Thus, conditional on state $1$, a uniformly sampled predecessor chooses
	action $1$ with probability $1-e_t$ and action $0$ with probability
	$e_t$. Conditional on state $0$, these probabilities are reversed.
	
	Consider agent $t+1$. If the sampled predecessor is displayed and her
	action is $1$, then the social posterior of state $1$ is
	\[
	\frac{1-e_t}{(1-e_t)+e_t}
	=
	1-e_t.
	\]
	If the displayed action is $0$, the social posterior is $e_t$.
	
	If an agent has private posterior $p$ and social posterior $r$, the flat
	prior implies that the two likelihood ratios multiply. Hence she chooses
	action $1$ if and only if
	\[
	\frac{p}{1-p}
	\frac{r}{1-r}
	\geq 1,
	\]
	or equivalently,
	\[
	p\geq1-r.
	\]
	
	Therefore, in state $1$, after observing displayed action $1$, the error
	probability is
	\[
	F_1(e_t).
	\]
	After observing displayed action $0$, it is
	\[
	F_1(1-e_t)
	=
	1-F_0(e_t),
	\]
	where the equality follows from symmetry.
	
	Conditional on receiving a displayed action, the error probability of
	agent $t+1$ is therefore
	\begin{align}
		L_{t+1}^{D}
		&=
		(1-e_t)F_1(e_t)
		+
		e_t[1-F_0(e_t)]
		\nonumber\\
		&=
		e_t-
		\left[
		e_tF_0(e_t)-(1-e_t)F_1(e_t)
		\right]
		\nonumber\\
		&=
		e_t-g(e_t).
		\label{eq:displayed-error-neutral-proof}
	\end{align}
	
	Under neutral thinning, the event of no display is independent of the
	sampled action and therefore contains no information about the state.
	The social posterior is then the prior $1/2$, and the agent relies on
	her private signal alone. Her error probability in state $1$ is
	\[
	m:=F_1(1/2).
	\]
	Hence the total error probability of agent $t+1$ is
	\begin{equation}
		L_{t+1}^{q}
		=
		q[e_t-g(e_t)]
		+
		(1-q)m.
		\label{eq:current-error-neutral}
	\end{equation}
	
	Since $e_{t+1}$ is the expected error frequency among the first $t+1$
	agents,
	\[
	e_{t+1}
	=
	\frac{te_t+L_{t+1}^{q}}{t+1}.
	\]
	Combining this with~\eqref{eq:current-error-neutral},
	\begin{equation}
		e_{t+1}-e_t
		=
		\frac{1}{t+1}
		\left[
		(1-q)(m-e_t)-qg(e_t)
		\right].
		\label{eq:neutral-exact-recursion}
	\end{equation}
	
	Define
	\[
	H_q(e)
	:=
	(1-q)(m-e)-qg(e).
	\]
	
	\paragraph{Full display.}
	Suppose first that $q=1$. Then
	\[
	e_{t+1}-e_t
	=
	-\frac{g(e_t)}{t+1}.
	\]
	By Lemma~\ref{lem:g-selection},
	\[
	g(e)>0
	\qquad
	\text{for every }e>0,
	\]
	so $(e_t)$ is decreasing and therefore converges to some
	$e_\infty\geq0$.
	
	If $e_\infty>0$, continuity of $g$ implies that
	\[
	g(e_t)\geq c>0
	\]
	for all sufficiently large $t$. The recursion would then give
	\[
	e_{t+1}
	\leq
	e_t-\frac{c}{t+1}.
	\]
	Since
	\[
	\sum_t\frac1t=\infty,
	\]
	this would eventually force $e_t<0$, a contradiction. Hence
	\[
	e_t\longrightarrow0.
	\]
	
	Using Lemma~\ref{lem:g-selection},
	\[
	e_{t+1}-e_t
	=
	-\frac{C}{\alpha-1}
	\frac{e_t^\alpha}{t+1}
	(1+o(1)).
	\]
	Moreover,
	\[
	\frac{e_t-e_{t+1}}{e_t}
	=
	O\left(
	\frac{e_t^{\alpha-1}}{t}
	\right)
	\longrightarrow0,
	\]
	and therefore
	\[
	\frac{e_{t+1}}{e_t}\longrightarrow1.
	\]
	
	Let
	\[
	z_t:=e_t^{1-\alpha}.
	\]
	By the mean-value theorem, for some
	\[
	\xi_t\in(e_{t+1},e_t),
	\]
	\begin{align*}
		z_{t+1}-z_t
		&=
		(1-\alpha)\xi_t^{-\alpha}(e_{t+1}-e_t)
		\\
		&=
		\frac{C}{t+1}
		\left(\frac{e_t}{\xi_t}\right)^\alpha
		(1+o(1)).
	\end{align*}
	Since $\xi_t/e_t\to1$,
	\[
	z_{t+1}-z_t
	=
	\frac{C}{t+1}(1+o(1)).
	\]
	Summing,
	\[
	e_t^{1-\alpha}
	=
	C\log t+o(\log t),
	\]
	and hence
	\begin{equation}
		e_t
		\sim
		(C\log t)^{-1/(\alpha-1)}.
		\label{eq:e-full-rate-proof}
	\end{equation}
	
	From~\eqref{eq:displayed-error-neutral-proof},
	\[
	1-W_{t+1}(Q^F)
	=
	e_t-g(e_t).
	\]
	Because
	\[
	g(e_t)=O(e_t^\alpha)=o(e_t),
	\]
	equation~\eqref{eq:e-full-rate-proof} implies
	\[
	1-W_T(Q^F)
	\sim
	(C\log T)^{-1/(\alpha-1)}.
	\]
	
	\paragraph{Fixed neutral thinning.}
	Now suppose $q<1$. By Lemma~\ref{lem:g-selection}, $g$ is strictly
	increasing. Therefore $H_q$ is strictly decreasing. Moreover,
	\[
	H_q(0)=(1-q)m>0
	\]
	for $q<1$, while
	\[
	H_q(m)=-qg(m)\leq0.
	\]
	For $q>0$, the latter inequality is strict. Hence there exists a unique
	\[
	e_q\in(0,m)
	\]
	such that
	\[
	H_q(e_q)=0,
	\]
	or equivalently,
	\[
	(1-q)(m-e_q)=qg(e_q).
	\]
	For $q=0$, the unique fixed point is $e_0=m$.
	
	Equation~\eqref{eq:neutral-exact-recursion} is the deterministic
	stochastic-approximation recursion
	\[
	e_{t+1}
	=
	e_t+\gamma_{t+1}H_q(e_t),
	\qquad
	\gamma_{t+1}:=\frac1{t+1}.
	\]
	The drift satisfies
	\[
	H_q(e)>0
	\quad\text{for }e<e_q,
	\qquad
	H_q(e)<0
	\quad\text{for }e>e_q.
	\]
	Also,
	\[
	\gamma_t\to0,
	\qquad
	\sum_t\gamma_t=\infty.
	\]
	
	Fix $\delta>0$. By continuity and strict monotonicity of $H_q$, there
	exists $c_\delta>0$ such that
	\[
	H_q(e)\geq c_\delta
	\qquad
	\text{for }e\leq e_q-\delta,
	\]
	and
	\[
	H_q(e)\leq-c_\delta
	\qquad
	\text{for }e\geq e_q+\delta.
	\]
	Thus the sequence cannot remain forever below $e_q-\delta$ or forever
	above $e_q+\delta$, because in either case the cumulative displacement
	has magnitude at least
	\[
	c_\delta\sum_t\gamma_t=\infty.
	\]
	Since $\gamma_t\to0$ and $H_q$ is bounded on $[0,1]$, the size of any
	overshoot across the interval
	\[
	[e_q-\delta,e_q+\delta]
	\]
	vanishes. It follows that
	\[
	e_t\longrightarrow e_q.
	\]
	
	Finally, by~\eqref{eq:current-error-neutral},
	\[
	1-W_{t+1}(Q_q^N)
	=
	e_t+H_q(e_t).
	\]
	Since
	\[
	e_t\to e_q
	\qquad\text{and}\qquad
	H_q(e_q)=0,
	\]
	we obtain
	\[
	1-W_T(Q_q^N)\longrightarrow e_q.
	\]
	
	\paragraph{Comparative statics in $q$.}
	For $q\in(0,1)$, the fixed-point equation may be written as
	\begin{equation}
		\frac{1-q}{q}
		=
		\frac{g(e_q)}{m-e_q}.
		\label{eq:q-fixed-ratio}
	\end{equation}
	The left-hand side is strictly decreasing in $q$, while the right-hand
	side is strictly increasing in $e_q$. Therefore $e_q$ is strictly
	decreasing in $q$.
	
	We next show that
	\[
	e_q\longrightarrow0
	\qquad
	\text{as }q\uparrow1.
	\]
	Otherwise there would exist a sequence $q_n\uparrow1$ and
	$\varepsilon>0$ such that $e_{q_n}\geq\varepsilon$. Then
	\[
	q_ng(e_{q_n})
	\geq
	q_ng(\varepsilon)
	\longrightarrow g(\varepsilon)>0,
	\]
	while
	\[
	(1-q_n)(m-e_{q_n})
	\leq
	(1-q_n)m
	\longrightarrow0,
	\]
	contradicting the fixed-point equation.
	
	Since $e_q\to0$, Lemma~\ref{lem:g-selection} gives
	\[
	g(e_q)
	=
	\frac{C}{\alpha-1}e_q^\alpha(1+o(1)).
	\]
	Using
	\[
	(1-q)(m-e_q)=qg(e_q),
	\]
	together with
	\[
	m-e_q\to m,
	\qquad
	q\to1,
	\]
	we obtain
	\[
	(1-q)m
	=
	\frac{C}{\alpha-1}e_q^\alpha(1+o(1)).
	\]
	Therefore
	\[
	e_q
	\sim
	\left(
	\frac{(\alpha-1)m}{C}
	\right)^{1/\alpha}
	(1-q)^{1/\alpha}.
	\]
	This completes the proof.
\end{proof}

\begin{proof}[Proof of Proposition~\ref{prop:lossless-selection}]
	Suppose without loss of generality that
	\[
	Q(1)=1.
	\]
	Let $A_t:=a_{I_t}$ denote the action of the uniformly sampled
	predecessor.
	
	Because action $1$ is displayed with probability one,
	\[
	Y_t=1
	\quad\Longleftrightarrow\quad
	A_t=1.
	\]
	If instead
	\[
	Y_t\in\{0,\varnothing\},
	\]
	then necessarily
	\[
	A_t=0.
	\]
	Hence there exists a deterministic map
	\[
	\psi:\{0,1,\varnothing\}\to\{0,1\}
	\]
	defined by
	\[
	\psi(1)=1,
	\qquad
	\psi(0)=\psi(\varnothing)=0,
	\]
	such that
	\[
	A_t=\psi(Y_t)
	\qquad\text{almost surely}.
	\]
	Thus the selected observation $Y_t$ reveals the sampled predecessor's
	action exactly.
	
	Conversely, conditional on $A_t$, the residual randomness in $Y_t$ is
	generated only by the independent display coin. In particular,
	conditional on $A_t=0$,
	\[
	Y_t=
	\begin{cases}
		0, & \text{with probability }Q(0),\\
		\varnothing, & \text{with probability }1-Q(0),
	\end{cases}
	\]
	independently of the state and all private signals. Therefore
	\[
	\omega
	\;\perp\!\!\!\perp\;
	Y_t
	\mid A_t,
	\]
	and $Y_t$ contains no information about the state beyond the sampled
	action $A_t$.
	
	It follows that the experiment generated by $Y_t$ is Blackwell
	equivalent to the experiment generated by directly observing $A_t$.
	Since the current private signal has the same distribution in the two
	environments, the posterior of agent $t$ and hence her optimal action
	have the same distribution as under full display.
	
	Starting from agent $1$, an induction on $t$ therefore implies that the
	entire equilibrium action process has the same distribution under $Q$
	as under
	\[
	Q^F(0)=Q^F(1)=1.
	\]
	Consequently,
	\[
	W_T(Q)=W_T(Q^F)
	\qquad
	\text{for every }T.
	\]
	Proposition~\ref{prop:neutral-thinning-welfare} then gives
	\[
	1-W_T(Q)
	\sim
	(C\log T)^{-1/(\alpha-1)}.
	\]
	
	The case $Q(0)=1$ is symmetric.
\end{proof}

\begin{proof}[Proof of Corollary~\ref{cor:selection-composition}]
	Fix
	\[
	\bar q\in[1/2,1).
	\]
	
	Under the neutral rule
	\[
	Q_{\bar q}^N(0)=Q_{\bar q}^N(1)=\bar q,
	\]
	the sampled action is displayed with probability $\bar q$, independently
	of which action was sampled. Hence
	\[
	\Pr(Y_t\neq\varnothing)=\bar q
	\]
	for every $t$.
	
	Now consider
	\[
	Q_{\bar q}^S(1)=1,
	\qquad
	Q_{\bar q}^S(0)=2\bar q-1.
	\]
	By Proposition~\ref{prop:lossless-selection}, this rule induces exactly
	the same action process as full display. Under the flat prior, symmetric
	private information, and state-matching payoff, the full-display
	equilibrium is symmetric. Hence, ex ante,
	\[
	\Pr(A_t=1)=\Pr(A_t=0)=\frac12.
	\]
	Therefore
	\begin{align*}
		\Pr(Y_t\neq\varnothing)
		&=
		\frac12Q_{\bar q}^S(1)
		+
		\frac12Q_{\bar q}^S(0)
		\\
		&=
		\frac12
		+
		\frac12(2\bar q-1)
		\\
		&=
		\bar q.
	\end{align*}
	Thus the two rules have the same ex ante display probability.
	
	Nevertheless, Proposition~\ref{prop:neutral-thinning-welfare} implies
	\[
	1-W_T(Q_{\bar q}^N)
	\longrightarrow
	e_{\bar q}>0,
	\]
	whereas Proposition~\ref{prop:lossless-selection} implies
	\[
	1-W_T(Q_{\bar q}^S)
	\sim
	(C\log T)^{-1/(\alpha-1)}
	\longrightarrow0.
	\]
	Hence
	\[
	W_T(Q_{\bar q}^S)>W_T(Q_{\bar q}^N)
	\]
	for all sufficiently large $T$.
\end{proof}

\end{document}